\begin{document}

\title{\FLname: Efficient and Robust \\ Secure \emph{Quantized} Aggregation\textsuperscript{$\star$}\thanks {\textsuperscript{$\star$}Please cite the conference version of this paper published at IEEE~SaTML'24~\cite{SATML:BMP0ST24}.}}

\author{
    \IEEEauthorblockN{
        Yaniv~Ben-Itzhak\orcidlink{0000-0002-3844-5940}\IEEEauthorrefmark{1},
        Helen~Möllering\orcidlink{0000-0001-9371-3592}\IEEEauthorrefmark{2}, 
        Benny~Pinkas\orcidlink{0000-0002-9053-3024}\IEEEauthorrefmark{3},
        Thomas~Schneider\orcidlink{0000-0001-8090-1316}\IEEEauthorrefmark{2},
        Ajith~Suresh\orcidlink{0000-0002-5164-7758}\IEEEauthorrefmark{4},\\
        Oleksandr~Tkachenko\orcidlink{0000-0001-9232-6902}\IEEEauthorrefmark{5},
        Shay~Vargaftik\orcidlink{0000-0002-0982-7894}\IEEEauthorrefmark{1}, 
        Christian~Weinert\orcidlink{0000-0003-4906-6871}\IEEEauthorrefmark{6},
        Hossein~Yalame\orcidlink{0000-0001-6438-534X}\IEEEauthorrefmark{2},
        Avishay~Yanai\orcidlink{0000-0003-4060-0150}\IEEEauthorrefmark{1}\smallskip
    }
    \IEEEauthorblockA{
        \IEEEauthorrefmark{1}
        VMware Research Group
        \IEEEauthorrefmark{2}
        Technical University of Darmstadt
        \IEEEauthorrefmark{3}
        Aptos Labs and Bar-Ilan University\\
        \IEEEauthorrefmark{4}
        Technology Innovation Institute
        \IEEEauthorrefmark{5}
        DFINITY Foundation
        \IEEEauthorrefmark{6}
        Royal Holloway, University of London
    }
}

\maketitle

\begin{abstract}
Secure aggregation is commonly used in federated learning~(FL) to alleviate privacy concerns related to the central aggregator seeing all parameter updates in the clear.
Unfortunately, most existing secure aggregation schemes ignore two critical orthogonal research directions that aim to~(i) significantly reduce client-server communication and~(ii)~mitigate the impact of malicious clients.
However, both of these additional properties are essential to facilitate cross-device~FL with thousands or even millions of~(mobile) participants.

In this paper, we unite both research directions by introducing~\FLname, the first secure aggregation framework for~FL that operates~\emph{efficiently} on quantized inputs and simultaneously provides robustness against malicious clients.
Our framework leverages~(novel) multi-party computation~(MPC) techniques and supports multiple linear~(1-bit) quantization schemes, including ones that utilize the randomized~Hadamard transform and~Kashin's representation.

Our theoretical results are supported by extensive evaluations.
We show that with~\emph{no overhead for clients} and moderate overhead for the server compared to transferring and processing quantized updates in plaintext, we obtain comparable accuracy for standard~FL benchmarks.
Moreover, we demonstrate the robustness of our framework against state-of-the-art poisoning attacks.
\end{abstract}

\section{Introduction}
\label{sec:introduction}
Federated learning~(FL)~\cite{PMLR:McMahanMRHA17} is a paradigm for large-scale distributed machine learning, where in each training round a subset of clients locally updates a global model that is then centrally aggregated. FL quickly gained popularity due to its promises of data privacy, resource efficiency, and ability to handle dynamic participants.

However, in terms of~\emph{privacy}, the central aggregator learns the individual client updates in the clear and thus can infer sensitive details about the clients' private input data~\cite{SP:MelisSCS19,ndss:PyrgelisTC18,ccs:GanjuWYGB18,sp:ShokriSSS17,ARXIV:WXWZ19}.
Hence, many secure aggregation schemes, e.g., \cite{ccs:BonawitzIKMMPRS17,SPW:FereidooniMMMMN21,ARXIV:MMRPVF22}, have been developed, where the aggregator only learns the aggregation result, i.e., the global model~(we refer to~\cite{ftml:KairouzMABBBBCC21,access:OuadrhiriA22} for a discussion of differential privacy in~FL as an orthogonal privacy-enhancing paradigm).
Most prominently, in the~\enquote{SecAgg} protocol~\cite{ccs:BonawitzIKMMPRS17}, clients exchange masks with peers to blind their model updates such that the masks cancel out during aggregation and reveal only the exact result.
However, this approach requires an interactive setup between clients and thus is less reliable when dealing with real-world problems such as client dropouts~(except for special variants~\cite{ARXIV:YSHLYA21,ARXIV:FastsecAgg,conext:LiGBL21}).
Moreover, recent research has shown that a~\emph{single malicious aggregator} can reconstruct individual training data from clients' inputs, despite the use of secure aggregation~\cite{WEB:BDSSSP22,CCS:PFA22,ARXIV:BDSSSP23}.

In terms of~\emph{resource efficiency} of plain~FL, clients have to send parameter updates~(also known as {\em gradients}), which typically consist of thousands or millions of coordinates with one floating point number per coordinate. For cross-device~FL involving mobile clients with limited upload bandwidth, this quickly becomes infeasible when dealing with increasingly large model architectures, where gradients consist of millions of coordinates.
Therefore, quantization schemes that exploit the noise resiliency of gradient-based~FL methods~(e.g., federated averaging~\cite{PMLR:McMahanMRHA17}) have been proposed to significantly compress client updates~(typically replacing the representation of each coordinate by a single bit, instead of a~32-bit floating point number), e.g., \cite{ICML:SureshYKM17, ARXIV:CKMT18, nips:VargaftikBPMBM21, nips:AlistarhH0KKR18,icalp:BasatMV21,ICML:BernsteinWAA18,interspeech:SeideFDLY14}.

Unfortunately, so far the~ML community has worked on optimizing FL for resource efficiency, while the security community has separately worked on optimizing secure aggregation for privacy. One of the very few exceptions is a work~\cite{icml:ChenCKS22} that combines~SecAgg with sketching~\cite{icml:RothchildPUISB020} to compress gradients. 
Besides the mentioned disadvantages of~SecAgg, their work considers only one compression method. However, because of the trade-offs between accuracy and computational efficiency offered by various forms of quantization, which become particularly relevant for gradients with millions of coordinates, it is important to adopt a modular approach.

Beyond data privacy threats, FL was also shown to be vulnerable to manipulations by malicious clients who alter their local models/updates, affecting the characteristics of the final global model~\cite{SP:ShejwalkarHKR22,usenix:FangCJG20,ICML:XiaoBBFER15,AISTATS:BagdasaryanVHES20,ICML:BhagojiCMC19,NDSS:SheHou21}.
This highlights the need for effective~\emph{defense measures} capable of thwarting such attacks.
While there exists a plethora of such attacks and defenses in the~FL literature~(see~\appref{subsect:flattacks}), we focus on defending against~\emph{untargeted poisoning attacks}, in which the attacker attempts to damage the trained model's performance for a large number of test inputs, typically resulting in a final global model with a high error rate~\cite{usenix:FangCJG20,NDSS:SheHou21,nips:BaruchBG19}.
Unfortunately, existing defenses for such attacks cannot be efficiently translated to secure aggregation in the quantized setting~\cite{USENIX:NRCYMFMMMKSSZ21,KDD:ZhangCJG22}.

\subsection{Our Contributions}
We propose~\FLname{}, a framework that enables~\emph{efficient} and~\emph{robust} secure aggregation for~FL with a distributed aggregator that operates~\emph{directly} on~\emph{quantized} parameter updates.
Specifically, \FLname{} has virtually~\emph{no additional communication overhead for clients} compared to the insecure transfer of quantized plaintext updates. Achieving this is non-trivial for prior single-server solutions that are based on masking techniques or homomorphic encryption~(HE)~\cite{ccs:BonawitzIKMMPRS17,USENIX:ZhangLX00020}.

In~\FLname{}, we use outsourced multi-party computation~(MPC), where the clients apply computationally efficient secret-sharing techniques to distribute their sensitive~(quantized) parameter updates among multiple servers that together form a distributed aggregator. These servers then use MPC protocols~\cite{crypto:CramerDESX18,SP:Damgard0FKSV19,ACNS:BenNieOmr19} to securely compute the aggregation and only reveal the updated global model.

The distributed aggregator model is well established~\cite{SPW:FereidooniMMMMN21,popets:MansouriOBC2022,SP:RatheeSWP23,SPW:GMSWSY23}~(although prior works cannot efficiently handle quantized inputs), offering practical benefits over single-server solutions, such as not requiring an interactive setup between clients~\cite{USENIX:NRCYMFMMMKSSZ21,ARXIV:MMRPVF22} and efficient dropout handling. Moreover, in recent years, numerous studies have demonstrated that single server aggregation methods are susceptible to privacy attacks when an aggregator is corrupt~\cite{aaai:So2023,WEB:BDSSSP22,CCS:PFA22,ARXIV:BDSSSP23,ICLR:FowlGCGG22,ICML:WenGFGG22}. The vulnerability of these methods stems from the fact that the aggregator holds complete authority in selecting clients and the data transmitted to and received from them. On the other hand, with our~MPC protocols, data privacy can be guaranteed even if all servers except one are compromised or their operators receive subpoena requests.

As~MPC protocols typically induce significant overhead in terms of~(inter-server) communication, we propose optimizations for secure aggregation that~\emph{support any~\enquote{linear} quantization scheme}, including~\emph{1-bit quantization} schemes that uses preprocessing like random rotations~\cite{ICML:SureshYKM17} and~Kashin's representation~\cite{ARXIV:CKMT18}.
We also study novel~\emph{approximate}~MPC variants that leverage FL's noise tolerance and might be of independent interest. We formally prove that our resultant secure aggregation scheme is an~\emph{unbiased estimator} of the arithmetic mean and explore efficiency-accuracy trade-offs.

We implement a combined end-to-end~FL evaluation and~MPC simulation environment. Our prototype implementation allows to assess the performance and accuracy of our solution for stochastic quantization schemes, including recent state-of-the-art distributed mean estimation techniques~\cite{ICML:SureshYKM17,ARXIV:CKMT18}.
Our results demonstrate that standard~FL benchmarks' accuracy is barely impacted while our optimizations and approximation can significantly reduce inter-server communication.
For example, when training the~LeNet architecture for image classification on the~MNIST data set~\cite{pieee:LeCunBBH98} using~$1$-bit stochastic quantization with~Kashin's representation~\cite{ARXIV:CKMT18}, training accuracy is only slightly reduced from~99.04\% to~98.71\% after~1000 rounds, while inter-server communication drops from~16.14 GB to~0.94 GB compared to naive~MPC-based secure aggregation when considering~500 clients per round, an improvement by factor~$17.2\times$.

Since clients may act maliciously and try to degrade accuracy with their updates, we design a novel bipartite defense mechanism called~\FLdefensename{} to ensure the robustness of our framework.
Specifically, we provide protection against state-of-the-art untargeted poisoning attacks~\cite{NDSS:SheHou21}, combining magnitude clipping and directional filtering based on the gradients' approximate~L2-norms and cosine distances. Notably, \FLdefensename{} is the first defense mechanism to work directly on quantized inputs and thus enables a highly efficient realization in~MPC, whereas existing works require expensive~MPC conversions of all individual parameters~\cite{ICDCS:AndreinaMMK21,USENIX:NRCYMFMMMKSSZ21}.
We summarize our contributions as follows:

\begin{enumerate}[wide, labelwidth=!, labelindent=0pt, parsep=4pt]
    \item 
    First~\emph{secure aggregation} framework called~\FLname{} to consider~(1-bit) quantization with almost~\emph{no communication overhead for clients} compared to~\emph{plaintext quantized~FL}.
    \item 
    Novel~\emph{optimizations and approximations} to reduce MPC-induced inter-server communication, with performance/accuracy trade-offs.
    \item
    End-to-end~FL evaluation and~MPC simulation environment, demonstrating the efficiency and accuracy of~\FLname{}.
    \item 
    First efficient and effective~FL~(poisoning) defense operating directly on quantized updates.
\end{enumerate}

Though we study~FL as our primary application, our secure aggregation protocols have numerous other applications like privacy-preserving aggregate statistics, for which there are currently~(less efficient) real-world deployments that also rely on distributed aggregators~(e.g., telemetry reporting in~Mozilla's Firefox browser~\cite{LetsEncrypt,SCN:AddankiGJOP22}).
For these settings, we achieve improvements in communication of up to~$4\times$ over prior works like Prio+~\cite{SCN:AddankiGJOP22} and the details are provided in~\secref{sec:communication_costs_detailed}.

\subsection{Related Work}
\label{sec:related_work_concise}
We present a summary of the most relevant related works here, with a more comprehensive discussion in~\appref{sec:related-work}.

\myparatight{Quantization and Compression in FL}
In this work, we focus on quantization to reduce communication in~FL. However, an alternative line of work investigates compression techniques for the same purpose. There are three main directions for gradient compression in~cross-device FL:~(i)~gradient sparsification~(e.g., \cite{sigcomm:Fei0SCS21,NIPS:SCJ18,emnlp:AjiH17,ARXIV:KonecnyMYRSB16}), (ii)~client-side memory and error-feedback~(e.g.,~\cite{interspeech:SeideFDLY14,nips:AlistarhH0KKR18,nips:RichtarikSF21,ARXIV:BHRS20}), and~(iii)~entropy encodings~(e.g., \cite{ICML:SureshYKM17,icml:VargaftikBPMBM22,nips:AlistarhH0KKR18}). Reviews of current state-of-the-art gradient compression techniques and some open challenges can be found in~\cite{ftml:KairouzMABBBBCC21}.

Compression techniques are less suited for the requirements of secure aggregation for cross-device FL than quantization, e.g., due to computational overhead, incompatibility with secure aggregation, and state-requirements on the client side. We discuss more details in~\sect{sec:stochastic_quantization_intro} and~\appref{subsec:rw-compression}.

\myparatight{Secure Aggregation \& Model Inference Attacks}
In conventional~FL with a single aggregator, clients share locally trained model updates with a central party to train a global model.
However, sharing those updates makes the system vulnerable to data leakage. Attacks exploiting this leakage are called \emph{inference attacks}~\cite{ARXIV:BDSSSP21,icml:LamW0RM21,SP:NasrSH19}. Even a semi-honest central server can learn confidential information about the used private training data by analyzing the received local model updates. 

A common countermeasure against inference attacks is to use secure aggregation~\cite{SPM:ErkinTLP13,popets:KursaweDK11} (cf.~\secref{app:related_work_secureagg}).
As~FL poses specific challenges such as a large number of clients and drop-out tolerance, tailored secure aggregation protocols for~FL have been proposed~\cite{ccs:BonawitzIKMMPRS17,SPW:FereidooniMMMMN21,ccs:BellBGL020,NDSS:savPTFBSH21,ARXIV:FastsecAgg,jsait:SoGA21a}.
Those hide individual updates, ensuring that the server has only access to global updates, hence, effectively prohibiting the analysis of individual updates for inference attacks.
The first scheme, SecAgg~\cite{ccs:BonawitzIKMMPRS17}, combines secret sharing with authenticated symmetric encryption, but requires~4 communication rounds per training iteration among servers and client.
Bell~et~al.~\cite{ccs:BellBGL020} improve upon~SecAgg~\cite{ccs:BonawitzIKMMPRS17} by reducing client communication and computation to poly-logarithmic complexity.
However, from a practical point of view also~\cite{ccs:BellBGL020} as well as other existing secure aggregation protocols designed for FL still exhibit significant computation and communication overhead: Due to underlying secret sharing or encryption, those protocols typically encode each local update in $32$-bit and add computational overhead caused by the required cryptographic operations. In contrast, \FLname{} enables highly communication-efficient secure aggregation thanks to~$1$-bit quantization and causes almost~\emph{no} additional communication overhead on the client side compared to plaintext~FL. Fereidonni et al.~\cite{SPW:FereidooniMMMMN21} provide more details by comparing several secure aggregation protocols with respect to efficiency and practicality. Mansouri et al.~\cite{popets:MansouriOBC2022} provide a comprehensive analysis of secure aggregation schemes~w.r.t.\ their suitability for~FL.

To the best of our knowledge, only Chen et al.~\cite{icml:ChenCKS22} and Beguier et al.~\cite{ARXIV:BAT20} have considered both compression and secure aggregation in combination for~FL.
Specifically, Chen et al.~\cite{icml:ChenCKS22} combine~SecAgg~\cite{ccs:BonawitzIKMMPRS17} with sparse random projection and count-sketching~\cite{icml:RothchildPUISB020} for compression.
Moreover, they add noise using a distributed discrete~Gaussian mechanism to generate a differential private output.
Beguier et al.~\cite{ARXIV:BAT20} combine arithmetic secret-sharing with~Top-$k$ sparsification~\cite{NIPS:SCJ18} and~1-bit quantization~\cite{ICML:BernsteinWAA18}.
As we point out in~\secref{subsec:rw-compression}, both sketching and sparsification are sub-optimal for our envisioned cross-device setting given that they require memory and error-feedback on the client side.
In contrast, we focus on a dynamic scenario where clients might drop-out at any time and may contribute only once to the training.

\begin{table}[t]
    \begin{center}
    \resizebox{0.48\textwidth}{!}{%
    \begin{threeparttable}
        \begin{tabular}{lccccccccc} 
        \toprule
        Categories & Reference & Technique & M.P.  & Quant. & P.R. &\makecell{Dist.\\Servers} & \makecell{No Client \\Interaction}
        \\ \midrule
        \multirow{2}{*}{Aggregation} 
        &\cite{nsdi:Corrigan-GibbsB17}&MPC&\cmark&\xmark&\xmark&\cmark&\cmark\\
        &\cite{ccs:BellBGL020}& Masking  &\cmark&\xmark&\xmark&\xmark&\xmark \\
        \midrule
        \multirow{3}{*}{Quantization} 
        &\cite{ICML:SureshYKM17}&--&\xmark&\cmark&\xmark&\xmark&\cmark\\
        &\cite{ARXIV:BAT20}&MPC&\cmark&\xmark&\xmark&\cmark&\cmark\\
        &\cite{icml:ChenCKS22}&Masking+DP &\cmark&\cmark&\xmark&\xmark&\xmark\\
        \midrule
        \multirow{2}{*}{Robustness} 
        &~\cite{USENIX:NRCYMFMMMKSSZ21}&MPC&\cmark&\xmark&\cmark&\cmark&\cmark\\
        &\cite{KDD:ZhangCJG22}&--&\xmark&\xmark&\cmark&\xmark&\cmark   \\
        \midrule
        \FLname
        &\textbf{This}&MPC&\cmark&\cmark&\cmark&\cmark&\cmark\\
        \bottomrule
        \end{tabular}
    \end{threeparttable}
    }
    \end{center}
    \vspace{-2mm}
    \captionsetup{font=small}
    \caption{High-level comparison of \FLname and previous works. Notation: MPC---Secure Multi-Party Computation, DP---Differential Privacy, M.P.---Model Privacy, Quant.---Quantization~(refers to the schemes where compressed gradients are communicated by the clients to the aggregator(s)), P.R.---Poisoning Resilience, Dist.---Distributed. Client Interaction refers to interaction among clients. Since the body of literature is vast, comparison is made against only a subset representing each category.}
    \label{tab:relatedworksconcise}
    \vspace{-5mm}
\end{table}

\myparatight{Poisoning Attacks \& Defenses}
FL was shown vulnerable to manipulations by malicious clients~\cite{AISTATS:BagdasaryanVHES20,nips:BaruchBG19,usenix:FangCJG20,NDSS:SheHou21,ICML:ZhangPSYMMR022}. Targeted or backdoor attacks aim at manipulating the inference results for specific attacker-chosen inputs~\cite{AISTATS:BagdasaryanVHES20,ICML:ZhangPSYMMR022}, while untargeted poisoning attacks~\cite{nips:BaruchBG19,usenix:FangCJG20,NDSS:SheHou21} reduce the overall global model accuracy. 
As untargeted attacks are considered to be more severe~(given they are harder to detect, cf.~\secref{sec:defense}), we focus on defending those using Byzantine-robust defenses like~\cite{KDD:ZhangCJG22,USENIX:NRCYMFMMMKSSZ21}.
In~\appref{subsect:flattacks}, we provide a more detailed overview of poisoning attacks and defenses.

\myparatight{Comparison}
We compare~\FLname qualitatively in~Tab.~\ref{tab:relatedworksconcise} to the state-of-the-art related work with respect to aggregation and quantization, as well as robustness against poisoning.
\section{Problem Statement}
\label{sec:ps}
We now define the precise problem of secure quantized aggregation for~FL, which we address in our work.
We first introduce the necessary preliminaries on~FL and quantization schemes, formalize the functionality we want to compute securely, and finally define our threat and system model for common~(cross-device)~FL scenarios.

\subsection{Aggregation for Federated Learning}
\label{sec:fl-aggregation}
Google introduced federated learning~(FL) as a distributed machine learning~(ML) paradigm in~2016~\cite{ARXIV:KonecnyMYRSB16,PMLR:McMahanMRHA17}. In~FL, $N$~data owners collaboratively train a~ML model~$G$ with the help of a central aggregator~$\server{}$ while keeping their input data~\emph{locally private}. In each training iteration~$t$, the following three steps are executed:

\begin{enumerate}[wide, labelwidth=!, labelindent=0pt, parsep=0pt]
    \item The server~$\server{}$ randomly selects~$\numclients$ out of~$N$ available clients and provides the most recent global model~$G^t$.
    \item Each selected client~$\client{i}$, $i \in [\numclients]$, sets its local model~$w_{i}^{t+1}=G^t$ and improves it using its local dataset~$D_i$ for~$E$ epochs (i.e., local optimization steps):
    
    \begin{align}
        \label{eq:fedavg-local-update}
        w_{i}^{t+1} \leftarrow w_{i}^{t+1}-\eta_{\client{i}} \frac{\partial L (w_{i}^{t+1},B_{i,e})}{\partial w_{i}^{t+1}},
    \end{align}
    where~$L$ is a loss function, $\eta_{\client{i}}$ is the clients' learning rate, and~$B_{i,e}\subseteq D_i$ is a batch drawn from~$D_i$ in epoch~$e$, where~$e \in [E]$.
    After finishing the local training, $\client{i}$ sends its local update~$w_{i}^{t+1}$ to~$\server{}$.
    \item The server updates to a new global model~$G^{t+1}$ by combining all~$w_{i}^{t+1}$ with an aggregation rule~$f_{\textit{agg}}$:
    \begin{align}
        G^{t+1} \leftarrow G^{t}-\eta_\mathsf{S} \cdot f_{\textit{agg}}(w_{1}^{t+1},\ldots,w_{\numclients}^{t+1}),
    \end{align}
    where $\eta_\mathsf{S}$ is the server's learning rate.
    The most commonly used aggregation rule, which we also focus on, is~$\mathsf{FedAvg}$~\cite{PMLR:McMahanMRHA17}.
    It averages the local updates as follows:
    \begin{align}
        \label{eq:fedavg}
        \mathsf{FedAvg} (w_{1}^{t+1},\ldots,w_{\numclients}^{t+1}) = \sum_{i=1}^{\numclients} \frac{|D_i|}{\numclients} w_{i}^{t+1}
    \end{align}
\end{enumerate}

This process is repeated until some stopping criterion is met~(e.g., a fixed number of training iterations or a certain accuracy is reached).

\subsection{Stochastic Quantization}
\label{sec:stochastic_quantization_intro}
Quantization is a central building block in~FL, where data transmission is often a bottleneck.
Thus, compressing the~(thousands or millions of) gradients is essential to adhere to client bandwidth constraints, reducing training time, and allowing better inclusion and scalability.
We now review the desired properties and constructions of quantization schemes that will play a key role in our system design and additional details are provided in \appref{app:stochastic_quantization_intro}.

\myparatight{Unbiasedness}
A well-known and desired design property of gradient compression techniques is being unbiased. That is, for an estimate~$\hat{w}$ of a gradient~$w \in \mathbb{R}^d$, being unbiased means that~$\mathbb{E}[\hat{w}]=w$.
Unbiasedness is desired because it guarantees that the mean squared error~(MSE) of the mean's estimation decays linearly with respect to the number of clients, which can be substantial in~FL. 
In~FL and other optimization techniques based on stochastic gradient descent~(SGD) and its variants~(e.g., \cite{PMLR:McMahanMRHA17,mlsys:LiSZSTS20,icml:KarimireddyKMRS20}), the~MSE measure~(or normalized~MSE, a.k.a.~NMSE, cf.~\appref{sec:approx_evaluation}) is indeed the quantity of interest since it affects the convergence  rate and often the final accuracy of models.
In fact, provable convergence rates for the non-convex compressed~SGD-based algorithms have a linear dependence on the~NMSE.
Accordingly, to keep the estimates unbiased, modern quantization techniques employ stochastic quantization~(SQ) and its variants to compress the gradients. 

\myparatight{1-bit SQ}
Our focus is on the appealing communication budget of a single bit for each gradient entry, resulting in a~32$\times$ compression ratio compared to regular~32-bit floating point entries.
Using~1-bit quantization has been the focus of many recent works concerning distributed and~FL network efficiency~(e.g., \cite{nips:VargaftikBPMBM21,ICML:BernsteinWAA18,icml:TangGARLLLZH21,interspeech:SeideFDLY14,ARXIV:jin2020stochastic}).
These works repeatedly demonstrated that a budget of~1-bit per coordinate is sufficient to achieve model accuracy that is competitive to that of a non-compressed baseline.

In particular, 1-bit~SQ (i.e., SQ using two quantization values) can be done as follows:
For a vector~$X$ with~$m$ dimensions, the client sends each coordinate as~$\sigma_i = \textit{Bernoulli}(\frac{X_i - \stepmin{X}}{\stepmax{X}-\stepmin{X}})$, where~$\stepmax{X} = \max(X)$ and~$\stepmin{X} = \min(X)$.
The coordinate is then reconstructed by the receiver as~$\vecqV{X}{i} = \stepmin{X} + \qvecV{X}{i} \cdot (\stepmax{X} - \stepmin{X})$. 
This simple technique results in an unbiased quantization as desired, i.e., $\E [\vecqV{X}{i}]=\E [\stepmin{X} + \qvecV{X}{i} \cdot (\stepmax{X} - \stepmin{X})] = X_i$.

\myparatight{Linear SQ Techniques}
\label{sec:linear-sq}
A key requirement of being able to perform secure aggregation~\emph{efficiently} is being able to aggregate client gradients in their compressed~(i.e., quantized) form.
The schemes with this property are called~\enquote{linear} henceforth. 
One approach involves a~\enquote{global scales} method, where each client securely learns the maximal and the minimal value across all gradients, i.e.,~$\stepmax{} = \max_c \{\stepmax{X_c}\}$ and $\stepmin{} = \min_c \{\stepmin{X_c}\}$.\footnote{This approach resembles \enquote{scaler sharing} in TernGrad~\cite{NIPS:WenXYWWCL17}.}
Despite its simplicity, this method has several drawbacks:~(i)~it requires a preliminary communication stage, (ii)~it reveals the global extreme values~(even if they are computed securely across all clients), and~(iii)~the reconstruction error~(i.e., the NMSE) is increased as it now depends on the extreme values across all round participants.
Accordingly, we also consider a second approach where each client continues to use its own~\enquote{local scales}.
Since plainly using individual scales is not~\enquote{linear}, i.e., it does not allow for aggregating the quantized gradients efficiently without decoding them, to realize this approach, we use a known approximation~\cite{ARXIV:BAT20} and adapt it to our setting~(cf.~\secref{sec:SepAgg}).

While we focus on the mentioned vanilla~SQ, SQ with random rotation~\cite{ICML:SureshYKM17} and~SQ with~Kashin's representation~\cite{ARXIV:CKMT18,TIT:LyubarskiiV10,ARXIV:SSR20}, our framework seamlessly supports any~\enquote{linear} quantization scheme, namely, any quantization technique that allows for aggregation in a compressed form~(cf.~\secref{sec:extension-multi-bit}).

\myparatight{Other Approaches}
We acknowledge recent advances in gradient quantization~\cite{nips:VargaftikBPMBM21, icml:VargaftikBPMBM22, iclr:DaviesGMAA21,ARXIV:Ran22}), but these non-linear techniques are less applicable to our framework.

\subsection{MPC for Secure Aggregation}
\label{sec:prelims_mpc}
Secure computation in the form of multi-party computation~(MPC) allows a set of parties to privately evaluate any efficiently computable function on confidential inputs.
This paradigm can be utilized to securely run the~FedAvg aggregation algorithm~\cite{SPW:FereidooniMMMMN21,USENIX:NRCYMFMMMKSSZ21,ARXIV:MMRPVF22,esorics:DongCLWZ21}:
The set of selected~FL clients uses additive secret sharing to distribute their sensitive inputs among a set of~MPC servers, which resemble a distributed aggregator.
The~MPC servers securely add the received shares and reconstruct the public result from the resulting shares.
In the next iteration, the public model is distributed to a new set of clients chosen at random, and the process is repeated until the desired accuracy is attained. A visualization of this outsourced~MPC setting for secure aggregation is provided in~Fig.~\ref{fig:sec_agg}.

\begin{figure}[htb!]
    \centering
    \resizebox{\linewidth}{!}{
        \input{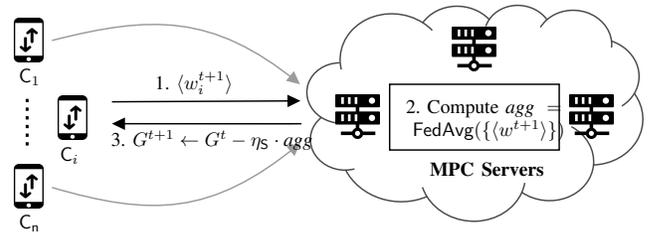}
    }
    \captionsetup{font=footnotesize}
    \caption{Secure aggregation in~FL for~$\numclients$ clients using outsourced~MPC; $\langle w_i^{t+1} \rangle$ denotes the secret-shares of gradient~$w_i^{t+1}$ that client~$C_i$ has in round~$t + 1$; $G^t$ is the model of the previous round and~$\eta_\mathsf{S}$ the server-side learning rate.}
    \label{fig:sec_agg}
\end{figure}

In the remainder of this paper, we work towards a secure aggregation for~FL using~FedAvg on~\emph{quantized} inputs.
In the following, we give our assumptions in terms of the threat model and refine the description of our system model.

\myparatight{MPC Servers}
The~MPC servers are assumed to be semi-honest.
This means, they follow the protocol specification, but may try to learn additional information from the protocol transcripts.
This well-established assumption, both in theory~\cite{CCS:AFLNO16,USENIX:JuvVaiCha18,USENIX:MLSZP20,USENIX:NRCYMFMMMKSSZ21,USENIX:PSSY21,ARES:MP2ML,popets:HegdeMSY21} and in practice~\cite{LetsEncrypt,WEB:OBLIVIOUSDNS20,WEB:ENPA}, is motivated by the fact that companies who run~FL with a secure aggregation scheme try to actively protect their clients' data but want to make sure that someone who monitors or breaks into their systems cannot get plaintext access to the data that is being processed.
Additionally, this assumption is justified as organizations usually cannot afford the reputational and monetary risk when being caught betraying their users' trust. The protocols that we design provide this security guarantee in a dishonest majority setting, where data is protected even when an adversary~$\Adv$ corrupts all except one of the involved servers.
Furthermore, our framework is easily extendable to provide~\emph{malicious privacy}, which ensures input privacy even when the corrupted servers try to actively cheat~\cite{SP:RatheeSWP23}.

\myparatight{Malicious Clients}
We anticipate that some clients might behave maliciously to negatively affect the quality of the global model with manipulated updates~(i.e., poisoning attacks, cf.~\secref{sec:defense}).
This is because there are significantly less incentives for clients to behave honestly.
Also, due to the sheer number of clients in cross-device~FL, it is hard to verify their reputation.
We assume an honest majority of clients as is standard in~FL, yet our framework is secure even when malicious clients collude with corrupted servers.

\myparatight{Preprocessing Model}
\label{sec:mpc-preprocessing-model}
We use~MPC in the preprocessing model~\cite{C:DPSZ12,ESORICS:DKLPSS13,acns:BaumDTZ16,C:DamOrlSim18}.
This means, we try to shift as much computation and communication as possible to a~\emph{data-independent preprocessing} or~\emph{offline} phase that can be executed at an arbitrary time before the actual computation.
This gives several advantages, e.g., service providers can exploit cheap spot instances.
It also guarantees faster results when the~\emph{data-dependent online phase} of the protocol is executed.

\myparatight{Shared Randomness}
\label{sec:mpc-shared-randomness}
We assume that clients and~MPC servers have access to a shared randomness source, e.g., by agreeing on a~PRNG seed.
Such a configuration has been widely employed in~MPC protocols~\cite{CCS:AFLNO16,EC:FLNW17,ccsw:ChaudhariCPS19,CCS:MohRin18,EPRINT:KPPS22} and in~ML systems~\cite{ICML:SureshYKM17,icalp:BasatMV21} to optimize communication.

\subsection{Secure Quantized Aggregation}
\label{sec:secure-quantized-aggregation}
To introduce the problem of secure quantized aggregation, without loss of generality, we consider a simple stochastic binary quantization scheme to begin with.
In this scheme, a~$\numcoordinates$-dimensional vector of the form~$\vec{X} = \{x_1,\ldots,x_{\numcoordinates}\}$ comprising of~$\ell$-bit values will be quantized to obtain a triple~$\vecq{X} = (\qvec{X}, \stepmin{X}, \stepmax{X})$.
Here, $\qvec{X}$ is an~$\numcoordinates$-dimensional binary vector with a zero at an index indicating the value~$\stepmin{X}$ and a one indicating the value~$\stepmax{X}$.
Also, $\stepmin{X}$ and~$\stepmax{X}$ correspond to the minimum and maximum values in the vector~$\vec{X}$.
With this binary quantization~(cf.~\secref{sec:stochastic_quantization_intro}), the quantized value at the~$i$th dimension, denoted by~$\vecqV{X}{i}$, can be written as 
\begin{equation}
  \label{eq:quantized value}
  \vecqV{X}{i} = \stepmin{X} + \qvecV{X}{i} \cdot (\stepmax{X} - \stepmin{X}).
\end{equation}
Before going into the details of aggregation, we provide some of the basic notation that will be utilized throughout the paper.

\myparatight{Notation}
$\matD{Y}{\alpha}{\beta}$ denotes a matrix of dimension~$\alpha\times\beta$ with~$\matrow{Y}{i}$ being the~$i$th row and~$\matcol{Y}{j}$ being the~$j$th column.
An element in the~$i$th row and~$j$th column is denoted by~$\matpos{Y}{i}{j}$.
Also, $\rowAggregate(\matD{Y}{\alpha}{\beta})$ returns a row vector corresponding to an aggregate of the rows of~$\mat{Y}$.
Likewise, $\colAggregate(\matD{Y}{\alpha}{\beta})$ returns an aggregate of the columns of~$\mat{Y}$.

Given~$\matD{U}{\alpha}{\beta}$ and~$\matD{V}{\alpha}{1}$, we use~$\mat{U} \circ \mat{V}$ to denote the column-wise~Hadamard product.
Similarly, $\mat{U} \oplus \mat{V}$ denote the sum of two matrices in a column-wise fashion.
Concretely, for~$\matD{M}{\alpha}{\beta} = \mat{U} \circ \mat{V}$ and~$\matD{N}{\alpha}{\beta} = \mat{U} \oplus \mat{V}$, we have
\[
  \matpos{M}{i}{j} = \matpos{U}{i}{j} \cdot \matpos{V}{i}{1} 
  ~~~\text{and}~~~
  \matpos{N}{i}{j} = \matpos{U}{i}{j} + \matpos{V}{i}{1},
  ~~~\text{where } i \in [\alpha], j \in [\beta].
\]

\myparatight{Quantized Aggregation}
To perform aggregation on quantized inputs, a set of~$\numclients$ clients first locally prepares their quantized triples, $(\qvec{X}, \stepmin{X}, \stepmax{X})$, corresponding to their locally trained~ML model updates (i.e., gradients) and submits these to a parameter server for aggregation. 
Let~$\numcoordinates$ be the dimension of the underlying~ML model. 
The quantized triples can then be consolidated to a matrix triple of the form~$(\matD{B}{\numclients}{\numcoordinates}, \matD{U}{\numclients}{1}, \matD{V}{\numclients}{1})$.
Here, $\mat{B}$ is a binary matrix that corresponds to the~$\qvec{X}$ vector of the clients.
Similarly, $\mat{U}$ and~$\mat{V}$ correspond to the~$\stepmin{X}$ and~$\stepmax{X}$ values of the above-mentioned triple.
The quantized aggregate is defined as
\begin{equation}
  \label{eq:quantized-aggregation}
  \matD{X}{1}{\numcoordinates} = \rowAggregate\left(\matD{U}{\numclients}{1}  ~\oplus~\ \matD{B}{\numclients}{\numcoordinates} \circ (\matD{V}{\numclients}{1} - \matD{U}{\numclients}{1})\right).
\end{equation}

\myparatight{Ideal Functionality}
\label{mpc-agg-ideal-functionality}
To perform secure aggregation of quantized updates using~MPC, we model the aggregation as an ideal functionality~$\FSecAgg$~(\boxref{fig:secure-aggregation-accurate}).
We consider a set of~$\numservers$ servers to which the clients secret-share their quantized updates.
The goal is to compute the aggregate of the inputs as shown in~Eq.~\ref{eq:quantized-aggregation}.
Let~$\shr{\cdot}$ denote the underlying secret sharing scheme. 
Looking ahead, we will use linear secret sharing techniques for~MPC, in which linear operations such as addition and subtraction are local.
As a result, we will concentrate on efficiently computing the column-wise~Hadamard product.

\begin{systembox}{$\FSecAgg$}{Ideal functionality for semi-honest secure quantized aggregation for linear stochastic binary quantization.}{fig:secure-aggregation-accurate}
	\justify
	$\FSecAgg$ interacts with all the~$\numservers$ servers in~$\serverset$ and an adversary~$\Adv$ that controls a subset of the servers in~$\serverset$. 
	\begin{description}
		\item[Input:] $\FSecAgg$ receives $\shr{\cdot}$-shares of the matrix triple $(\matD{B}{\numclients}{\numcoordinates}, \matD{U}{\numclients}{1}, \matD{V}{\numclients}{1})$ from the honest servers in $\serverset$, while the adversary $\Adv$ provides the $\shr{\cdot}$-shares on behalf of the corrupt servers. Here $\matpos{B}{i}{j} \in \{0,1\}$ and $\matpos{U}{i}{j}, \matpos{V}{i}{j} \in \Z{\ell}$.
		\item[Computation:] $\FSecAgg$ reconstructs $(\mat{B}, \mat{U}, \mat{V})$ from its $\shr{\cdot}$-shares.
		\begin{boxenumerate}
			\item[--] Set $\matD{S}{\numclients}{1} = \matD{V}{\numclients}{1} - \matD{U}{\numclients}{1}$ and compute $\matD{W}{\numclients}{\numcoordinates} = \matD{B}{\numclients}{\numcoordinates} \circ \matD{S}{\numclients}{1}$.
			\item[--] Compute $\matD{X}{\numclients}{\numcoordinates} = \matD{U}{\numclients}{1} \oplus \matD{W}{\numclients}{\numcoordinates}$.
			\item[--] Compute $\matD{Y}{1}{\numcoordinates} = \rowAggregate(\matD{X}{\numclients}{\numcoordinates})$, i.e., $\matcol{Y}{j} = \sum_{i=1}^{\numclients} \matpos{X}{i}{j}$ for $j \in [\numcoordinates]$.
		\end{boxenumerate}
		\item[Output:] $\FSecAgg$ computes $\shr{\cdot}$-shares of $\mat{Y}$ and sends the respective shares to the servers in $\serverset$.
	\end{description}
\end{systembox}

\section{Our Framework: \FLname}
\label{sec:system}
We now detail our framework~\enquote{\FLname{}} from an~MPC standpoint, covering the sharing semantics, client interaction with~MPC servers, and secure aggregation of client updates. 
Our generic constructions utilize~MPC in a black-box fashion~\cite{C:DPSZ12,crypto:CramerDESX18,SP:Damgard0FKSV19,ACNS:BenNieOmr19,indocrypt:Rotaru019}, however, the full~MPC protocols, including inner product computation, multiplication, and bit-to-arithmetic conversion, are detailed in~\appref{app:mpc-protocols}.

\myparatight{Masked Evaluation}
\label{sec:masked_evaluation}
In our~MPC protocols, we use the masked evaluation technique~\cite{AC:GorRanWan18,ACNS:BenNieOmr19,ndss:ChaudhariRS20,USENIX:MLRG20,ARXIV:MPCLeague,EPRINT:KPPS22}, which enables costly \emph{data-independent} calculations to be completed in a preprocessing phase, thus enabling a fast and efficient~\emph{data-dependent} online phase~(cf.~\secref{sec:prelims_mpc}). 
In this model, the secret-share of every element~$\val \in \Z{\ell}$, denoted by~$\shr{\val}$, is associated with two values: a random mask~$\lv{\val}{} \in \Z{\ell}$ and a masked value~$\mv{\val} \in \Z{\ell}$ such that~$\val = \mv{\val} + \lv{\val}{}$. While~$\lv{\val}{}$ is split and distributed as~$\numshares$ shares as per the underlying MPC scheme~(cf.~\appref{app:mpc-protocols}), $\mv{\val}$ is given to all~MPC servers.\footnote{Due to differences in the underlying setting, there may be minor differences in how the values~$\mv{\val}$ and~$\lv{\val}{}$ are distributed among the servers.}
Since~$\lv{\val}{}$ is random and independent of~$\val$, all operations involving~$\lv{\val}{}$ values alone can be computed during the preprocessing phase and thereby leading to a fast online phase.

\myparatight{Client-Server Interaction}
\label{sec:client-interaction}
Before going into the details of aggregation among~$\numservers$~MPC servers, we discuss input sharing and the reconstruction of the aggregated vector for clients.

To generate the~$\shr{\cdot}$-shares of a value~$\val \in \Z{\ell}$ owned by client~$\client{}$, it first non-interactively computes the additive shares of the mask~$\lv{\val}{}$ using the shared randomness setup discussed in~\secref{sec:mpc-shared-randomness}.
The masked value is then computed as~$\mv{\val} = \val - \lv{\val}{}$ and sent to a single designated~MPC server, say~$\server{1}$.
The input sharing is completed when~$\server{1}$ sends~$\mv{\val}$ to all remaining~MPC servers.\footnote{If \emph{malicious privacy} is desired, $\client{}$ can send a hash digest of all the~$\mv{\val}$ values to the remaining MPC servers, who can verify the correctness of messages from~$\server{1}$.}
For~Boolean values~(i.e., in~$\Z{}$) the procedure is similar except that addition/subtraction is replaced with~XOR and multiplication with~AND.
We use~$\shrB{\cdot}$ to denote the secret sharing over the domain~$\Z{}$.

After the servers have received all inputs in~$\shr{\cdot}$-shared form, they instantiate the~$\FSecAgg$ functionality specified in~\boxref{fig:secure-aggregation-accurate} and obtain the aggregated vectors in~$\shr{\cdot}$-shared form.
Recall from~\secref{sec:fl-aggregation} that the values to be aggregated in our case correspond to~FL gradients, and the aggregated result can also be made public.
As a result, the servers reconstruct the aggregated result towards a chosen server, say~$\server{1}$, which updates the global model according to~Eq.~\eqref{eq:fedavg}. In the next iteration, $\server{1}$ distributes the updated global model to a fresh set of clients, and the process is repeated.

From a client's perspective, it interacts solely with a single server~(apart from a one-time shared-randomness setup), as in the privacy-free variant with a single parameter server~\cite{PMLR:McMahanMRHA17}.

\subsection{MPC-based Aggregation}
\label{sec:mpc-agg}
We now discuss three approaches for instantiating~$\FSecAgg$ using~MPC protocols that operate on secret-shared values.
Recall from~Eq.~\eqref{eq:quantized-aggregation} that the~MPC servers for the aggregation of quantized values possess~$\shr{\cdot}$-shares of matrices~$\matD{U}{\numclients}{1}$ and~$\matD{V}{\numclients}{1}$ along with the~$\shrB{\cdot}$-shares of~$\matD{B}{\numclients}{\numcoordinates}$.

\myparatight{Approach I}
\label{sec:approach_I}
A naive instantiation of~$\FSecAgg$ would be to have the servers convert binary shares of the matrix~$\mat{B}$ to their arithmetic shares first, as in~Prio+~\cite{SCN:AddankiGJOP22}.
This is possible in~MPC with a bit-to-arithmetic conversion protocol~$\piBitA$~\cite{CCS:MohRin18,NDSS:PatSur20,SCN:AddankiGJOP22,EPRINT:KPPS22}, which computes the arithmetic shares of~$\bitb \in \Z{}$ from its~Boolean sharing.
Once the arithmetic shares are generated, the servers can use the inner-product protocol~$\piDotP$ on each of the~$\numcoordinates$ columns of matrix~$\mat{B}$ with the locally computed column vector~$(\matD{V}{\numclients}{1} - \matD{U}{\numclients}{1})$ to obtain the row aggregate. 
They complete the protocol by adding a row aggregate of~$\mat{U}$ to each column of the matrix computed in the previous step.
The formal protocol~$\piSecAgg{1}$ is given in~\boxref{fig:SecAgg-MPC-Approach-I}. 

\begin{protocolbox}{$\piSecAgg{1} (\shrB{\matD{B}{\numclients}{\numcoordinates}}, \shr{\matD{U}{\numclients}{1}}, \shr{\matD{V}{\numclients}{1}})$}{Secure aggregation -- Approach \Romannum{1}.}{fig:SecAgg-MPC-Approach-I}
	\justify
	\begin{description}
		\item[1.] Locally compute $\shr{\matD{S}{\numclients}{1}} = \shr{\matD{V}{\numclients}{1}} - \shr{\matD{U}{\numclients}{1}}$.
		\item[2.] Compute $\shr{\mat{B}} = \piBitA(\shrB{\mat{B}})$.
		\item[3.] Compute $\shr{\matcol{W}{j}} = \piDotP(\shr{\matcol{B}{j}},\shr{\mat{S}})$, for each $j \in [\numcoordinates]$.
		\item[4.] Locally compute $\shr{\matD{Z}{1}{1}} = \rowAggregate(\shr{\matD{U}{\numclients}{1}})$.
		\item[5.] Locally compute $\shr{\matD{Y}{1}{\numcoordinates}} = \shr{\matD{Z}{1}{1}} \oplus \shr{\matD{W}{1}{\numcoordinates}}$. 
	\end{description} 
\end{protocolbox}

In~\boxref{fig:SecAgg-MPC-Approach-I}, $\piSecAgg{1}$ invokes~$\piBitA$ for each bit in matrix~$\mat{B}$, resulting in~$\numclients\cdot\numcoordinates$ invocations. 
Using the masked evaluation technique~\cite{NDSS:KPRS22,ARXIV:MPCLeague}, the online communication of the inner product protocol~$\piDotP$ can be made independent of the dimension~$\numclients$, which in our case corresponds to the number of clients.

\myparatight{Approach II}
\label{sec:approach_II}
We use the bit injection protocol~\cite{CCS:MohRin18,NDSS:PatSur20,USENIX:KPPS21}, denoted by~$\piBitInj$, which computes the arithmetic sharing of~$\bitb\cdot\scale$ given the~Boolean sharing of~$\bitb \in \Z{}$ and the arithmetic sharing of~$\scale \in \Z{\ell}$.
Given~$\shrB{\matD{M}{\alpha}{1}}$ and~$\shr{\matD{N}{\alpha}{1}}$, the high-level idea is to effectively combine the~$\piBitA$ and~$\piDotP$ protocol to a slightly modified instance of~$\piBitInj$ that directly computes the sum~\cite{NDSS:KPRS22,ARXIV:MPCLeague}~(denoted by~$\sum_{i=1}^{\alpha} \matpos{M}{i}{1} \cdot \matpos{N}{i}{1}$) instead of the individual positions. This can be achieved following~Eq.~\eqref{eq:case-bitinj-masked} and the details appear in~\boxref{fig:bitinj-sum} in~\appref{app:mpc-protocols}.
One significant advantage of this technique is that the overall online communication is no longer dependent on the number of clients~$\numclients$.
$\piSecAgg{2}$ denotes the resulting protocol and the formal details are given in~\boxref{fig:SecAgg-MPC-Approach-II}.

\begin{protocolbox}{$\piSecAgg{2} (\shrB{\matD{B}{\numclients}{\numcoordinates}}, \shr{\matD{U}{\numclients}{1}}, \shr{\matD{V}{\numclients}{1}})$}{Secure aggregation -- Approach \Romannum{2}.}{fig:SecAgg-MPC-Approach-II}
	\justify
	\begin{description}
		\item[1.] Locally compute $\shr{\matD{S}{\numclients}{1}} = \shr{\matD{V}{\numclients}{1}} - \shr{\matD{U}{\numclients}{1}}$.
		\item[2.] Compute $\shr{\matcol{W}{j}} = \piBitInj(\shrB{\matcol{B}{j}},\shr{\mat{S}})$, for each $j \in [\numcoordinates]$.
		\item[3.] Locally compute $\shr{\matD{Z}{1}{1}} = \rowAggregate(\shr{\matD{U}{\numclients}{1}})$.
		\item[4.] Locally compute $\shr{\matD{Y}{1}{\numcoordinates}} = \shr{\matD{Z}{1}{1}} \oplus \shr{\matD{W}{1}{\numcoordinates}}$. 
	\end{description} 
\end{protocolbox}

\myparatight{Approach III using SepAgg}
\label{sec:SepAgg}
In a closely related work~\cite{ARXIV:BAT20}, the authors combine the~SIGNSGD compression technique of~\cite{ICML:BernsteinWAA18} with additive secret sharing for~FL.
Unlike our work, which investigates secure aggregation using various linear quantization algorithms in a cross-device environment, they aim for a cross-silo setting in which clients distribute arithmetically shared values to servers rather than single bits.
In terms of client-server communication, this indicates a non-optimal communication overhead of factor~$\log_2 2\numclients$, where~$\numclients$ is the number of involved participants each round.

However, the authors of~\cite{ARXIV:BAT20} introduce an interesting approximation called~\enquote{SepAgg} for computing the averaged inner product between bits and scales:

\begin{equation}
    \label{eq:sepagg}
    \frac{1}{\numclients} \sum_i^\numclients \qvecV{X}{i} \cdot \stepV{X}{i} \approx \frac{1}{\numclients^2} \left(\sum_i^\numclients \qvecV{X}{i}\right) \left(\sum_i^\numclients \stepV{X}{i}\right).
\end{equation}

We adopt the~SepAgg method to our setting to compute~$\rowAggregate(\mat{B} \circ (\mat{V} - \mat{U}))$ in~Eq.~\eqref{eq:quantized-aggregation}.
In particular, we aggregate the matrices~$\mat{B}$ and~$(\mat{V}-\mat{U})$ independently and then perform one secure multiplication per coordinate, with the other operations being linear and free in our~MPC protocol.
As a result, we can utilize linear quantization schemes with local scales at the cost of global scales~(ignoring the overhead for global scales to securely determine~\stepmin{X} and~\stepmin{X} across all participants).
The formal protocol~$\piSecAgg{3}$ utilizing~SepAgg appears in~\boxref{fig:SecAgg-MPC-Approach-III}.

\begin{protocolbox}{$\piSecAgg{3} (\shrB{\matD{B}{\numclients}{\numcoordinates}}, \shr{\matD{U}{\numclients}{1}}, \shr{\matD{V}{\numclients}{1}})$}{Secure aggregation -- Approach \Romannum{3} (SepAgg~\cite{ARXIV:BAT20}).}{fig:SecAgg-MPC-Approach-III}
	\justify
	\begin{description}
		\item[1.] Locally compute $\shr{\matD{S}{1}{1}} = \rowAggregate(\shr{\matD{V}{\numclients}{1}} - \shr{\matD{U}{\numclients}{1}})$.
		\item[2.] Compute $\shr{\matcol{T}{j}} = \piBitASum(\shrB{\matcol{B}{j}})$, for each $j \in [\numcoordinates]$.
		\item[3.] Compute $\shr{\matcol{W}{j}} = \piMult(\shr{\matcol{T}{j}},\shr{\mat{S}})$, for each $j \in [\numcoordinates]$.
		\item[4.] Locally compute $\shr{\matD{Z}{1}{1}} = \rowAggregate(\shr{\matD{U}{\numclients}{1}})$.
		\item[5.] Locally compute $\shr{\matD{Y}{1}{\numcoordinates}} = \shr{\matD{Z}{1}{1}} \oplus \frac{1}{\numclients} \cdot \shr{\matD{W}{1}{\numcoordinates}}$. 
	\end{description} 
\end{protocolbox}

\myparatight{Accuracy Evaluation}
\label{sec:sepagg_eval}
We next provide strong empirical evidence that applying~SepAgg for~SQ with preprocessing preserves the linear~NMSE decay with respect to the number of clients~(i.e., unbiased estimates).
For this, we simulate the aggregation of random vectors~$\vec{v}_i$ with dimension~$d$ drawn from a~$(0, 1)$-log-normal distribution.\footnote{We use this distribution for preliminary measurements as it was commonly observed in neural network gradients,~e.g., \cite{iclr:ChmielBSHBS21}.}
Then, we measure the normalized mean square error~(NMSE) when comparing the averaged aggregation result~$\textit{agg}$ computed on secret-shared and quantized inputs to the plain averaged aggregation~$\textit{agg}_\textit{orig} = \sum_i^\numclients \vec{v}_i / \numclients$.
Concretely, we measure
\begin{equation}
\label{eq:nmse}
\textit{NMSE} = \frac{\lVert \textit{agg}_\textit{orig} - \textit{agg} \rVert_2^2}{\sum_i^\numclients \lVert \vec{v}_i \rVert_2^2 / \numclients},
\end{equation}
where~$\textit{agg}$ is computed using various linear quantization schemes for~(i)~a regular dot product between converted bits and scales and~(ii) with~SepAgg.
The code for the implementation of our simulation framework is available at \url{https://encrypto.de/code/ScionFL}.
Our results shown in~Fig.~\ref{fig:sepagg_sq} are the average of~10 trials for each experiment with~$\numshares = 3$ shares~(representing a three-server dishonest majority setting using the masked evaluation technique).
While for smaller dimensions we observe a visible effect for~SQ without preprocessing, there is only a minor difference for the other two quantization schemes, and sometimes the~NMSE for~SepAgg is even smaller than for the exact computation.

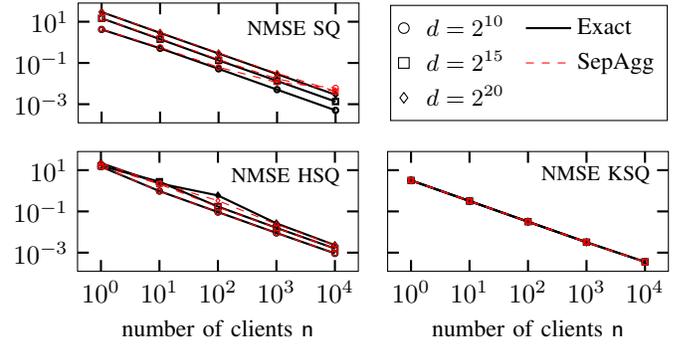
\begin{figure}[htb!]
    \centering
    \begin{tabular}{l}

\begin{filecontents}{SepAgg_SQ_10_Exact.dat}
1   4.208855152130127
10  0.5121448636054993
100 0.05179726332426071
1000    0.0050895074382424355
10000   0.0005041431868448853
\end{filecontents}

\begin{filecontents}{SepAgg_SQ_15_Exact.dat}
1   14.634160041809082
10  1.4213266372680664
100 0.13391748070716858
1000    0.01343844085931778
10000   0.0013427452649921179
\end{filecontents}

\begin{filecontents}{SepAgg_SQ_20_Exact.dat}
1   29.339923858642578
10  2.8702850341796875
100 0.28732237219810486
1000    0.028705736622214317
10000   0.002871936419978738
\end{filecontents}

\begin{filecontents}{SepAgg_SQ_10_SA.dat}
1   4.208855152130127
10  0.5676319003105164
100 0.062115032225847244
1000    0.011328457854688168
10000   0.006164452992379665
\end{filecontents}

\begin{filecontents}{SepAgg_SQ_15_SA.dat}
1   14.634160041809082
10  1.5428516864776611
100 0.14585858583450317
1000    0.01708141714334488
10000   0.004122930113226175
\end{filecontents}

\begin{filecontents}{SepAgg_SQ_20_SA.dat}
1   29.339923858642578
10  3.0408153533935547
100 0.30566003918647766
1000    0.03155141696333885
10000   0.004275310784578323
\end{filecontents}

\hspace{-1.125em}

\begin{tikzpicture}
\begin{axis}[xmode=log,xticklabels=\empty, log basis x={10}, xminorticks=false, ymode=log, width=0.6\columnwidth, height=3.175cm, ymax=80, ymin =0.000125, ylabel style={align=center}, legend style={at={(2.1,0.5)},anchor=east}, legend entries={\phantom{x}$d=2^{10}$,Exact,\phantom{x}$d=2^{15}$,SepAgg,\phantom{x}$d=2^{20}$},legend columns=2,legend style={/tikz/every even column/.append style={column sep=0.2cm}},legend style={font=\small},legend cell align=left]

\addlegendimage{only marks,mark=o}
\addlegendimage{no markers,black,thick}
\addlegendimage{only marks,mark=square}
\addlegendimage{no markers,red, dashed}
\addlegendimage{only marks,mark=diamond}

\addplot[mark=o, thick, mark size=1pt] table {SepAgg_SQ_10_Exact.dat};
\addplot[mark=square, thick, mark size=1pt] table {SepAgg_SQ_15_Exact.dat};
\addplot[mark=diamond, thick, mark size=1pt] table {SepAgg_SQ_20_Exact.dat};

\addplot[mark=o, red, dashed, mark options=solid, mark size=1pt] table {SepAgg_SQ_10_SA.dat};
\addplot[mark=square, red, dashed, mark options=solid, mark size=1pt] table {SepAgg_SQ_15_SA.dat};
\addplot[mark=diamond, red, dashed, mark options=solid, mark size=1pt] table {SepAgg_SQ_20_SA.dat};

\node[below,font=\footnotesize,yshift=0.0cm,xshift=1.0cm] at (current bounding box.north) {NMSE SQ};

\end{axis}
\end{tikzpicture}

\end{tabular}

\begin{tabular}{ll}

\begin{filecontents}{SepAgg_HD_10_Exact.dat}
1   14.849215507507324
10  0.9652924537658691
100 0.09276869893074036
1000    0.009033245965838432
10000   0.0009228944545611739
\end{filecontents}

\begin{filecontents}{SepAgg_HD_15_Exact.dat}
1   16.090396881103516
10  2.7564902305603027
100 0.17168010771274567
1000    0.015974637120962143
10000   0.001583668403327465
\end{filecontents}

\begin{filecontents}{SepAgg_HD_20_Exact.dat}
1   22.05792808532715
10  2.2794241905212402
100 0.5940082669258118
1000    0.0263679176568985
10000   0.002309885574504733
\end{filecontents}

\begin{filecontents}{SepAgg_HD_10_SA.dat}
1   14.849215507507324
10  0.9476737976074219
100 0.09172646701335907
1000    0.008948264643549919
10000   0.0009353989735245705
\end{filecontents}

\begin{filecontents}{SepAgg_HD_15_SA.dat}
1   16.090396881103516
10  2.1048452854156494
100 0.16395026445388794
1000    0.01585732400417328
10000   0.0015799679094925523
\end{filecontents}

\begin{filecontents}{SepAgg_HD_20_SA.dat}
1   22.05792808532715
10  2.275908946990967
100 0.3345192074775696
1000    0.023726116865873337
10000   0.002281488850712776
\end{filecontents}

\hspace{-1.15em}

\begin{tikzpicture}
\begin{axis}[xmode=log, log basis x={10}, xminorticks=false, ymode=log, width=0.6\columnwidth, height=3.175cm, ymax=80, xlabel={{\small number of clients~$\numclients$}}, ymin =0.000125, ylabel style={align=center}]

\addplot[mark=o, thick, mark size=1pt] table {SepAgg_HD_10_Exact.dat};
\addplot[mark=square, thick, mark size=1pt] table {SepAgg_HD_15_Exact.dat};
\addplot[mark=diamond, thick, mark size=1pt] table {SepAgg_HD_20_Exact.dat};

\addplot[mark=o, red, dashed, mark options=solid, mark size=1pt] table {SepAgg_HD_10_SA.dat};
\addplot[mark=square, red, dashed, mark options=solid, mark size=1pt] table {SepAgg_HD_15_SA.dat};
\addplot[mark=diamond, red, dashed, mark options=solid, mark size=1pt] table {SepAgg_HD_20_SA.dat};

\node[below,font=\footnotesize,yshift=0.05cm,xshift=0.9cm] at (current bounding box.north) {NMSE HSQ};

\end{axis}
\end{tikzpicture}

&

\begin{filecontents}{SepAgg_KS_10_Exact.dat}
1   3.183554172515869
10  0.3203858733177185
100 0.03149250149726868
1000    0.0031654066406190395
10000   0.0003553563728928566
\end{filecontents}

\begin{filecontents}{SepAgg_KS_15_Exact.dat}
1   3.184326171875
10  0.3188202679157257
100 0.03187008947134018
1000    0.003223301377147436
10000   0.00035995288635604084
\end{filecontents}

\begin{filecontents}{SepAgg_KS_20_Exact.dat}
1   3.1937029361724854
10  0.3180665075778961
100 0.031817205250263214
1000    0.0032241265289485455
\end{filecontents}

\begin{filecontents}{SepAgg_KS_10_SA.dat}
1   3.183554172515869
10  0.3187379240989685
100 0.03128146752715111
1000    0.0031368688214570284
10000   0.0003406478208489716
\end{filecontents}

\begin{filecontents}{SepAgg_KS_15_SA.dat}
1   3.184326171875
10  0.31875893473625183
100 0.03185799717903137
1000    0.0032204785384237766
10000   0.00035855401074513793
\end{filecontents}

\begin{filecontents}{SepAgg_KS_20_SA.dat}
1   3.1937029361724854
10  0.31806349754333496
100 0.03181680291891098
1000    0.0032240531872957945
\end{filecontents}

\hspace{-1.3em}

\begin{tikzpicture}
\begin{axis}[xmode=log, yticklabel=\empty, log basis x={10}, xminorticks=false, ymode=log, width=0.6\columnwidth, height=3.175cm, ymax=80, ymin =0.000125, xlabel={{\small number of clients~$\numclients$}}, ylabel style={align=center}]

\addplot[mark=o, thick, mark size=1pt] table {SepAgg_KS_10_Exact.dat};
\addplot[mark=square, thick, mark size=1pt] table {SepAgg_KS_15_Exact.dat};
\addplot[mark=diamond, thick, mark size=1pt] table {SepAgg_KS_20_Exact.dat};

\addplot[mark=o, red, dashed, mark options=solid, mark size=1pt] table {SepAgg_KS_10_SA.dat};
\addplot[mark=square, red, dashed, mark options=solid, mark size=1pt] table {SepAgg_KS_15_SA.dat};
\addplot[mark=diamond, red, dashed, mark options=solid, mark size=1pt] table {SepAgg_KS_20_SA.dat};

\node[below,font=\footnotesize,yshift=0.3cm,xshift=0.9cm] at (current bounding box.north) {NMSE KSQ};

\end{axis}
\end{tikzpicture}

\end{tabular}
    \vspace{-2mm}
    \captionsetup{font=small}
    \caption{NMSE comparison between exact and~SepAgg-based aggregation for vanilla~SQ, SQ using the randomized~Hadamard transform~(HSQ), and~SQ using~Kashin's representation~(KSQ) for various vector dimensions~$d$ and number of clients~$\numclients$.}
    \label{fig:sepagg_sq}
    \vspace{-4mm}
\end{figure}

Formally proving that applying~SepAgg after different preprocessing techniques~(e.g., random rotation and~Kashin's representation) results in an unbiased aggregation is a significant theoretical challenge, left for future work.

\myparatight{Communication Costs}
Tab.~\ref{tab:fl-costs-theory} provides the theoretical communication costs for our approaches when aggregating~$\numclients$ quantized single-dimension vectors. Clearly, our~Approach-III~(cf.~\boxref{fig:SecAgg-MPC-Approach-III}) is the most efficient, with the multiplication-related cost being completely independent of the number of clients~$\numclients$ due to~SepAgg~\cite{ARXIV:BAT20}. The concrete communication costs are provided in ~\appref{sec:communication_costs_detailed}.

The communication costs are primarily determined by the cost of~$\bitAPreCost$. In our masked evaluation technique~(cf.~\appref{app:mpc-protocols}), this relates to the conversion of a random secret-shared bit from~Boolean sharing to its additive sharing form~\cite{SP:Damgard0FKSV19,indocrypt:Rotaru019}. Thus, we propose approximate variants for improving the cost of this operation.

\begin{table}[htb!]
    \centering
    \resizebox{0.95\columnwidth}{!}{%
    \begin{tabular}{rrr}
    \toprule
    Approach     
    & \multicolumn{1}{c}{Offline}   
    & \multicolumn{1}{c}{Online} \\ \midrule
    Approach-I   
    & $\numclients \cdot \bitAPreCost + \numclients \cdot \MultPreCost$ 
    & $\numclients \cdot \bitAOnCost  + \MultOnCost$  \\ 
    Approach-II 
    & $\numclients \cdot \bitAPreCost + \numclients \cdot \MultPreCost$ 
    & $\MultOnCost$ \\ 
    Approach-III
    & $\numclients \cdot \bitAPreCost + \phantom{\numclients} \hspace{0.5em} \phantom{\cdot} \MultPreCost$ 
    & $\MultOnCost$\\ 
    \bottomrule
    \end{tabular}%
    }
    \vspace{-1mm}
    \captionsetup{font=small}
    \caption{Communication costs for aggregating quantized vectors with a single dimension for~$\numclients$ clients. Protocols~$\piBitA$ and~$\pimult$ are treated as black-boxes, and their costs are represented as~$\bitACost$ and~$\MultCost$, respectively. The superscript {\sf pre} in the costs denotes preprocessing and {\sf on} denotes the online phase.}
    \label{tab:fl-costs-theory}
\end{table}

\subsection{Approximate Bit Conversion in MPC}
\label{sec:approximate_mpc}
We reduce communication and computation costs with a novel~\emph{approximate} bit conversion method.
Consider a bit~$\bitb$ represented using two~Boolean shares~$\bitb_1, \bitb_2 \in \{0,1\}$, such that~$\bitb = \bitb_1 \xor \bitb_2$.
Note that when embedding~$\bitb_1$ and~$\bitb_2$ in a larger field/ring\footnote{The bit~(either~$0$ or~$1$) is treated as a ring element in~$\Z{\ell}$ in our protocols.}, it holds that~$\bitb = \bitb_1 + \bitb_2 - 2\bitb_1 \bitb_2$.
Similarly, for~$\bitb = \bitb_1 \xor \bitb_2 \xor \bitb_3$, $\bitb = \bitb_1 + \bitb_2 + \bitb_3 - 2\bitb_1\bitb_2 - 2\bitb_1\bitb_3 - 2\bitb_2\bitb_3 + 4\bitb_1\bitb_2\bitb_3$ holds true.
This concept generalizes to an arbitrary number of shares, denoted by~$\numshares$, as discussed below.

For~$\bitb = \xor_{i=1}^{\numshares} \bitb_i$, let~$\shareset = \{\bitb_i\}_{i \in [\numshares]}$ denote the set of all~$\numshares$ shares of~$\bitb$, and~$\bitbext_i$ the arithmetic equivalent of the share~$\bitb_i$.
Let~$2^{\shareset}$ be the powerset of~$\shareset$ and~$\shareset^{|c|}$ the set of all size-$c$ subsets in~$2^{\shareset}$, that is, $2^{\shareset} = \sum_{i=0}^q \shareset^{|i|}$.
The arithmetic equivalent of~$\bitb$, denoted by~$\bitbAr$, is given as
\begin{center}
\resizebox{.9\linewidth}{!}{
  \begin{minipage}{\linewidth}
    \begin{align}
        \label{eq:bit-arithmetic-general}
    	\bitbAr &= \smashoperator{\sum\limits_{ \{\bitb_e\} \in \shareset^{|1|}}} \bitbext_e 
    	~-~ 2~\cdot~\smashoperator{\sum\limits_{\{\bitb_{e_1},\bitb_{e_2}\} \in \shareset^{|2|}}} \bitbext_{e_1}\bitbext_{e_2}
    	~+~ \ldots + (-2)^{\numshares-1}~\cdot~ \smashoperator{\prod\limits_{\{\bitb_{e_1},\ldots,\bitb_{e_q}\} \in \shareset^{|q|}}} \bitbext_e \nonumber \\
    	&= \sum\limits_{k=1}^{\numshares} (-2)^{k-1} \smashoperator{\sum\limits_{\{\bitb_{e_1},\ldots,\bitb_{e_k}\} \in \shareset^{|k|} }} \bitbext_{e_1} \bitbext_{e_2} \ldots \bitbext_{e_k}
    \end{align}
  \end{minipage}
}
\end{center}

Note that the Eq.~\eqref{eq:bit-arithmetic-general} can be viewed as sum of three terms: Sum~($\termSum$), Middle~($\termMid$), and Product~($\termProd$) as shown in Eq.~\eqref{eq:bit-arithmetic-general-parts-old} below. (Note that~$\shareset^{|q|}=\shareset$).
\resizebox{.9\linewidth}{!}{
  \begin{minipage}{\linewidth}
    \begin{align}
    \label{eq:bit-arithmetic-general-parts-old}
    	\bitbAr = \underbrace{\smashoperator{\sum\limits_{\{\bitb_e\} \in \shareset^{|1|}}} \bitbext_e}_{\text{Sum Term}:~\termSum} + \underbrace{\sum\limits_{k=2}^{\numshares-1} (-2)^{k-1} \smashoperator{\sum\limits_{\{\bitb_{e_1},\ldots,\bitb_{e_k}\} \in \shareset^{|k|} }} \bitbext_{e_1} \bitbext_{e_2} \ldots \bitbext_{e_k}}_{\text{Middle Term}:~\termMid} + \underbrace{(-2)^{\numshares-1} \prod\limits_{\bitb_e \in \shareset} \bitbext_e}_{\text{Product Term}:~\termProd}
    \end{align}
  \end{minipage}
}

\tsparagraph{Our Approach}
\label{sec:our-approx-approach}
Performing this conversion in~MPC requires many additions and multiplications.
While linear operations like additions can be calculated for~\enquote{free} in most~MPC protocols, non-linear operations such as multiplications require some form of communication between the MPC servers.
Hence, computing the middle term is costly, especially when a large number of shares is involved.

To approximate~$\bitbAr$ in~Eq.~\eqref{eq:bit-arithmetic-general-parts-old}, we replace only term~$\termMid$ with its expected value~$\E[\termMid]$ such that the approximate value of~$\bitbAr$, denoted by $\bitbapprox$, retains~$\E[\bitbapprox]=\bitb$.
The expectation of~$\termSum$ and~$\termProd$ in~Eq.~\eqref{eq:bit-arithmetic-general-parts-old} is first calculated, and~$\E[\termMid]$ is inferred using the fact that~$\E[\bitbapprox]=\bitb$.
This analysis is summarised in~Lem.~\ref{lemma:bitA-exp-analysis} and the proof is provided in \secref{app:approximation-proof}.

\begin{restatable}[Expected Values]{lemma}{binomialexpectation} 
   \label{lemma:bitA-exp-analysis}
   Given a bit~$\bitb = \xor_{i=1}^{\numshares} \bitb_i$ and~$b= \termSum + \termMid + \termProd$ with 
   \begin{align*}
       \termSum &= \smashoperator{\sum\limits_{\{\bitb_e\} \in \shareset^{|1|}}} \bitbext_e, 
       ~~\termMid = \sum\limits_{k=2}^{\numshares-1} (-2)^{k-1} \smashoperator{\sum\limits_{\{\bitb_{e_1},\ldots,\bitb_{e_k}\} \in \shareset^{|k|} }} \bitbext_{e_1} \bitbext_{e_2} \ldots \bitbext_{e_k},\\
       \termProd &= (-2)^{\numshares-1} \prod\limits_{\bitb_e \in \shareset} \bitbext_e,
   \end{align*}
   we have~$\E[\termSum \mid \bitb] = \numshares/2$, $\E[\termMid \mid \bitb] = (\numshares\mbox{-}1)\Mod{2} - \numshares/2$, and~$\E[\termProd \mid \bitb] = \bitb - (\numshares\mbox{-}1) \Mod{2}$.
\end{restatable}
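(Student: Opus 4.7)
My plan is to exploit the exact arithmetic identity $\bitb = \termSum + \termMid + \termProd$ established in Eq.~\eqref{eq:bit-arithmetic-general-parts}: once we compute $\E[\termSum\mid \bitb]$ and $\E[\termProd\mid \bitb]$ directly from the distribution of the shares, the value of $\E[\termMid\mid \bitb]$ is forced by linearity of expectation. The underlying probabilistic model is that the shares $\bitb_1,\dots,\bitb_\numshares$ are uniformly distributed over all $2^{\numshares-1}$ tuples whose XOR equals $\bitb$, which I will state at the outset.

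First I would compute $\E[\termSum\mid \bitb]$. By the symmetry of the conditional distribution, each marginal satisfies $\Pr[\bitb_i=1\mid \bitb]=1/2$ for every $i\in[\numshares]$ (the map flipping two fixed coordinates is a measure-preserving bijection on tuples with fixed XOR). Hence $\E[\bitbext_i\mid \bitb]=1/2$ and summing over the $\numshares$ terms gives $\E[\termSum\mid \bitb]=\numshares/2$.

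Next I would compute $\E[\termProd\mid \bitb]$. Since $\prod_{i=1}^\numshares \bitbext_i$ is the indicator that all shares equal $1$, and the all-ones tuple has XOR equal to $(\numshares-1)\Mod 2 \oplus 1$, i.e.\ $\numshares\Mod 2$, the event $\{\bitb_1=\cdots=\bitb_\numshares=1\}$ is consistent with $\bitb$ only when $\bitb\equiv \numshares\pmod 2$, in which case it occurs with probability $1/2^{\numshares-1}$. A short parity case-split, absorbing the $(-2)^{\numshares-1}$ prefactor and using $(-1)^{\numshares-1} = 1-2((\numshares-1)\Mod 2)$, yields
\[
\E[\termProd\mid \bitb] \;=\; \bitb - (\numshares-1)\Mod 2,
\]
which one can verify directly for $\bitb\in\{0,1\}$ and both parities of $\numshares$.

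Finally, because Eq.~\eqref{eq:bit-arithmetic-general-parts} is an identity in $\ZZ$ (not merely modulo $2$) whenever the $\bitbext_i$ equal the arithmetic lifts of the $\bitb_i$, taking conditional expectations on both sides gives
\[
\bitb \;=\; \E[\termSum\mid \bitb] + \E[\termMid\mid \bitb] + \E[\termProd\mid \bitb],
\]
so $\E[\termMid\mid \bitb] = \bitb - \numshares/2 - (\bitb-(\numshares-1)\Mod 2) = (\numshares-1)\Mod 2 - \numshares/2$, as claimed. The only subtle step is the parity-case analysis for $\termProd$; everything else is bookkeeping driven by the symmetry of uniform XOR shares and the exactness of the inclusion-exclusion identity for XOR.
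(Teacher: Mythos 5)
Your proof is correct and follows the same overall strategy as the paper's—compute $\E[\termSum\mid\bitb]$ and $\E[\termProd\mid\bitb]$ directly, then recover $\E[\termMid\mid\bitb]$ by linearity from the identity $\bitb = \termSum + \termMid + \termProd$. The interesting divergence is in how you handle $\termSum$. The paper reads the sum off the truth table: it counts rows with a given number of ones, splits by the parity of that count to separate $\bitb=0$ from $\bitb=1$, and then invokes an auxiliary binomial identity, $\sum_{p}2p\binom{n}{2p}=\sum_{p}(2p+1)\binom{n}{2p+1}=n\cdot 2^{n-2}$ (Lemma~\ref{lemma:binomial}), to collapse the count. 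You instead observe that the conditional distribution of $(\bitb_1,\dots,\bitb_\numshares)$ given their XOR is invariant under the involution that flips any two fixed coordinates, so each $\bitb_i$ is marginally $\mathrm{Bernoulli}(1/2)$ given $\bitb$ (valid for $\numshares\ge 2$, the only case where the decomposition is meaningful), and $\E[\termSum\mid\bitb]=\numshares/2$ drops out in one line. This is cleaner: it bypasses the auxiliary binomial lemma entirely and makes the independence-from-$\bitb$ of $\E[\termSum\mid\bitb]$ structurally transparent rather than an outcome of a calculation. Your treatments of $\termProd$ (parity case-split on whether the all-ones tuple is consistent with $\bitb$) and $\termMid$ (linearity) match the paper's. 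One small thing worth stating explicitly, which you gesture at but the paper also leaves implicit, is that the identity $\bitb=\termSum+\termMid+\termProd$ holds pointwise over the integers, not just in expectation, so taking conditional expectations is legitimate.
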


\tsparagraph{Our Approximation}
\label{sec:our-approx}
We define the approximate arithmetic equivalent of $\bitb$, denoted by $\bitbapprox$, as follows:
\begin{equation}
    \label{eq:approx-equation-middle-term}
	\bitbapprox = \underbrace{\smashoperator{\sum\limits_{\bitb_e \in \shareset}} \bitbext_e}_{\termSum} + \underbrace{\left((\numshares\mbox{-}1)\Mod{2} - \frac{\numshares}{2}\right)}_{\termMidApx} + \underbrace{(-2)^{\numshares-1} \prod\limits_{\bitb_e \in \shareset} \bitbext_e}_{\termProd}
\end{equation}
While~$\termSum$ is kept because it only involves linear operations on the shares of~$\bitb$~(which are free in~MPC for any linear secret sharing scheme), we observe that~$\termProd$ is required to keep the expected values for~$\bitb = 0$ and~$\bitb = 1$ different. 
This is evident from~Lem.~\ref{lemma:bitA-exp-analysis} where~$\E[\termProd \mid \bitb]$ is the only term that depends on~$\bitb$.

In general, if a term that depends on all the~$\numshares$ shares of~$\bitb$ is missing from the approximation, we get~$\E[\bitb = 0] = \E[\bitb = 1]$.
The intuition is that only such a term can differentiate between~$\bitb = 0$ and~$\bitb = 1$, while all other terms will be symmetrically distributed. 
For instance, consider~$\numshares = 3$ and let~$\bitbAr = c_1\bitbext_1 + c_2\bitbext_2 + c_3\bitbext_3 + c_4\bitbext_1\bitbext_2 + c_5\bitbext_2\bitbext_3 + c_6\bitbext_1\bitbext_3 + c_7$ for some random combiners~$c_i \in \Z{\ell}$ and~$i \in [7]$.
Using the truth table~$\truthtable{\bitb}$ given in~Tab.~\ref{tab:truth-table-three-shares}, it is easy to verify that 
\begin{small}
\begin{equation}
    \E[\bitbAr = 0] = \E[\bitbAr = 1] = \frac{1}{4} \cdot (2c_1 + 2c_2 + 2c_3 + c_4 + c_5 + c_6 + 4c_7)
\end{equation}
\end{small}
\noindent
This argument can be generalized to any value of~$\numshares$.

\begin{claim}
	The approximate arithmetic equivalent~$\bitbapprox$ in~Eq.~\eqref{eq:approx-equation-middle-term} preserves the expectation of the exact bit~$\bitb$ in~Eq.~\eqref{eq:bit-arithmetic-general}, i.e., $\E[\bitbapprox = 0] = 0$ and~$\E[\bitbapprox = 1] = 1$.
\end{claim}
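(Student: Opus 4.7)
The plan is to prove the claim by direct substitution using linearity of expectation, leveraging the already-established Lemma~\ref{lemma:bitA-exp-analysis}. The essential observation is that $\termMidApx = (\numshares-1)\Mod{2} - \numshares/2$ is a deterministic constant depending only on $\numshares$ (and not on any of the shares $\bitbext_i$), so it is its own conditional expectation. Meanwhile, the randomness in $\bitbapprox$ given $\bitb$ comes entirely from $\termSum$ and $\termProd$, whose conditional expectations we already know from Lemma~\ref{lemma:bitA-exp-analysis}.

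First, I would note that the randomness being averaged over is the uniform distribution over rows of the truth table $\truthtable{\bitb}$ consistent with the given value of $\bitb$, exactly as used in the proof of Lemma~\ref{lemma:bitA-exp-analysis}. I would then interpret the claim's notation $\E[\bitbapprox = 0]$ as the conditional expectation $\E[\bitbapprox \mid \bitb = 0]$ (and analogously for $\bitb = 1$), since the unconditional expectation is not the intended quantity.

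Next, I would apply linearity of expectation to obtain
\begin{equation*}
\E[\bitbapprox \mid \bitb] \;=\; \E[\termSum \mid \bitb] \;+\; \termMidApx \;+\; \E[\termProd \mid \bitb],
\end{equation*}
and then substitute the three expressions from Lemma~\ref{lemma:bitA-exp-analysis}: $\E[\termSum \mid \bitb] = \numshares/2$, $\termMidApx = (\numshares-1)\Mod{2} - \numshares/2$, and $\E[\termProd \mid \bitb] = \bitb - (\numshares-1)\Mod{2}$. The $\numshares/2$ terms cancel, as do the $(\numshares-1)\Mod{2}$ terms, leaving $\E[\bitbapprox \mid \bitb] = \bitb$. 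Specializing to $\bitb = 0$ and $\bitb = 1$ yields the two claimed equalities.

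There is essentially no obstacle here: the hard work was already done in Lemma~\ref{lemma:bitA-exp-analysis}, whose proof via the truth table handled the two nontrivial expectations. The only subtle point worth flagging is that replacing the random variable $\termMid$ by the constant $\termMidApx = \E[\termMid \mid \bitb]$ is well-defined precisely because $\E[\termMid \mid \bitb]$ does not actually depend on $\bitb$ (both $\numshares/2$ and $(\numshares-1)\Mod{2}$ are functions of $\numshares$ alone); otherwise, a single deterministic replacement could not simultaneously preserve the expectation for both values of $\bitb$. This $\bitb$-independence of $\E[\termMid \mid \bitb]$ is exactly what makes the approximation consistent and is the conceptual content of the claim.
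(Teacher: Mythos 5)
Your proof is correct and follows the same route as the paper's one-line proof, which asserts the claim follows directly from replacing $\termMid$ with $\termMidApx = \E[\termMid \mid \bitb]$; you simply make explicit the use of linearity of expectation and the substitution of the three conditional expectations from Lem.~\ref{lemma:bitA-exp-analysis}. Your flagged observation that $\termMidApx$ is independent of $\bitb$ (so a single constant works for both cases) is a worthwhile clarification that the paper leaves implicit.
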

\begin{proof}
   The proof is straightforward as we replace the middle term ($\termMid$) in~Eq.~\eqref{eq:bit-arithmetic-general} with its expected value $\termMidApx$.
\end{proof}

We provide more details regarding the efficiency of the approximation in~\secref{sec:approx_efficiency}.

\subsection{Secure Bit Aggregation with Global Scales}
\label{sec:global-scales-agg}
Here, we consider secure bit aggregation in the context of~\enquote{global scales}, as discussed in~\secref{sec:linear-sq}. In this case, all the clients use the same set of scales for quantization, denoted by~$\stepmin{G}$ and~$\stepmax{G}$. Therefore, it is sufficient to compute
\begin{equation}
    \label{eq:straw_man}
 \matD{X}{1}{\numcoordinates} = \stepmin{G} \ ~\oplus~\ \rowAggregate\left(\matD{B}{\numclients}{\numcoordinates}\right) \circ (\stepmax{G} - \stepmin{G})
\end{equation}
as the aggregation result. Interestingly, when~$\stepmin{G}=0$ and~$\stepmax{G}=~1$, this can also be viewed as an instance of privacy-preserving aggregate statistics computation, as demonstrated in the works of~Prio~\cite{nsdi:Corrigan-GibbsB17} and~Prio+~\cite{SCN:AddankiGJOP22}.

As shown in~Eq.~\eqref{eq:straw_man}, the computation becomes simpler in the case of global scales since all clients utilize the same set of public scales, denoted by~$\stepmin{}$ and~$\stepmax{}$, to compute their quantized vector that corresponds to the rows of~$\mat{B}$. Hence, we just need to compute the column-wise aggregate of the~$\mat{B}$ matrix and use protocol~$\piBitASum$~(\boxref{fig:bitA-sum} in~\secref{app:mpc-protocols}) to do so. The resulting protocol~$\piSecAggG$ appears in~\boxref{fig:SecAgg-MPC-Global-Scales}.

\begin{protocolbox}{$\piSecAggG(\shrB{\matD{B}{\numclients}{\numcoordinates}}, \stepmin{}, \stepmax{})$}{Secure aggregation -- Global Scales.}{fig:SecAgg-MPC-Global-Scales}
	\justify
	\begin{description}
		\item[1.] Compute $\shr{\matcol{W}{j}} = \piBitASum(\shrB{\matcol{B}{j}})$, for each $j \in [\numcoordinates]$.
		\item[2.] Locally compute $\shr{\matD{Y}{1}{\numcoordinates}} = \stepmin{} \oplus  \left(\shr{\matD{W}{1}{\numcoordinates}} \cdot (\stepmax{} - \stepmin{})\right)$. 
	\end{description} 
\end{protocolbox}

\subsection{Accuracy Evaluation}
\label{sec:approx_evaluation}
In~\secref{sec:approximate_mpc}, we showed that our approximate bit conversion preserves the expectation of the exact bits. However, we also want to understand the concrete accuracy impact on the aggregation result due to the increased variance.
For this, we run a simulation similar to the one described in~\secref{sec:sepagg_eval}.
Here, we compare the~NMSE computed as in~Eq.~\eqref{eq:nmse} for an aggregation~$\textit{agg}$ when using various linear quantization schemes with global scales~(i) with an exact bit-to-arithmetic conversion and~(ii) with our approximation enabled.
The implementation is available at \url{https://encrypto.de/code/ScionFL}.
Our results in~Fig.~\ref{fig:approx_sq} are the average of~10 trials for each experiment with~$\numshares = 3$ shares. Consistently, we observe that our approximation increases the~NMSE by about three orders of magnitude for stochastic quantization without rotation, and by about one and a half orders of magnitude for rotation-based algorithms.
In~Fig.~\ref{fig:approx_sq_local}, we provide results considering local scales. In contrast to global scales, we can observe that for stochastic quantization without rotation the effect on the~NMSE is reduced from three to one order of magnitude. Also, for rotation-based algorithms there are significant concrete improvements. Furthermore, as shown in~\secref{sec:sys_performance}, the error is still so small that the impact on the accuracy in common~FL settings is negligible.

\begin{figure}[htb!]
    \centering
    \begin{tabular}{l}

\begin{filecontents}{Global_SQ_10_Exact.dat}
1   4.972102642059326
10  1.1281092166900635
100 0.1801718771457672
1000    0.02761847898364067
10000   0.004310804419219494
\end{filecontents}

\begin{filecontents}{Global_SQ_15_Exact.dat}
1   13.420698165893555
10  2.3714964389801025
100 0.3159526288509369
1000    0.049706362187862396
10000   0.0065986053086817265
\end{filecontents}

\begin{filecontents}{Global_SQ_20_Exact.dat}
1   33.47259521484375
10  4.501317024230957
100 0.5924937129020691
1000    0.06938272714614868
10000   0.007024501916021109
\end{filecontents}

\begin{filecontents}{Global_SQ_10_Approx.dat}
1   449.0643005371094
10  168.85121154785156
100 42.3081169128418
1000    9.3319730758667
10000   2.1283891201019287
\end{filecontents}

\begin{filecontents}{Global_SQ_15_Approx.dat}
1   2364.20751953125
10  719.1454467773438
100 120.0828857421875
1000    28.740819931030273
10000   4.896973609924316
\end{filecontents}

\begin{filecontents}{Global_SQ_20_Approx.dat}
1   14454.830078125
10  2386.60400390625
100 402.56536865234375
1000    54.15808868408203
10000   5.545592308044434
\end{filecontents}

\hspace{-1.125em}

\begin{tikzpicture}
\begin{axis}[xmode=log,xticklabels=\empty, log basis x={10}, xminorticks=false, ymode=log, width=0.6\columnwidth, height=3.175cm, ymax=750000, ymin =0.000125, ylabel style={align=center}, legend style={at={(2.1,0.5)},anchor=east}, legend entries={\phantom{x}$d=2^{10}$,Exact,\phantom{x}$d=2^{15}$,Approx.,\phantom{x}$d=2^{20}$},legend columns=2,legend style={/tikz/every even column/.append style={column sep=0.2cm}},legend style={font=\small},legend cell align=left]

\addlegendimage{only marks,mark=o}
\addlegendimage{no markers,black,thick}
\addlegendimage{only marks,mark=square}
\addlegendimage{no markers,red, dashed}
\addlegendimage{only marks,mark=diamond}

\addplot[mark=o, thick, mark size=1pt] table {Global_SQ_10_Exact.dat};
\addplot[mark=square, thick, mark size=1pt] table {Global_SQ_15_Exact.dat};
\addplot[mark=diamond, thick, mark size=1pt] table {Global_SQ_20_Exact.dat};

\addplot[mark=o, red, dashed, mark options=solid, mark size=1pt] table {Global_SQ_10_Approx.dat};
\addplot[mark=square, red, dashed, mark options=solid, mark size=1pt] table {Global_SQ_15_Approx.dat};
\addplot[mark=diamond, red, dashed, mark options=solid, mark size=1pt] table {Global_SQ_20_Approx.dat};

\node[below,font=\footnotesize,yshift=0.2cm,xshift=1.0cm] at (current bounding box.north) {NMSE SQ};

\end{axis}
\end{tikzpicture}

\end{tabular}

\begin{tabular}{ll}

\begin{filecontents}{Global_HD_10_Exact.dat}
1   14.343897819519043
10  1.781842589378357
100 0.21501007676124573
1000    0.02798665687441826
10000   0.003431458491832018
\end{filecontents}

\begin{filecontents}{Global_HD_15_Exact.dat}
1   140.68878173828125
10  13.929348945617676
100 1.4495558738708496
1000    0.14826825261116028
10000   0.015166332013905048
\end{filecontents}

\begin{filecontents}{Global_HD_20_Exact.dat}
1   3709.76171875
10  369.5043640136719
100 36.94983673095703
1000    3.704280376434326
10000   0.37027066946029663
\end{filecontents}

\begin{filecontents}{Global_HD_10_Approx.dat}
1   258.8054504394531
10  31.877843856811523
100 3.990025758743286
1000    0.5090598464012146
10000   0.059473879635334015
\end{filecontents}

\begin{filecontents}{Global_HD_15_Approx.dat}
1   2902.330322265625
10  277.9559326171875
100 28.839157104492188
1000    2.938939094543457
10000   0.3033081293106079
\end{filecontents}

\begin{filecontents}{Global_HD_20_Approx.dat}
1   76438.90625
10  7609.69384765625
100 763.4954833984375
1000    76.48059844970703
10000   7.65582275390625
\end{filecontents}

\hspace{-1.15em}

\begin{tikzpicture}
\begin{axis}[xmode=log, log basis x={10}, xminorticks=false, ymode=log, width=0.6\columnwidth, height=3.175cm, ymax=750000, ymin =0.000125, xlabel={{\small number of clients~$\numclients$}}, ylabel style={align=center}]

\addplot[mark=o, thick, mark size=1pt] table {Global_HD_10_Exact.dat};
\addplot[mark=square, thick, mark size=1pt] table {Global_HD_15_Exact.dat};
\addplot[mark=diamond, thick, mark size=1pt] table {Global_HD_20_Exact.dat};

\addplot[mark=o, red, dashed, mark options=solid, mark size=1pt] table {Global_HD_10_Approx.dat};
\addplot[mark=square, red, dashed, mark options=solid, mark size=1pt] table {Global_HD_15_Approx.dat};
\addplot[mark=diamond, red, dashed, mark options=solid, mark size=1pt] table {Global_HD_20_Approx.dat};

\node[below,font=\footnotesize,yshift=0.05cm,xshift=0.9cm] at (current bounding box.north) {NMSE HSQ};

\end{axis}
\end{tikzpicture}

&

\begin{filecontents}{Global_KS_10_Exact.dat}
1   3.1625285148620605
10  0.5069677829742432
100 0.07302536070346832
1000    0.009795128367841244
10000   0.0015382185811176896
\end{filecontents}

\begin{filecontents}{Global_KS_15_Exact.dat}
1   3.188723087310791
10  0.3483521342277527
100 0.0362168624997139
1000    0.004091743379831314
10000   0.0004920597421005368
\end{filecontents}

\begin{filecontents}{Global_KS_20_Exact.dat}
1   3.194889545440674
10  0.323056697845459
100 0.032502058893442154
1000    0.0033139451406896114
\end{filecontents}

\begin{filecontents}{Global_KS_10_Approx.dat}
1   60.863426208496094
10  9.511918067932129
100 1.3659789562225342
1000    0.1766427904367447
10000   0.02654290571808815
\end{filecontents}

\begin{filecontents}{Global_KS_15_Approx.dat}
1   62.15168380737305
10  6.710112571716309
100 0.692524790763855
1000    0.07692962139844894
10000   0.008500379510223866
\end{filecontents}

\begin{filecontents}{Global_KS_20_Approx.dat}
1   61.88850021362305
10  6.2811713218688965
100 0.6301781535148621
1000 0.06351534277200699
\end{filecontents}

\hspace{-1.3em}

\begin{tikzpicture}
\begin{axis}[xmode=log, log basis x={10}, xminorticks=false, ymode=log, width=0.6\columnwidth, height=3.175cm, ymax=750000, yticklabel=\empty, ymin =0.000125, xlabel={{\small number of clients~$\numclients$}}, ylabel style={align=center}]

\addplot[mark=o, thick, mark size=1pt] table {Global_KS_10_Exact.dat};
\addplot[mark=square, thick, mark size=1pt] table {Global_KS_15_Exact.dat};
\addplot[mark=diamond, thick, mark size=1pt] table {Global_KS_20_Exact.dat};

\addplot[mark=o, red, dashed, mark options=solid, mark size=1pt] table {Global_KS_10_Approx.dat};
\addplot[mark=square, red, dashed, mark options=solid, mark size=1pt] table {Global_KS_15_Approx.dat};
\addplot[mark=diamond, red, dashed, mark options=solid, mark size=1pt] table {Global_KS_20_Approx.dat};

\node[below,font=\footnotesize,yshift=0.55cm,xshift=0.9cm] at (current bounding box.north) {NMSE KSQ};

\end{axis}
\end{tikzpicture}

\end{tabular}
    \vspace{-2mm}
    \captionsetup{font=small}
    \caption{NMSE comparison between exact and~approximation-based aggregation for vanilla~SQ, SQ using the randomized~Hadamard transform~(HSQ), and~SQ using~Kashin's representation~(KSQ) for global scales with~$\numshares = 3$ shares and various vector dimensions~$d$ and number of clients~$\numclients$.}
    \label{fig:approx_sq}
    \vspace{-5mm}
\end{figure}
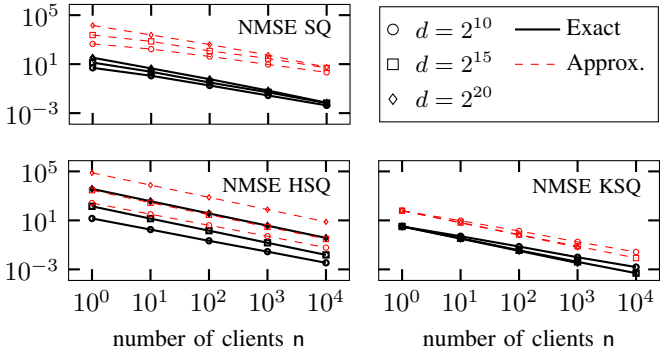

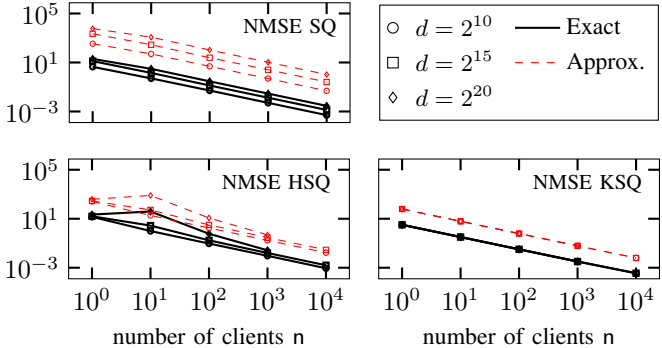
\begin{figure}[htb!]
    \centering
    \begin{tabular}{l}

\begin{filecontents}{Local_SQ_10_Exact.dat}
1   4.413116931915283
10  0.5016255378723145
100 0.051450617611408234
1000    0.005058636423200369
10000   0.0005048065213486552
\end{filecontents}

\begin{filecontents}{Local_SQ_15_Exact.dat}
1   12.915895462036133
10  1.4079484939575195
100 0.13243228197097778
1000    0.013327332213521004
10000   0.001337362569756806
\end{filecontents}

\begin{filecontents}{Local_SQ_20_Exact.dat}
1   19.654977798461914
10  3.0359644889831543
100 0.2883755564689636
1000    0.028804445639252663
10000   0.0028708933386951685
\end{filecontents}

\begin{filecontents}{Local_SQ_10_Approx.dat}
1   341.3491516113281
10  50.189361572265625
100 4.84820032119751
1000    0.48997074365615845
10000   0.0489027313888073
\end{filecontents}

\begin{filecontents}{Local_SQ_15_Approx.dat}
1   2207.526123046875
10  280.28277587890625
100 24.521160125732422
1000    2.5480170249938965
10000   0.25544050335884094
\end{filecontents}

\begin{filecontents}{Local_SQ_20_Approx.dat}
1   5885.90673828125
10  1157.8472900390625
100 104.76541900634766
1000    10.444306373596191
10000   1.0434718132019043
\end{filecontents}

\hspace{-1.125em}

\begin{tikzpicture}
\begin{axis}[xmode=log,xticklabels=\empty, log basis x={10}, xminorticks=false, ymode=log, width=0.6\columnwidth, height=3.175cm, ymax=750000, ymin =0.000125, ylabel style={align=center}, legend style={at={(2.1,0.5)},anchor=east}, legend entries={\phantom{x}$d=2^{10}$,Exact,\phantom{x}$d=2^{15}$,Approx.,\phantom{x}$d=2^{20}$},legend columns=2,legend style={/tikz/every even column/.append style={column sep=0.2cm}},legend style={font=\small},legend cell align=left]

\addlegendimage{only marks,mark=o}
\addlegendimage{no markers,black,thick}
\addlegendimage{only marks,mark=square}
\addlegendimage{no markers,red, dashed}
\addlegendimage{only marks,mark=diamond}

\addplot[mark=o, thick, mark size=1pt] table {Local_SQ_10_Exact.dat};
\addplot[mark=square, thick, mark size=1pt] table {Local_SQ_15_Exact.dat};
\addplot[mark=diamond, thick, mark size=1pt] table {Local_SQ_20_Exact.dat};

\addplot[mark=o, red, dashed, mark options=solid, mark size=1pt] table {Local_SQ_10_Approx.dat};
\addplot[mark=square, red, dashed, mark options=solid, mark size=1pt] table {Local_SQ_15_Approx.dat};
\addplot[mark=diamond, red, dashed, mark options=solid, mark size=1pt] table {Local_SQ_20_Approx.dat};

\end{axis}

\node[below,font=\footnotesize,yshift=-0.2cm,xshift=-0.5cm] at (current bounding box.north) {NMSE SQ};

\end{tikzpicture}

\end{tabular}

\begin{tabular}{ll}

\begin{filecontents}{Local_HD_10_Exact.dat}
1   14.247207641601562
10   0.9926717877388
100  0.09437151253223419
1000 0.009394744411110878
10000    0.0009053426911123097
\end{filecontents}

\begin{filecontents}{Local_HD_15_Exact.dat}
1   15.781512260437012
10  2.76013445854187
100 0.1726420521736145
1000    0.015997035428881645
10000   0.0015931783709675074
\end{filecontents}

\begin{filecontents}{Local_HD_20_Exact.dat}
1   21.880842208862305
10  39.03666305541992
100 0.5946089029312134
1000    0.026391485705971718
\end{filecontents}

\begin{filecontents}{Local_HD_10_Approx.dat}
1   257.96246337890625
10  18.590879440307617
100 1.7652137279510498
1000    0.1705094277858734
10000   0.016942184418439865
\end{filecontents}

\begin{filecontents}{Local_HD_15_Approx.dat}
1   286.4075012207031
10  52.628623962402344
100 3.1287612915039062
1000    0.2907885015010834
10000   0.02876790426671505
\end{filecontents}

\begin{filecontents}{Local_HD_20_Approx.dat}
1   388.19586181640625
10  797.2977905273438
100 11.614333152770996
1000    0.4787542223930359
\end{filecontents}

\hspace{-1.15em}

\begin{tikzpicture}
\begin{axis}[xmode=log, log basis x={10}, xminorticks=false, ymode=log, width=0.6\columnwidth, height=3.175cm, ymax=750000, ymin =0.000125, xlabel={{\small number of clients~$\numclients$}}, ylabel style={align=center}]

\addplot[mark=o, thick, mark size=1pt] table {Local_HD_10_Exact.dat};
\addplot[mark=square, thick, mark size=1pt] table {Local_HD_15_Exact.dat};
\addplot[mark=diamond, thick, mark size=1pt] table {Local_HD_20_Exact.dat};

\addplot[mark=o, red, dashed, mark options=solid, mark size=1pt] table {Local_HD_10_Approx.dat};
\addplot[mark=square, red, dashed, mark options=solid, mark size=1pt] table {Local_HD_15_Approx.dat};
\addplot[mark=diamond, red, dashed, mark options=solid, mark size=1pt] table {Local_HD_20_Approx.dat};

\end{axis}

\node[below,font=\footnotesize,yshift=-0.2cm,xshift=1.35cm] at (current bounding box.north) {NMSE HSQ};

\end{tikzpicture}

&

\begin{filecontents}{Local_KS_10_Exact.dat}
1   3.2304892539978027
10  0.3197852075099945
100 0.031637873500585556
1000    0.0031987398397177458
10000   0.0003499627055134624
\end{filecontents}

\begin{filecontents}{Local_KS_15_Exact.dat}
1   3.1922898292541504
10  0.3194461464881897
100 0.031818486750125885
1000    0.00321498466655612
10000   0.0003598871408030391
\end{filecontents}

\begin{filecontents}{Local_KS_20_Exact.dat}
1   3.1950736045837402
10  0.3181007504463196
100 0.031821925193071365
1000    0.003224290208891034
\end{filecontents}

\begin{filecontents}{Local_KS_10_Approx.dat}
1   62.078887939453125
10  6.119198799133301
100 0.6189059019088745
1000    0.060872189700603485
10000   0.006313205696642399
\end{filecontents}

\begin{filecontents}{Local_KS_15_Approx.dat}
1   62.14887619018555
10  6.17372465133667
100 0.6165630221366882
1000    0.061802349984645844
10000   0.006245025433599949
\end{filecontents}

\begin{filecontents}{Local_KS_20_Approx.dat}
1   61.872215270996094
10  6.194012641906738
100 0.6190811395645142
1000    0.06195105239748955
\end{filecontents}

\hspace{-1.3em}

\begin{tikzpicture}
\begin{axis}[xmode=log, log basis x={10}, xminorticks=false, ymode=log, width=0.6\columnwidth, height=3.175cm, ymax=750000, ymin =0.000125, xlabel={{\small number of clients~$\numclients$}}, ylabel style={align=center},yticklabels=\empty]

\addplot[mark=o, thick, mark size=1pt] table {Local_KS_10_Exact.dat};
\addplot[mark=square, thick, mark size=1pt] table {Local_KS_15_Exact.dat};
\addplot[mark=diamond, thick, mark size=1pt] table {Local_KS_20_Exact.dat};

\addplot[mark=o, red, dashed, mark options=solid, mark size=1pt] table {Local_KS_10_Approx.dat};
\addplot[mark=square, red, dashed, mark options=solid, mark size=1pt] table {Local_KS_15_Approx.dat};
\addplot[mark=diamond, red, dashed, mark options=solid, mark size=1pt] table {Local_KS_20_Approx.dat};

\end{axis}

\node[below,font=\footnotesize,yshift=-0.1cm,xshift=1.0cm] at (current bounding box.north) {NMSE KSQ};
\end{tikzpicture}

\end{tabular}
    \vspace{-2mm}
    \captionsetup{font=small}
    \caption{NMSE comparison between exact and approximation-based aggregation for~SQ, Hadamard~SQ~(HSQ), and~Kashin~SQ~(KSQ) for local scales with~$\numshares = 3$ shares, various vector dimensions~$d$, and number of clients~$\numclients$.}
    \label{fig:approx_sq_local}
    \vspace{-2mm}
\end{figure}

\subsection{Detailed Communication Costs}
\label{sec:communication_costs_detailed}
Here, we provide more insights into the concrete communication costs for our secure aggregation protocols in~\secref{sec:mpc-agg}.

In~Tab.~\ref{tab:fl_communication_full} we provide the detailed communication costs for the secure aggregation approaches discussed in~\secref{sec:mpc-agg} when training the~LeNet architecture for image classification on the~MNIST data set~\cite{pieee:LeCunBBH98} using~$1$-bit~SQ with~Kashin's representation~\cite{ARXIV:CKMT18}.  
We instantiate the~OT instances required in the preprocessing phase, as discussed in~\secref{app:mpc-protocols}, with silent OT~\cite{C:CouteauRR21}, following Prio+~\cite{SCN:AddankiGJOP22}.
Here, we can observe the significant impact of including~SepAgg~\cite{ARXIV:BAT20} in practice with performance improvements between~Approach-II and~Approach-III of up to~$16.6\times$ in the offline phase.

\begin{table}[htb!]
    \centering
    \resizebox{0.95\columnwidth}{!}{%
    \begin{tabular}{rlrrrr}
    \toprule
        && \multicolumn{2}{c}{Exact} & \multicolumn{2}{c}{Approx.} \\\cline{3-4}\cline{5-6}
        \numclients & Method & Offline & Online & Offline & Online \\\midrule
        \multirow{3}{*}{20} 
        & Approach-I   & 644.50 & 1.70 & 620.27 & 1.70 \\
        & Approach-II  & 644.50 & 0.59 & 620.27& 0.59 \\
        & Approach-III & 89.77 & 0.59 & 65.54 & 0.59 \\
        \midrule
        \multirow{3}{*}{100} 
        & Approach-I   & 3222.51 & 6.12 &3101.36 & 6.12 \\
        & Approach-II  & 3222.51 & 0.59  &3101.36 & 0.59 \\
        & Approach-III & 332.08 & 0.59   &210.93  & 0.59 \\
        \midrule
        \multirow{3}{*}{500} 
        & Approach-I   & 16112.56 & 28.24 & 15506.80& 28.24 \\
        & Approach-II  & 16112.56 & 0.59 & 15506.80& 0.59 \\
        & Approach-III & 1543.62 & 0.59 &   937.85 & 0.59 \\
        \bottomrule
    \end{tabular}
    }
    \captionsetup{font=small}
    \caption{Inter-server communication per round in~MiB for our~MNIST/LeNet benchmark for different numbers of clients~$\numclients$ per round. Training is done using~$1$-bit~SQ with~Kashin's representation~(KSQ). We compare~Approach-I~(cf.~\boxref{fig:SecAgg-MPC-Approach-I} in~\secref{sec:approach_I}), Approach-II~(cf.~\boxref{fig:SecAgg-MPC-Approach-II} in~\secref{sec:approach_II}), and~Approach-III~(cf.~\boxref{fig:SecAgg-MPC-Approach-III} in~\secref{sec:SepAgg}). Additionally, we distinguish between using an exact bit-to-arithmetic conversion and our approximation~(cf.~\secref{sec:our-approx}).}
    \label{tab:fl_communication_full}
    \vspace{-4mm}
\end{table}

In~Tab.~\ref{tab:Prio+vsours}, we compare the aggregation of bits~(i.e., when not considering quantized inputs that require scale multiplication and hence without~SepAgg~\cite{ARXIV:BAT20} being applicable) to~Prio+~\cite{SCN:AddankiGJOP22}.
For a fair comparison, we translate the approach in~Prio+~\cite{SCN:AddankiGJOP22} to our three party dishonest-majority setting.
As we can see, even for exact bit-to-arithmetic conversion, we improve over~Prio+ by factor~$2.4\times$ for~$\numclients = 10^5$.
When applying our approximate bit-to-arithmetic conversion~(cf.~\secref{sec:our-approx}), this improvement increases to a factor of~$4\times$.

\begin{table}[htb!]
    \centering
    \scalebox{0.95}{%
    \begin{tabular}{lrrrr}
    \toprule
     Approach & $\numclients=10^2$& $\numclients=10^3$& $\numclients=10^4$& $\numclients=10^5$ \\ \midrule
    Prio+~\cite{SCN:AddankiGJOP22}& 9.45 & 94.50& 945.04& 9450.44\\ 
    Approach-III (Exact)& 3.94& 39.42&394.17& 3941.66  \\ 
    Approach-III (Approx.) & 2.37& 23.75& 237.45& 2374.53 \\ \bottomrule
    \end{tabular}%
    }
    \captionsetup{font=small}
    \caption{Total communication in~MiB of~Approach-III~(cf.~\boxref{fig:SecAgg-MPC-Approach-III} in~\secref{sec:SepAgg}) compared to~Prio+~\cite{SCN:AddankiGJOP22} to calculate the sum of bits for different numbers of clients~$\numclients$ and dimension~$\numcoordinates=1000$. For~Approach-III, we distinguish between using an exact bit-to-arithmetic conversion as in~Prio+~\cite{SCN:AddankiGJOP22} and our approximation~(cf.~\secref{sec:our-approx}).}
    \label{tab:Prio+vsours}
    \vspace{-4mm}
\end{table}

\subsection{Performance Evaluation}
\label{sec:sys_performance}
We implemented an extensive end-to-end~FL evaluation and~MPC simulation framework. We describe our implementation, the parameters for our accuracy evaluation, and present the results.

\myparatight{Implementation}
Our implementation is written in~Python based on~PyTorch.
It supports~multi-GPU acceleration, also for our~MPC simulation.
We used a subset of this framework for measuring the accuracy of~SepAgg~(cf.~\secref{sec:sepagg_eval}) and our approximate bit conversion~(cf.~\appref{sec:approx_evaluation}), and we will describe extensions in~\secref{sec:def_eval} to incorporate evaluations of poisoning attacks and defenses.

Our framework provides a command-line interface to run~FL training tasks and observe the resulting training as well as test accuracy.
Upon execution, the framework distributes training data among the specified number of virtual clients that locally perform training. The server(s) perform aggregation using~FedAvg.
When the~MPC simulation is enabled, the clients' input will be secret-shared before aggregation and the protocol described in~\secref{sec:SepAgg} will be executed locally.
Note that our goal is not to assess the run-time performance of the~MPC protocol but rather precisely measure the impact on accuracy.
Our implementation supports all exact and approximate secure aggregation variants described in this paper.

\myparatight{Parameters}
We evaluate the accuracy on the following standard FL tasks for image classification: training~(i)~LeNet on~MNIST~\cite{pieee:LeCunBBH98} for~1000 rounds and~(ii)~ResNet9 on~CIFAR-10~\cite{CIFARDataset} for~8000 rounds. For all tasks, we set a client batch size of~8, a learning rate of~0.05, and perform~5 local client train steps per round.
For~MNIST, we run training using~$N \in \{200, 1000, 5000\}$ clients and choose 10\% of the clients at random per round.
Due to the memory constraints of our system~(that simulates all clients at once), we restrict training for~CIFAR10 to~$N = 1000$ clients and select~$\numclients = 40$ per round.
As we observed a significant loss in accuracy for plain~SQ in our accuracy evaluation for approximate bit conversion as well as~SepAgg~(cf.~\secref{sec:approx_evaluation}), we focus our evaluation on more accurate linear quantization schemes, i.e., HSQ and~KSQ.
For the~MPC simulation of our approximate secure aggregation following~Approach-III~(cf.~Fig.~\ref{fig:SecAgg-MPC-Approach-III}), we choose a three-server dishonest majority setting.

\begin{figure}[tb!]
    \centering
    \includegraphics[width=0.95\columnwidth]{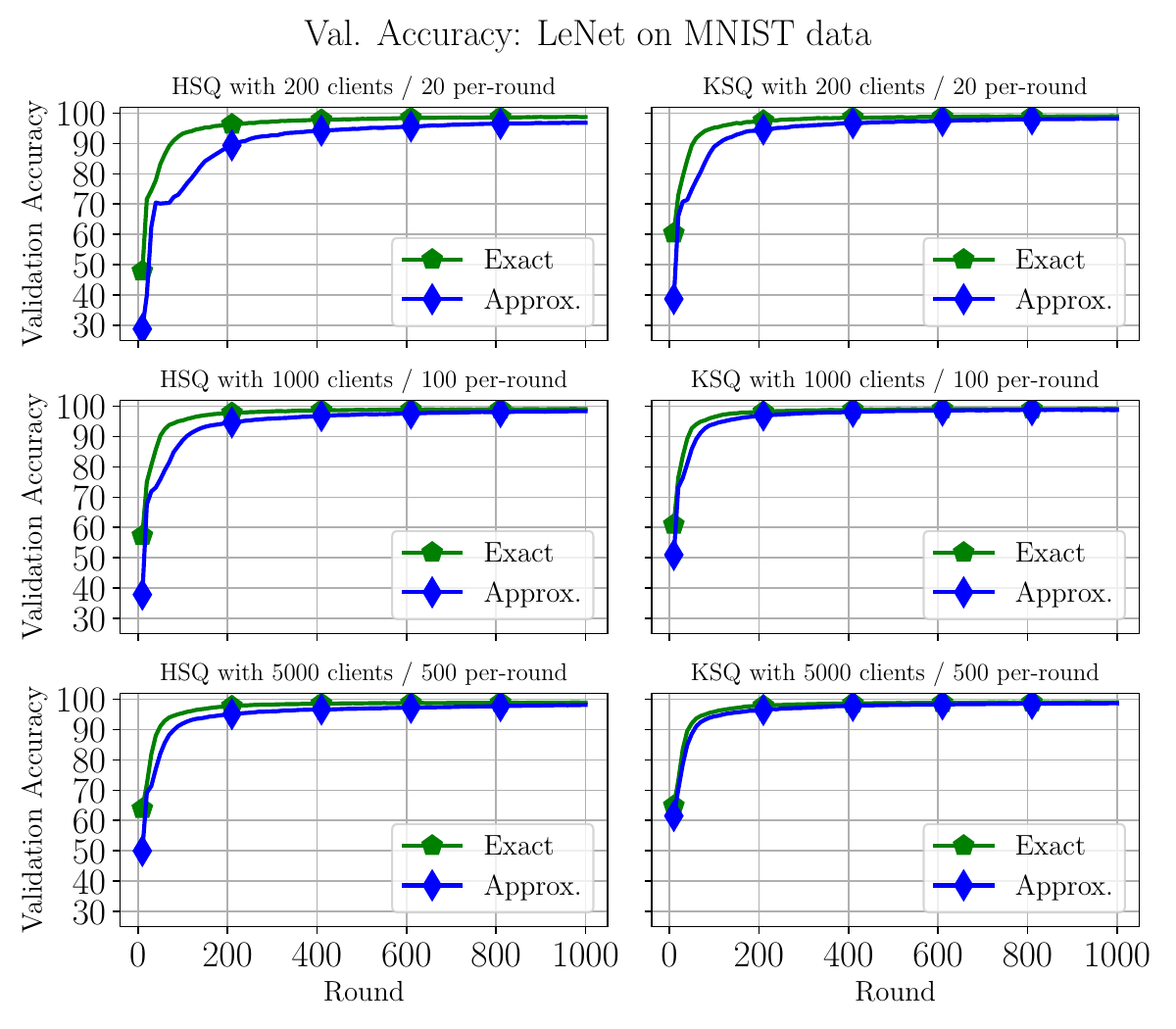}
    \vspace{-2mm}
    \captionsetup{font=small}
    \caption{Validation accuracy for training~LeNet on the~MNIST data set for~$\numclients \in \{200, 1000, 5000\}$ clients when selecting~10\% of the clients at random per round~($\numclients$) for~SQ with~Hadamard~(HSQ, left) and with~Kashin's representation~(KSQ, right); \enquote{Exact} denotes the insecure baseline, \enquote{Approx} the simulation of our~MPC-based approximate secure aggregation including~SepAgg~(cf.~\boxref{fig:SecAgg-MPC-Approach-III}).}
    \label{fig:fl_mnist_lenet}
    \vspace{-5mm}
\end{figure}

\myparatight{Results}
The results for the~MNIST/LeNet training are given in~Fig.~\ref{fig:fl_mnist_lenet}.
Validation accuracy for our approximate version converges to almost the same final accuracy as the insecure exact aggregation.
Specifically, in the final round of training, the difference between the two is diminished to~0.77\% and~0.33\% for~HSQ and~KSQ for~$N = 5000$, respectively.
Similar observations apply to~CIFAR10/ResNet9 in~Fig.~\ref{fig:fl_cifar_resnet}.
However, here the difference between the exact and approximate version for~KSQ is higher with~3.14\% in the final round.
This gap is expected due to the significantly lower number of clients per round, for which our approximate bit conversion and~SepAgg technique result in a comparatively high~NMSE over the baseline~(cf.~Figs.~\ref{fig:sepagg_sq} and~\ref{fig:approx_sq}).
We expect this effect to vanish for a real cross-device setting with thousands of participants per round ~(due to the demonstrated linear decay of the~NMSE when increasing~$\numclients$), which we unfortunately cannot simulate with complex model architectures due to hardware limitations.
Additionally, one may use a~\emph{hybrid approach}, where training uses the approximate version for initial rounds until a baseline accuracy is reached, whereas secure exact training~(potentially including only the~SepAgg~\cite{ARXIV:BAT20} approximation but not our approximate bit-to-arithmetic conversion) is used for fine tuning up to the desired target accuracy.

\begin{figure}[tb!]
    \centering
    \includegraphics[width=0.95\columnwidth]{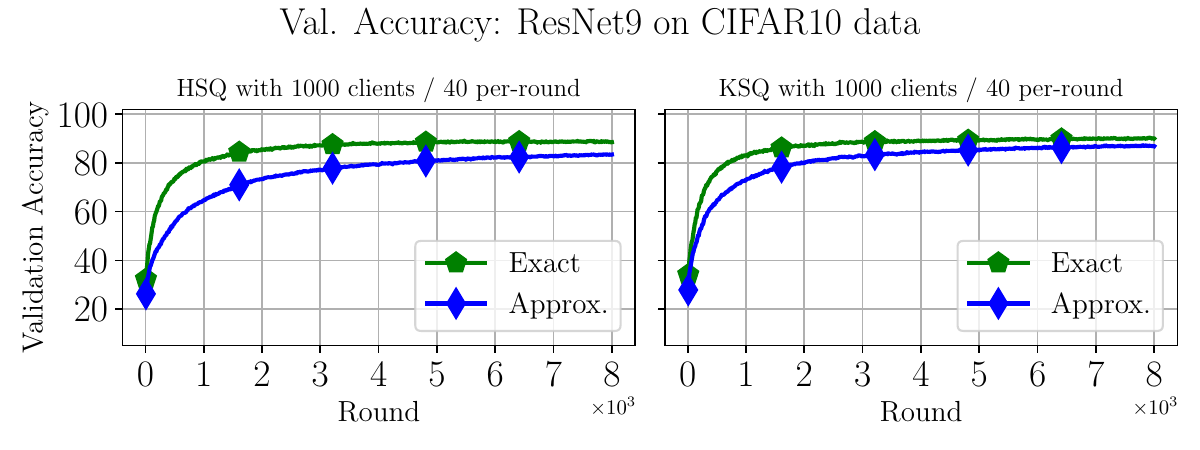}
    \vspace{-2mm}
    \captionsetup{font=small}
    \caption{Validation accuracy for training~ResNet9 on the~CIFAR10 data set for~$N = 1000$ clients with random~$\numclients = 40$ selected per round for~quantization techniques and protocols as in~Fig.~\ref{fig:fl_mnist_lenet}.}
    \label{fig:fl_cifar_resnet}
    \vspace{-5mm}
\end{figure}

In~Tab.~\ref{tab:fl_communication}, we additionally compare the exact inter-server~MPC communication cost for a naive~MPC implementation of the exact computation to our optimized approximate version including~SepAgg. As we can see, we improve the offline communication by factor~$\approx 15\times$. For the online communication, we can see a wide range of improvement factors from~$2.9\times$ to~$48\times$ for~MNIST with~$\numclients = 500$.
This highlights the positive impact when utilizing the~SepAgg approach for aggregating an increasingly large number of rows non-interactively.
Note that there are slight differences in communication overhead for~HSQ and~KSQ.
This is because for an efficient~GPU-friendly implementation of the randomized~Hadamard transform, which we use for both rotating the gradients in~HSQ and for calculating~Kashin's coefficients in~KSQ, we require that the gradients' size are a power of~2.
In~\appref{app:quantization_overhead}, we detail how we can minimize the resulting overhead by dividing the gradients into chunks, and we also give the exact number of bits per gradient that we assume in our calculations for each algorithm.

\begin{table}[htb!]
    \centering
    \resizebox{\columnwidth}{!}{%
    \begin{tabular}{lrrrrrr}
    \toprule
         &&& \multicolumn{2}{c}{Naive Exact~(cf.~Fig.~\ref{fig:SecAgg-MPC-Approach-I})} & \multicolumn{2}{c}{Our Approx.~(cf.~Fig.~\ref{fig:SecAgg-MPC-Approach-III})} \\\cline{4-5}\cline{6-7}
        Benchmark & \numclients & Method & Offline & Online & Offline & Online \\\midrule
        \multirow{2}{*}{\makecell[l]{MNIST/\\LeNet}} 
        & \multirow{2}{*}{20} 
        &   HSQ & 572.89 & 1.51 & 58.26 & 0.52 \\
        & & KSQ & 644.50 & 1.70 & 65.54 & 0.59 \\
        \midrule
        \multirow{2}{*}{\makecell[l]{MNIST/\\LeNet}} 
        & \multirow{2}{*}{100} 
        &   HSQ & 2864.46 & 5.44 & 187.49 & 0.52 \\
        & & KSQ & 3222.51 & 6.12 & 210.93 & 0.59\\
        \midrule
        \multirow{2}{*}{\makecell[l]{MNIST/\\LeNet}} 
        & \multirow{2}{*}{500} 
        &   HSQ & 14322.28 & 25.10 & 833.64 & 0.52 \\
        & & KSQ & 16112.56 & 28.24 & 937.85 & 0.59 \\
        \midrule
        \multirow{2}{*}{\makecell[l]{CIFAR10/\\ResNet9}} 
        & \multirow{2}{*}{40}  
        &   HSQ &  87079.45 & 189.27 & 6883.13 & 39.85 \\
        & & KSQ & 100828.84 & 219.15 & 7969.94 & 46.14\\
        \bottomrule
    \end{tabular}
    }
    \captionsetup{font=small}
    \caption{Inter-server communication per round for our benchmarks for different numbers of clients~$\numclients$ in~MiB.}
    \label{tab:fl_communication}
    \vspace{-4mm}
\end{table}

\section{Defending Untargeted Poisoning Attacks}
\label{sec:defense}
Our defense called~\FLdefensename{} is designed to mitigate untargeted poisoning attacks in the context of secure quantized aggregation.
These attacks pose a significant threat to the deployment of~FL for two reasons:~(i)~Untargeted attacks are particularly difficult to detect because, ignorant of the attack, service providers are unaware that they could have achieved a greater accuracy.
(ii)~Even a minor drop in accuracy can cause enormous~(competitive) damage~\cite{SP:ShejwalkarHKR22}.

Most proposed untargeted poisoning attacks on~FL use the~(unrealistic) assumption that the adversary~$\Adv$ is aware of either the aggregation rule~\cite{usenix:FangCJG20} or all benign updates~\cite{nips:BaruchBG19}.
However, the~\emph{Min-Max} attack proposed by~\cite{NDSS:SheHou21} defies this assumption and constitutes the state-of-the-art attack. 
This attack prevents the manipulations from being detected by allowing the adversary to compute representative benign updates using some clean training data; the attacker can then limit the maximum distance of the manipulated update to any other update by the maximum distances of any two benign updates.
This ensures that the malicious gradients are sufficiently similar to the set of benign gradients.
We refer to~\cite[§IV]{NDSS:SheHou21} for more specifics on the attack. 

In addition to removing assumptions about the adversary's knowledge,~\cite{NDSS:SheHou21} empirically shows that the~\emph{Min-Max} attack outperforms the former state-of-the-art poisoning attack~\cite{nips:BaruchBG19} for almost all tested datasets. 
However, since all benchmarks in~\cite{nips:BaruchBG19,NDSS:SheHou21} were performed on~FL schemes without quantization, the impact of the~\emph{Min-Max} attack on quantized~FL schemes is unclear.
Hence, we first test the attack's effectiveness in our framework using the open-sourced code\footnote{\url{https://github.com/vrt1shjwlkr/NDSS21-Model-Poisoning}} as baseline.
As we discuss in~\secref{sec:def_eval}, we observe that the attacks are effective even in the context of quantization.

\subsection{Our Defense: \FLdefensename}
\label{sec:our-defense}
From an intuitive standpoint, the adversary in an untargeted poisoning attack seeks to manipulate the global update with malicious updates to deviate it as much as possible from the result of an ideal benign training while evading potentially deployed detection mechanisms.
This baseline observation was also used by earlier works to propose defense mechanisms~\cite{NDSS:SheHou21,USENIX:NRCYMFMMMKSSZ21,SP:RatheeSWP23}, however, those cannot be combined trivially with~\FLname{} without having to de-quantize all updates and running expensive secure computation machinery.

We now outline the general design of~\FLdefensename{} and show its effectiveness against the~\emph{Min-Max} attack~\cite{NDSS:SheHou21}.
In~\appref{sec:mpcdefense}, we describe how to efficiently instantiate it in an~MPC-friendly manner to reduce communication overhead.

\begin{figure*}[htb!]
    \centering
    \includegraphics[width=0.75\textwidth]{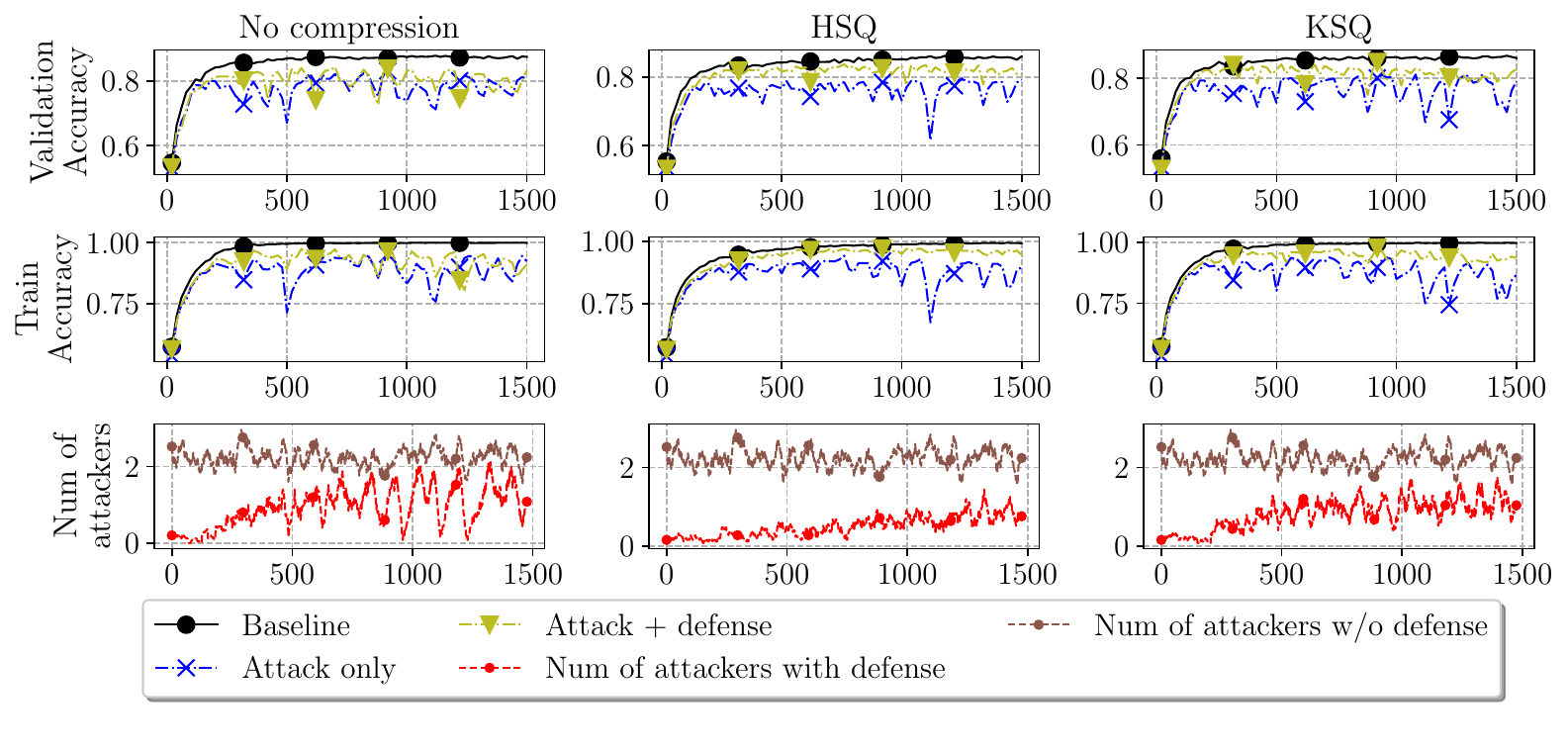}
    \vspace{-4mm}
    \captionsetup{font=small}
    \caption{Effect of~\emph{Min-Max} attack~\cite{NDSS:SheHou21} on training~ResNet9 with~CIFAR10 for~1500 aggregation rounds with and without our defense~\FLdefensename{} assuming~20\% of~$N = 50$ clients are corrupted. Note that the number of attackers included in the global update varies even without defense due to random client selection.}
    \label{fig:fl_defense_resnet}
    \vspace{-5mm}
\end{figure*}

\begin{algorithm}[htb!]
    \captionsetup{font=small}
    \caption{Our Defense: \FLdefensename{}}\label{alg:defense}
    \small 
    \begin{algorithmic}[1]
        \Procedure{\FLdefensename{}}{$\{\qvec{X_i}, \stepmin{X_i}, \stepmax{X_i}\}_{i \in [\numclients]}$}
        \Statex \Comment{Gradient Aggregation including poisoned ones.}
        \State $\aggFull \gets \Call{Aggregate}{\{\qvec{X_i}, \stepmin{X_i}, \stepmax{X_i}\}_{i \in \numclients}}$
        \Statex \Comment{$\normtwo{}$-norm Computation}
        \State $\normtwoavg \gets 0$
        \For{$k \gets 1$ to $\numclients$}
            \State $\normtwo{k} \gets \Call{L2-NormQ}{\qvec{X_k}, \stepmin{X_k}, \stepmax{X_k}}$
            \State $\normtwoavg \gets \normtwoavg + \normtwo{k}$
        \EndFor
        \Statex \Comment{$\normtwo{}$-norm based Scaling}
        \State $\normtwoavg \gets \normtwoavg/\numclients$ \Comment{Average of $\normtwo{}$-norms}
        \For{$k \gets 1$ to $\numclients$}
            \If{$\normtwo{k} > \defthreshold \cdot \normtwoavg$}
                \State $\stepmin{X_k} \gets \stepmin{X_k} \cdot (\defthreshold \cdot \normtwoavg) / \normtwo{k}$
                \State $\stepmax{X_k} \gets \stepmax{X_k} \cdot (\defthreshold \cdot \normtwoavg) / \normtwo{k}$
            \EndIf
        \EndFor
        \Statex \Comment{Cosine-distance based Filtering}
        \For{$k \gets 1$ to $\numclients$}
            \State $\theta^k \gets \Call{Cosine}{(\qvec{X_k}, \stepmin{X_k}, \stepmax{X_k}), \aggFull}$
        \EndFor
        \State $\rejectlist \gets \textsc{Top\mbox{-}K}(\vec{\theta}, \psi)$ \Comment{Returns $k$ for which $\theta^k > \psi$}
        \Statex \Comment{Aggregation of filtered updates}
        \State $\aggDef \gets \Call{Aggregate}{\{\qvec{X_i}, \stepmin{X_i}, \stepmax{X_i}\}_{i \in [\numclients], i \notin \rejectlist}}$
        \EndProcedure
    \end{algorithmic}
\end{algorithm}

\myparatight{Approach}
\label{sec:defense-approach}
\FLdefensename{} uses a hybrid approach, combining ideas from existing FL defenses based on the~$\normtwo{}$-norm~\cite{AISTATS:BagdasaryanVHES20,USENIX:NRCYMFMMMKSSZ21,sun2019can} and cosine similarity~\cite{USENIX:NRCYMFMMMKSSZ21,ndss:CaoF0G21}.
Several works like~\cite{USENIX:NRCYMFMMMKSSZ21} compute these metrics for each client pair, resulting in expensive computation.
In contrast, we aggregate all updates, including the poisoned ones, to produce the vector~$\aggFull$, which we then utilize as the reference.
At a high level,~$\normtwo{}$-norm based scaling of the gradient vectors is used at first to bound the impact of malicious contributions that are potentially overlooked~(i.e., not filtered) in later stages.
In a second step, local updates that significantly deviate from the average update direction are considered to be manipulated and, thus, excluded.
Concretely, \FLdefensename{} consists of the following steps:

\begin{enumerate}[wide, labelwidth=!, labelindent=0pt, parsep=2pt]
    \item
    \emph{$\normtwo{}$-norm-based Scaling.}
    In this step, the~$\normtwo{}$-norm of each gradient vector is compared against a public threshold multiplied with the average of the~$\normtwo{}$-norms.
    Let~$\defthreshold$ denote the threshold and~$\normtwoavg$ denote the average of the~$\normtwo{}$-norms across all clients.
    If~$\normtwo{X} > \defthreshold \cdot \normtwoavg$ for a gradient vector~$\vec{X}$, the vector is scaled\footnote{Scaling a quantized vector requires simply scaling the scales~(cf.~\secref{sec:secure-quantized-aggregation}).} by a factor of~$\defthreshold \cdot \normtwoavg/\normtwo{X}$.
    This ensures that no gradient has an~$\normtwo{}$-norm greater than~$\defthreshold \cdot \normtwoavg$.
    \item
    \emph{Cosine-distance-based Filtering.}
    This step computes the cosine distance for each gradient from the reference vector~$\aggFull$. After that, another aggregation is carried out on the updated vectors, excluding the top-$\psi$ vectors with the highest cosine distances using a secure~$\textsc{Top\mbox{-}K}$ algorithm, which involves sorting and selecting the first~$K$ items. Here, $\psi$ is either a a known bound~(i.e., defined in advance by the service provider) or an accepted percentile determined based on an assumed attacker ratio following a normal distribution.
\end{enumerate}

Alg.~\ref{alg:defense} provides the formal details of~\FLdefensename{}, including support for quantized aggregation.
Note that we use an optimizer with momentum for~FedAvg which ensures that even if the majority of clients picked at random in a training round happens to be malicious, the optimization is still based on benign contributions from the previous round. 

\smallskip
\myparatight{MPC-friendly Variant}
\label{sec:mpcdefense}
A naive secure realization of~\FLdefensename{} outlined in~Alg.~\ref{alg:defense} utilizing~MPC will yield an inefficient solution, particularly over a ring architecture.
This is due to some of the algorithm's non-MPC friendly primitives, for which we discuss viable alternatives below.

\begin{enumerate}[wide, labelwidth=!, labelindent=0pt, parsep=2pt]
    \item
    \emph{(Line 5 in Alg.~\ref{alg:defense}).}
    The computation of~$\normtwo{}$-norm within~\textsc{L2-NormQ} (cf.~Alg.~\ref{alg:defense-ltwonorm} in~\appref{app:defense-additional}) involves calculating the square root of a ring element, which corresponds to a decimal value.
    To alleviate this, we ask the clients to submit the~$\normtwo{}$-norm of their gradient vectors and the~MPC servers verify them.
    To be more specific, the provided~$\normtwo{}$-norm is squared and compared to a squared-$\normtwo{}$-norm computed by the~MPC servers via a secure comparison protocol~\cite{FC:CatSax10,fc:MakriRVW21}.
    \item
    \emph{(Lines 11 \& 12 in Alg.~\ref{alg:defense}).}
    When using~$\normtwo{}$-norm scaling, the scales of the gradient vector must be bounded if the corresponding~$\normtwo{}$-norm is greater than the limit.
    In particular, the procedure entails dividing the vector by its~$\normtwo{}$-norm.
    Because division is expensive in~MPC over rings, we ask the client to submit the reciprocal of the~$\normtwo{}$ norm as well, similar to the method suggested above.
    The provided value is validated by multiplying it by the~$\normtwo{}$-norm supplied by the client and checking whether the product is a~$1$.
    \item \emph{(Line 16 in Alg.~\ref{alg:defense}).}
    The calculation of the cosine distance between the gradient vector and the reference~$\aggFull$ requires computing the~$\normtwo{}$-norm of~$\aggFull$ and dividing by it, as shown in~\textsc{Cosine}~(cf.~Alg.~\ref{alg:defense-cosine} in~\appref{app:defense-additional}).
    However, the cosine distances are only used to filter out the top-$\psi$ vectors with the highest cosine distance, as shown in~Alg.~\ref{alg:defense}~(Line 18). As a result, we may safely disregard the division by the~$\normtwo{}$-norm of~$\aggFull$ when computing the cosine distance for our purpose.
\end{enumerate}

In addition to the aforementioned optimizations, most of the values computed as part of the~$\aggFull$ computation in the~\textsc{Aggregate} function~(Line 2 in~Alg.~\ref{alg:defense}) can be reused in the next steps, thus lowering the overhead of the defense scheme over simple aggregation.
\appref{app:defense-additional} provides details on the sub-protocols utilized in our defense algorithm given in Alg.~\ref{alg:defense}.

\subsection{Effectiveness Evaluation}
\label{sec:def_eval}
\vspace{-1mm}
%
To analyze the effectiveness of~\FLdefensename{}, we test it against the~\emph{Min-Max} attack~\cite{NDSS:SheHou21}.

\vspace{-1mm}
\myparatight{Setup}
Training involves~$N = 50$ clients of which~20\%~(as in~\cite{NDSS:SheHou21}) are corrupted.
Per training iteration, a random subset of~$\numclients = 10$ clients is chosen to train the global model. 
Each client~$\client{}$ runs its local training for~10 iterations with batches of~$B=128$ samples and a learning rate of~$\eta_{\client{}}=0.1$.
The defense threshold~$\defthreshold$ is set to~$3$ and the momentum is~$0.9$.\footnote{\cite{SP:ShejwalkarHKR22} points out that assuming more than~1\% of corrupted clients is unrealistic for most scenarios. However, in our experiments the attack failed to notably reduce the accuracy with such a low corruption level. Thus, we tested against~20\% of corrupted clients as in the original attack paper~\cite{NDSS:SheHou21}.}

\myparatight{Experimental Results}
Our results when training~ResNet9 on~CIFAR10~(i)~without an attack, (ii)~under attack without defense, and~(iii)~under attack with~\FLdefensename{} in place are given in~Fig.~\ref{fig:fl_defense_resnet}.
We compare the attack's effect when no compression is in place as well as when applying~SQ with the randomized~Hadamard transform~(HSQ) or with~Kashin's representation~(KSQ). 
We also provide similar results for training~VGG11 in~\appref{sec:defense_eval_vgg11}. 
As shown in~Fig.~\ref{fig:fl_defense_resnet}, our re-implementation of the~\emph{Min-Max} attack substantially reduces the validation accuracy by up to~20\% when no defense is in place.
This is in line with~\cite{NDSS:SheHou21}, where the authors report an accuracy degradation between~10.1\% and~42.1\% for~CIFAR10, depending on the model architecture and the aggregation scheme.
Furthermore, our experiments show that quantization does not significantly change the impact of the attack.
When~\FLdefensename{} is enabled, we can remove more than half of the malicious updates in each training iteration compared to when no defense is in place. In fact, quantization supports our defense as the additional noise added to synchronized malicious updates overturns the attacker's ability of staying just below the detection threshold. As a result, compared to unprotected training, the validation accuracy decreases by at most~7.7\% for~HSQ and~10.7\% for~KSQ. 
\section{Conclusion}
\label{sec:conclusion}
In this paper, we proposed~\FLname, the first secure aggregation framework for~FL that can efficiently process quantized updates with an optimal client-server communication as low as~1-bit per coordinate.
Together with our novel defense against state-of-the-art~(untargeted) poisoning attacks, this is the first work to unite security, efficiency, and robustness in a single and coherent framework.
As part of future work, we plan to investigate how to efficiently extend our~MPC protocols to guarantee also correctness and not~\enquote{only} privacy when dealing with malicious~MPC servers.

\section*{Acknowledgments}
This project received funding from the ERC under the EU's Horizon 2020 research and innovation program~(grant agreement No. 850990 PSOTI). It was co-funded by the DFG within SFB~1119 CROSSING/236615297 and GRK~2050 Privacy \& Trust/251805230, and by the  Private AI Collaborative Research Institute, funded by Intel, Avast, and VMware.

\bibliographystyle{IEEEtranS}
\bibliography{references}
\clearpage
\appendices
\section{Related Work \& Background Information}
\label{sec:related-work}
This section provides additional details regarding stochastic quantization schemes discussed in~\secref{sec:stochastic_quantization_intro}.

\myparatight{Preprocessing via Random Rotations}
To deal with possible limitations of vanilla~SQ, recent state-of-the-art works suggest to~\emph{randomly rotate} the input vector prior to~SQ~\cite{ICML:SureshYKM17}.
That is, the clients and the aggregator draw rotation matrices according to some known distribution; the clients then send the quantization of the rotated vectors while the aggregator applies the inverse rotation on the estimated rotated vector.
Intuitively, the coordinates of a randomly rotated vector are identically distributed, and thus the expected difference between the coordinates is smaller, allowing for a more accurate quantization.
For~$\numclients$ clients and a gradient with~$\numcoordinates$ coordinates, this approach achieves a~NMSE\footnote{The normalized~MSE is the mean's estimate~MSE normalized by the mean clients' gradient squared norms} of~$O(\frac{\log \numcoordinates}{\numclients})$ using~$O(\numcoordinates)$ bits, which asymptotically improves over the~$O(\frac{\numcoordinates}{\numclients})$~NMSE bound of vanilla~SQ.
The computational complexity, on the other hand, is increased from~$O(\numcoordinates)$ to~$O(\numcoordinates \log \numcoordinates)$ when utilizing the randomized~Hadamard transform for rotations.

\myparatight{Preprocessing via Kashin's Representation}
The rotation approach was recently improved using~Kashin's representation~\cite{ARXIV:CKMT18,TIT:LyubarskiiV10,ARXIV:SSR20}.
Roughly speaking, it allows representing an~$\numcoordinates$-dimensional vector using a slightly larger vector with~$\lambda \cdot \numcoordinates$ smaller coefficients~($\lambda>1$).
It can be shown that applying~SQ to the~Kashin coefficients allows for an optimal~NMSE of~$O(\frac{1}{\numclients})$ using~$O(\lambda \cdot \numcoordinates)$ bits.
Compared with~\cite{ICML:SureshYKM17}, Kashin's representation yields a lower~NMSE bound by a factor of~$\log \numcoordinates$ at the cost of increasing the computational complexity by the same factor~\cite{ARXIV:Ran22,ARXIV:CKMT18}.

\subsection{Additional Compression Techniques}
\label{subsec:rw-compression}
In this work, we focus on quantization as a means to reduce bandwidth.
We nevertheless briefly overview some additional techniques considered for~FL gradient compression.

\myparatight{Sparsification}
Some works like~\cite{sigcomm:Fei0SCS21,NIPS:SCJ18,emnlp:AjiH17,ARXIV:KonecnyMYRSB16} consider sparsifying the gradients.
Quantization can also be applied to these sparsified gradients as it reduces the number of bits used per entry, while sparsification reduces the number of entries.

\myparatight{Client-side Memory-based Techniques}
Some compression techniques, including~Top-$k$~\cite{NIPS:SCJ18} and~sketching~\cite{nips:IvkinRUBSA19}, rely on client-side memory and error-feedback~\cite{interspeech:SeideFDLY14,nips:AlistarhH0KKR18,nips:RichtarikSF21,ARXIV:BHRS20} to ensure convergence.
We consider the cross-device~FL setup where clients are stateless~(e.g., a client may appear only once during a training procedure).
Therefore, client-side-memory-based techniques are mostly designed for the cross-silo~FL setup and are less applicable to cross-device~FL.

\myparatight{Entropy Encodings}
Some techniques use entropy encoding such as arithmetic encoding and~Huffman encoding~(e.g., \cite{ICML:SureshYKM17,icml:VargaftikBPMBM22,nips:AlistarhH0KKR18}).
While such techniques are appealing in their bandwidth-to-accuracy trade-offs, it is unclear how to allow for an efficient secure aggregation as gradients must be decoded before being averaged.
Also, such techniques usually incur a higher computational overhead at the clients than fixed-length representations.
An additional review of current state-of-the-art gradient compression techniques and some open challenges can be found in~\cite{ftml:KairouzMABBBBCC21,ARXIV:KonecnyMYRSB16,ARXIV:Wangetal}.

\subsection{Secure Multi-party Computation}
The field of secure multi-party computation~(MPC) started with the seminal work of~Yao~\cite{focs:Yao82b} in~1982.
It enables to securely compute arbitrary functions on private inputs without leaking anything beyond what can be inferred from the output. Since then, the field of~MPC has seen a variety of advancements of used primitives effectively improving communication and computation efficiency, e.g.,~\cite{icalp:KolesnikovS08,CCS:AsharovL0Z13,crypto:BoyleCGIKS19,C:GKRS20,acsac:Munch0Y21}.
Also, tailored efficient optimizations for varying number of computation parties have been explored, e.g.,~\cite{ndss:Demmler0Z15,USENIX:PSSY21,ccsw:ChaudhariCPS19,ndss:ChaudhariRS20,NDSS:KPRS22,HOST:SynCirc}.
Moreover, MPC research considers different assumptions regarding adversarial behavior such as the well-known semi-honest~\cite{ndss:Demmler0Z15,TOPS:BraunDST22} and malicious security model~\cite{SP:Damgard0FKSV19,ccs:KellerOS16,CCS:KelSchSma13,eurocrypt:KellerPR18,SP:BSSSY24}) as well as numbers of corrupted computation parties~(e.g., honest majority~\cite{stoc:GoldreichMW87,C:GoyalLOPS21} or dishonest majority/full threshold security~\cite{crypto:CramerDESX18,CCS:KelSchSma13,TOPS:BraunDST22,USENIX:PSSY21,ndss:Demmler0Z15}).
Beyond running the computation among several non-colluding parties, another well-established system model~(which we use in our work) is outsourcing, where the data owners secret-share their private input data among a set of non-colluding computing parties which then run the private computation on their behalf~\cite{LetsEncrypt,WEB:OBLIVIOUSDNS20,WEB:ENPA}.

\subsection{Approximate Secure Computation}
To improve efficiency of~MPC, few works already considered approximations of the exact computation.
Such approximations in~MPC include using integer or fixed-point instead of floating-point operations~(too many works to cite), approximations in genomic computation~\cite{popets:AsharovHLR18}, and in privacy-preserving machine learning such as for division~\cite{FC:CatSax10}, activation functions~\cite{SP:MohZha17,popets:HesamifardTGW18,popets:ByaliCPS20}, and completely changing the classifier to be~MPC-friendly~\cite{usenix:RiaziS0LLK19}.
Also, for~FHE, approximations are used such as in the approximate~HE scheme~CKKS~\cite{asiacrypt:CheonKKS17}, which is implemented in the~HEAAN library\footnote{\url{https://github.com/snucrypto/HEAAN}} and was used for approximate genomic computations in~\cite{EPRINT:KimSKKHC19}.
In this work, we propose for the first time to use approximations to substantially improve efficiency of~FL when combined with~MPC and give detailed evaluations on the errors introduced thereby.

\subsection{Secure Aggregation}
\label{app:related_work_secureagg}
Performing secure aggregation without revealing anything about the aggregated input values beyond what can be inferred from the output was already investigated more than~10~years ago, for example, in the context of smart metering, e.g.,~\cite{SPM:ErkinTLP13,popets:KursaweDK11}.
It has come a long way since then, resulting in practical solutions for real-world applications nowadays.

For example, Prio~\cite{nsdi:Corrigan-GibbsB17} introduces secure protocols for aggregate statistics such as sum, mean, variance, standard deviation, min/max, and frequency.
It uses additive arithmetic secret sharing, offers full-threshold security among a small set of servers running the secure computation, and validates inputs to protect against malicious clients.
Prio+~\cite{SCN:AddankiGJOP22} optimizes client computation and communication compared to~\cite{nsdi:Corrigan-GibbsB17} with a~Boolean secret sharing-based client input validation and an additional conversion from~Boolean to arithmetic sharing.
Similar to our work, it has a multi-server setup to jointly compute statistical functions on private inputs.
Compared to Prio+~\cite{SCN:AddankiGJOP22}, we optimize the naive bit-to-arithmetic conversion presented in~\cite{SCN:AddankiGJOP22} for our~$\FSecAgg$ protocol~(cf.~\secref{sec:mpc-agg}), resulting in reduced communication cost of $2.4\times$ with exact results and $4\times$ with our novel approximating variant for $\numclients= 10^5$, where $\numclients$ is the number of clients.
Popa~et~al.~\cite{ccs:PopaBBL11} specifically focus on secure location-based aggregation statistics, Joye et al.~\cite{fc:JoyeL13} on time-series data, and~PrivEx~\cite{ccs:ElahiDG14} on traffic data in anonymous communication.

So et al.~\cite{aaai:So2023} point out that differences among securely aggregated updates across multiple training iterations can also leak information about the contribution of individual clients. Most existing secure aggregation schemes are executed on one training iteration, i.e., they cannot protect against multi-round attacks. An exception is POSEIDON~\cite{NDSS:savPTFBSH21} which runs FL fully under encryption, but at the cost of significant computational overhead on clients' and server's side. Instead, So et al.~\cite{aaai:So2023} propose to organize clients in batches that can only be chosen together for a training iteration. This approach is orthogonal and fully compatible with \FLname.

\subsection{Poisoning Attacks \& Defenses}\label{subsect:flattacks}
Poisoning attacks can be categorized into untargeted and targeted attacks based on the goals of the attacker~\cite{usenix:FangCJG20}.
In the former case, the attacker aims to corrupt the global model so that it reduces or even destroys the performance of the trained model for a large number of test inputs, yielding a final global model with a high error rate~\cite{usenix:FangCJG20,NDSS:SheHou21,nips:BaruchBG19}. 
In the latter case, the attacker aims to activate attacker-defined triggers that cause a victim model to do targeted misclassifications, which can then be activated in the inference phase~\cite{ICML:BhagojiCMC19,sun2019can}.
Notably, other classification results without the trigger behave normally and main task accuracy remains high.
The second class of attacks is sometimes also referred to as~\emph{backdoor attacks}~\cite{AISTATS:BagdasaryanVHES20}.
As discussed in~\secref{sec:defense}, we consider only untargeted poisoning following the argument in~\cite{SP:ShejwalkarHKR22}: This class of attacks is particularly challenging as service providers may not notice they are under attack given they do not know which accuracy is achievable in a fresh training of a new model. Also, even small accuracy reductions can lead to serious economical losses.

Below, we detail three state-of-the-art untargeted poisoning attacks, LIE~\cite{nips:BaruchBG19}, Fang~\cite{usenix:FangCJG20}, and~Shejwalkar et al.~\cite{NDSS:SheHou21}, which are most relevant to our work.

\begin{myitemize}
\item[--] \emph{Little is Enough (LIE) attack}~\cite{nips:BaruchBG19}: In~LIE~\cite{nips:BaruchBG19}, malicious clients manipulate their local updates by adding noise drawn from the normal distribution to~\enquote{clean} updates they created following the normal training process to cause a disorientation.
LIE assumes independent and identically distributed~(iid) data and was tested against various robust aggregations such as trimmed-mean~\cite{ICML:YinCRB18}.
\item[--] \emph{Fang et al.}~\cite{usenix:FangCJG20}: The authors of~\cite{usenix:FangCJG20} formulate their untargeted poisoning attack as an optimization problem where the manipulated updates aim at maximally disorienting the global model from the benign direction.
However, they assume the adversary to either know or guess the deployed~(robust) aggregation mechanism.
Additionally, the attack was shown to be ineffective for~iid as well as severely unbalanced non-iid training datasets~\cite{NDSS:SheHou21}. 
\item[--] \emph{Shejwalkar and Houmansadr}~\cite{NDSS:SheHou21}: The attacks of~\cite{NDSS:SheHou21} follow a similar idea as~\cite{usenix:FangCJG20}: they maximize the distances between benign and malicious updates while using the evasion of outlier-based detection mechanisms as a boundary.
Concretely, they formalize the following~\enquote{Min-Max} optimization problem:
\vspace{-1mm}
\begin{align}
\label{eq:minmax}
    \argmax_{\gamma}\ \max_{i\in [n]} \lVert \nabla^m -\nabla_i \rVert_2 \leq \max_{i,j\in[n]}\lVert \nabla_i - \nabla_j\lVert_2\\
    \nabla^m = \mathsf{f}_{\mathsf{avg}}(\nabla_{\{i\in[n]\}})+\gamma\nabla^{p},
\end{align}

where~$\mathsf{f}_{\mathsf{avg}}(\nabla_{\{i\in[n]\}})$ is the average gradient and~$\gamma$ $\nabla^p$ is the adversary's perturbation vector, i.e., either the inverse unit vector of the~(simulated) benign gradients, the inverse average standard deviations, or the average gradient with flipped sign of all updates.
For details, we refer to~\S{}IV in~\cite{NDSS:SheHou21}.
\end{myitemize}

Note that although the authors of~\cite{NDSS:SheHou21} suggest several flavours of their attack based on different levels of adversarial knowledge, we compare to their~Min-Max attack as it~(i) does not make the unrealistic assumption that an adversary knows defenses in place and~(ii) it is more destructive than~LIE~\cite{nips:BaruchBG19} for almost all datasets~\cite{NDSS:SheHou21}.
We do not consider~Fang et al.~\cite{usenix:FangCJG20}'s attack as it requires the guess of the robust aggregation rule, i.e., defense mechanism, which is unrealistic in a real-world deployment.
Taking those considerations into account, we evaluate the robustness of~\FLdefensename against the state-of-the-art Min-Max attack of~\cite{NDSS:SheHou21} in~\secref{sec:def_eval}. 

\myparatight{Poisoning Defenses}
Simple parameter-wise averaging is very sensitive to outliers and, thus, can easily hamper accuracy.
Therefore, Byzantine-robust defenses aim to make~FL robust against~(untargeted) attacks.
To do so, Krum~\cite{nips:BlanchardMGS17} selects only one local update, namely the one with the closest~$n-m-2$ local updates as update for the global model, where~$n$ is the number of clients and~$m$ the number of anticipated malicious clients. Multi-krum~\cite{nips:BlanchardMGS17} extends this idea to a selection of~$c$~(instead of just one) updates. Median~\cite{ICML:YinCRB18} is an another coordinate-wise aggregation selecting the coordinate-wise median of each update parameter.
A straightforward idea to assess~(to some extent) if a specific gradient is malicious is to use an auxiliary dataset~(rootset) at the aggregator to validate the performance of the updated global model~\cite{ndss:CaoF0G21,esorics:DongCLWZ21,AAAI:LiXCGL19}. 
FLTrust~\cite{ndss:CaoF0G21} and~FLOD~\cite{esorics:DongCLWZ21} use the ReLU-clipped cosine-similarity/Hamming distance between each received update and the aggregator-computed baseline update based on the auxiliary dataset. FLDetector~\cite{KDD:ZhangCJG22} detects malicious clients by checking their model updates' consistency based on historical model updates. 
RSA~\cite{AAAI:LiXCGL19} uses an~${\sf L}_{1}$-norm-based regularization, which is also comparing to the aggregator-computed baseline update.
The recently proposed~Divider and~Conquer~(DnC) aggregation~\cite{NDSS:SheHou21} combines dimensionality reduction using random sampling with an outlier-based filtering.

The so far discussed poisoning defenses are not compatible with secure aggregation protocols in a straight-forward manner or lead to an intolerable overhead.
Only two works, namely~FLAME~\cite{USENIX:NRCYMFMMMKSSZ21} and~BaFFLe~\cite{ICDCS:AndreinaMMK21} simultaneously consider both threats.
Concretely, FLAME~\cite{USENIX:NRCYMFMMMKSSZ21} uses a density-based clustering to remove updates with significantly different cosine distances~(i.e., different directions) combined with clipping~(for more subtle manipulations). BaFFLe~\cite{ICDCS:AndreinaMMK21} introduces a feedback loop enabling a subset of clients to evaluate each global model update, while being compatible with arbitrary secure aggregation schemes. 

Recently, ELSA~\cite{SP:RatheeSWP23} considered a distributed aggregator setup and proposed methods to address poisoning attacks from malicious clients. However, ELSA's defense methods are designed to work independently on the gradients of each client, specifically using~$\ltwoplain$ and~$\linfplain$ norms. ELSA does not support defenses such as trimmed mean, median, or~Krum~\cite{nips:BlanchardMGS17}, as already mentioned in their work. Consequently, ELSA's defense mechanism is not sufficiently robust to guard against stronger attacks like Min-Max. The defenses against these types of attacks require collective information about the gradients instead of treating each gradient individually. To illustrate this point, we conducted an evaluation of~ELSA against the~Min-Max attack with~10\% corruption on a three-layer~Convolutional Neural Network using the~FashionMNIST dataset. Even after~1000 epochs of training, we observed a significant drop in accuracy to below~70\%. 

\subsection{Global Model Privacy}\label{subsect:model_privacy}
Secure aggregation addresses the concern of the aggregator observing individual model updates in the clear, potentially leading to the leakage of private information~(cf.~\secref{sec:related_work_concise}).
However, existing works~(e.g., \cite{ICML:MarchandLMTP23}) have noted that even from the aggregated global model~(computed via secure aggregation but distributed in the clear), attackers can deduce private information of individual clients, e.g., through model inversion attacks~\cite{INFOCOM:WangSZSWQ19}.
To mitigate such issues, one can apply orthogonal techniques such as differential privacy on top of~\FLname~\cite{ftml:KairouzMABBBBCC21,access:OuadrhiriA22}.
Additionally, there are works like~HyFL~\cite{ARXIV:MarxSSWWY23} proposing a framework to ensure full model privacy in~FL, but, they do not consider communication-efficient secure aggregation, as in~\FLname. 

\section{Preliminaries}
\label{app:Prelims}
This section provides relevant details regarding the primitives used in this work. We begin with providing the necessary MPC background and protocols.
The protocols are presented in a generic manner because our approach is not restricted to any specific~MPC setting. Hence, some of the sub-protocols are treated as black-boxes that can be instantiated using any efficient protocols in the underlying~MPC setting.
Since we consider dishonest majority setting to work with, we utilize the (semi-honest variant of) primitives from~\cite{C:DPSZ12,crypto:CramerDESX18,SP:Damgard0FKSV19,ACNS:BenNieOmr19} in a black-box manner.

\subsection{MPC Protocols}
\label{app:mpc-protocols}
In this section, we go over the details of the underlying~MPC protocols used in our scheme.
We consider three~MPC servers, $\serverset = \{\server{1}, \server{2}, \server{3}\}$, to which the clients delegate the aggregation computation, as shown in~Fig.~\ref{fig:sec_agg}.
All the operations are carried out in either an~$\ell$-bit ring, $\Z{\ell}$, or a binary ring, $\Z{}$.
Before we go into the protocols, we provide additional details regarding the masked evaluation scheme~\cite{C:LPSY15,ACNS:BenNieOmr19,ARXIV:MPCLeague} discussed in~\secref{sec:masked_evaluation}, starting with the sharing semantics.

\smallskip
\myparatight{Sharing Semantics}
We use two different sharing schemes:
\begin{enumerate}[wide, labelwidth=!, labelindent=0pt, parsep=0pt]
    \item
    \emph{$\sqr{\cdot}$-sharing.} A value~$\val \in \Z{\ell}$ is said to be~$\sqr{\cdot}$-shared among~MPC servers in~$\serverset$, if each server~$\server{i}$, for~$i \in [3]$, holds~$\val_i \in \Z{\ell}$ such that~$\val_1 + \val_2 + \val_3 = \val$.
    \item
    \emph{$\shr{\cdot}$-sharing.} In this sharing, every~$\val \in \Z{\ell}$ is associated with two values: a random mask~$\lv{\val}{} \in \Z{\ell}$ and a masked value~$\mv{\val} \in \Z{\ell}$, such that~$\val = \mv{\val} + \lv{\val}{}$. Here, the share of an~MPC server is defined as a tuple of the form~$(\mv{\val}, \sqr{\lv{\val}{}})$.
\end{enumerate}

\smallskip
\myparatight{Handling Decimal Values}
The~MPC protocol we use is designed over a ring architecture, while the underlying~FL algorithms handle decimal numbers.
To address this compatibility issue, we employ the well-known Fixed-Point Arithmetic~(FPA) technique~\cite{FC:CatSax10,SP:MohZha17,CCS:MohRin18}, which encodes a decimal number in~$\ell$-bits using the~2's complement representation.
The sign bit is represented by the most significant bit, while the~$\truncsize$ least significant bits are kept for the fractional component.
We use~$\ell= 32$ bit values with~$\truncsize=16$ in this work.

We will now go over the~MPC protocols used in our scheme. 
We assume that the protocols' inputs are in~$\shr{\cdot}$-shared form, and that the output is generated in~$\shr{\cdot}$-shared form among the~MPC servers.

\myparatight{Inner Product Computation}
%
For simplicity, consider the multiplication of two values~$x, y \in \Z{\ell}$ as per the~$\shr{\cdot}$-sharing semantics.
We have
\begin{align*}
   \label{eq:masked-mult}
   z = xy &= (\mv{x}+\lv{y}{})(\mv{x}+\lv{y}{}) \\ &= \mv{x}\mv{y} + \mv{x}\lv{y}{} + \mv{y}\lv{x}{} + \lv{x}{}\lv{y}{}.
\end{align*}
Since the~$\lv{}{}$ values are independent of the underlying secret, the servers can compute~$\sqr{\cdot}$-shares of the term~$\lv{x}{}\lv{y}{}$ during preprocessing using the $\piDotPre()$ protocol~\cite{crypto:CramerDESX18,ACNS:BenNieOmr19}.
This enables the servers to locally compute~$\sqr{\cdot}$-shares of $z$ during the online phase. 

In addition to the above observation, since we operate over~FPA representation, truncation~\cite{FC:CatSax10,SP:MohZha17} must be performed in order to keep the result~$z$ in~FPA format after a multiplication.
For this, we use the truncation pair method~\cite{CCS:MohRin18}, wherein a tuple of the form~$(r, r/2^{\truncsize})$ is generated in~$\shr{\cdot}$-shared form among the servers during preprocessing using the $\piTr()$ protocol~\cite{SP:Damgard0FKSV19}.
Then, with very high probability, we have
\[
  z/2^{\truncsize} = (z-r)/2^{\truncsize} + r/2^{\truncsize}.
\]
Hence, during the online phase, servers publicly open the value~$(z-r)$ and apply the above transformation to obtain the~$\shr{\cdot}$-shares of truncated~$z$, completing the protocol.

For the case of the inner-product computation~(\boxref{fig:inner-product-preprocessing}), the task can be divided into~$\vecsize$ multiplications and the result obtained accordingly.
Furthermore, because the desired result is the sum of the individual multiplication results, servers can sum them and communicate in a single shot, saving communication cost~\cite{USENIX:PSSY21}.

\begin{protocolbox}{$\piDotP(\shr{\matD{X}{\vecsize}{1}}, \shr{\matD{Y}{\vecsize}{1}}, \truncsize)$}{Inner product protocol.}{fig:inner-product-preprocessing}
	\justify
	\algoHead{Preprocessing:}
	\begin{description}
		\item[1.] Execute $\piDotPre(\sqr{\mat{\lv{X}{}}}, \sqr{\mat{\lv{Y}{}}})$ to obtain $\sqr{\gv{z}{}}$ with $\gv{z}{} = \mat{\lv{X}{}} \band \mat{\lv{Y}{}}$.
		\item[2.] Execute $\piTr()$ to generate $(\sqr{r}, \shr{r/2^{\truncsize}})$.
	\end{description} 
	\algoHead{Online:}
	\begin{description}
		\item[1.] $\server{j}$, for $j \in [\numservers]$, locally computes as follows ($\Delta = 1$ if $j = 1$, else $0$):
		\begin{itemize}
			\item $\sqr{(z - r)}_j = \Delta\cdot(\mv{\mat{X}}\band\mv{\mat{Y}}) + \mv{\mat{X}}\band\sqr{\lv{\mat{Y}}{}}_j + \mv{\mat{Y}}\band\sqr{\lv{\mat{X}}{}}_j + \sqr{\gv{z}{}}_j - \sqr{r}_j$. 
		\end{itemize}
		\item[2.] $\server{j}$, for $j \in [\numservers]$, sends $\sqr{(z - r)}_j$ to $\server{1}$, who computes $(z-r)$ and sends to all the servers.
		\item[3.] Locally compute $\shr{z} = \shr{(z-r)/2^{\truncsize}} + \shr{r/2^{\truncsize}}$.
	\end{description}
\end{protocolbox}

\myparatight{Bit-to-Arithmetic Protocol}
Given the~Boolean sharing of~$\bitb \in \Z{}$, protocol~$\piBitA$ computes the arithmetic sharing of the bit~$\bitb$ over~$\Z{\ell}$.
As shown in~Eq.~\ref{eq:case-bitA-masked}, the arithmetic equivalent~$\bitbext$ for a bit~$\bitb =  \mv{\bitb} \xor \lv{\bitb}{}$ can be obtained as 
\begin{align}
\label{eq:case-bitA-masked}
\bitbext = \mv{\bitb} \xor \lv{\bitb}{} = \Mv{\bitb} + (1 - 2\mv{\bitb}) \cdot \Lv{\bitb}{}.
\end{align}
Here, $\Mv{\bitb}$ and $\Lv{\bitb}{}$ denote the arithmetic equivalents of~$\mv{\bitb}$ and~$\lv{\bitb}{}$ respectively.
In our protocol shown in~\boxref{fig:bitA}, MPC servers invoke~$\piBitAPre$ protocol~\cite{SP:Damgard0FKSV19,USENIX:PSSY21} on the~Boolean~$\sqr{\cdot}$-shares of~$\lv{\bitb}{}$ in the preprocessing phase to obtain its respective arithmetic shares.
This enables the servers to locally compute an additive sharing of~$\bitbext$ during the online phase, as shown above.
The rest of the steps proceed similar to the inner-product protocol and we omit the details.

\begin{protocolbox}{$\piBitA(\shrB{\bitb})$}{Bit-to-arithmetic conversion protocol.}{fig:bitA}
	\justify
	\algoHead{Preprocessing:}
	\begin{description}
		\item[1.] Execute $\piBitAPre(\sqrB{\lv{\bitb}{}})$ to obtain $\sqr{\lv{\bitb}{}}$.
		\item[2.] Locally generate $(\sqr{r}, \shr{r})$ for a random $r \in \Z{\ell}$.
	\end{description} 
	\algoHead{Online:}
	\begin{description}
		\item[1.] $\server{j}$, for $j \in [\numservers]$, locally computes as follows ($\Delta = 1$ if $j = 1$, else $0$):
		\begin{itemize}
			\item $\sqr{(z - r)}_j = \Delta\cdot \mv{\bitb} + (1 - 2\mv{\bitb}) \cdot \sqr{\lv{\bitb}{}}_j - \sqr{r}_j$. 
		\end{itemize}
		\item[2.] $\server{j}$, for $j \in [\numservers]$, sends $\sqr{(z - r)}_j$ to $\server{1}$, who computes $(z-r)$ and sends to all the servers.
		\item[3.] Locally compute $\shr{z} = \shr{(z-r)} + \shr{r}$.
	\end{description}
\end{protocolbox}

\medskip
To instantiate~$\piBitAPre$, we use~SPDZ-style computations~\cite{ccs:KellerOS16,indocrypt:Rotaru019}, where~oblivious transfer~(OT) instances~\cite{CCS:AsharovL0Z13,crypto:BoyleCGIKS19,C:CouteauRR21} are used among every pair of servers.
Let~$\piOT{ij}$ denote an instance of~1-out-of-2~OT with~$\server{\sf i}$ being the sender and~$\server{\sf j}$ being the receiver.
Here, $\server{\sf i}$ inputs the sender messages~$(x_0, x_1)$ while~$\server{\sf j}$ inputs the receiver choice bit~$c \in \Z{}$ and obtains~$x_c$ as the output, for $x_0, x_1 \in \Z{\ell}$.

\begin{protocolbox}{$\piBitAPre(\sqrB{\bitb})$}{Bit-to-arithmetic preprocessing.}{fig:bitAPre}
	\justify
	\algoHead{OT Instance - I: $\sqr{\bitb}_1\sqr{\bitb}_2$}
	\begin{description}
		\item[1.] $\server{1}$ samples random $r_{12} \in \Z{\ell}$.
		\item[2.] $\server{1}$ and $\server{2}$ executes $\piOT{12}((r_{12}, r_{12} + \sqr{\bitb}_1), \sqrB{\bitb}_2)$.
		\item[3.] $\server{1}$ sets $y_{12}^1 = -r_{12}$ and $\server{2}$ sets the OT output as $y_{12}^2$.
	\end{description} 
	\justify
	\algoHead{OT Instances - II \& III: $\sqr{\bitb}_1\sqr{\bitb}_3$, $\sqr{\bitb}_2\sqr{\bitb}_3$}
	\begin{description}
		\item These are similar to the computation of $\sqr{\bitb}_1\sqr{\bitb}_2$ discussed above.
	\end{description} 
	\justify
	\algoHead{OT Instances - IV \& IV: $\sqr{\bitb}_1\sqr{\bitb}_2\sqr{\bitb}_3$}
	\begin{description}
		\item[1.] Computation can be broken down to $(\sqr{\bitb}_1\sqr{\bitb}_2)\cdot\sqr{\bitb}_3 = (y_{12}^1 + y_{12}^2)\cdot\sqr{\bitb}_3$.
		\item[2.] Execute $\piOT{13}$ for $y_{12}^1 \cdot \sqrB{\bitb}_3$. Let $z_{13}^1$ and $z_{13}^2$ denote the respective shares of $\server{1}$ and $\server{3}$. 
		\item[3.] Execute $\piOT{23}$ for $y_{12}^2 \cdot \sqrB{\bitb}_3$. Let $z_{23}^1$ and $z_{23}^2$ denote the respective shares of $\server{2}$ and $\server{3}$. 
	\end{description} 
	\justify
	\algoHead{Computation of final shares}
	\begin{description}
		\item $\server{1}$: $\sqr{\bitb}_1 = \bitb_1 - 2y_{12}^1 - 2y_{13}^1 + 4z_{13}^1$.
		\item $\server{2}$: $\sqr{\bitb}_2 = \bitb_2 - 2y_{12}^2 - 2y_{23}^1 + 4z_{23}^1$.
		\item $\server{3}$: $\sqr{\bitb}_3 = \bitb_3 - 2y_{13}^2 - 2y_{23}^2 + 4z_{13}^2 + 4z_{23}^2$.
	\end{description}
\end{protocolbox}

\medskip
To generate the arithmetic sharing of~$\lv{\bitb}{}$ from its~Boolean shares in~$\sqr{\cdot}$-shared form, a simple method would be to apply a~3-XOR using a~daBit-style approach~\cite{indocrypt:Rotaru019}, but would result in~12 executions of~1-out-of-2 OTs.
However, as pointed out in~Prio+~\cite{SCN:AddankiGJOP22}, the cost could be further optimized due to the semi-honest security model being considered in this work rather than the malicious in~\cite{indocrypt:Rotaru019}.
Since~Prio+ operates over two~MPC servers, we extend their optimized~daBit-generation protocol~(cf.~\cite[$\daBitPrio$]{SCN:AddankiGJOP22}) to our setting with three servers.

Given two bits~$\bitb_{\sf i}, \bitb_{\sf j} \in \Z{}$, the arithmetic share corresponding to their product can be generated using one instance of~$\piOT{ij}$ with~$(x_0 = r, x_1 = r + \bitb_{\sf i})$ as the~OT-sender messages and~$\bitb_{\sf j}$ as the~OT-receiver choice bit.
With this observation and using~Eq.~\ref{eq:bit-arithmetic-general}, servers can compute~$\sqr{\cdot}$-shares corresponding to the bit~$\lv{\bitb}{}$ using five~OT invocations.
The formal details appear in~\boxref{fig:bitAPre}.

\begin{protocolbox}{$\piBitASum(\shrB{\matD{M}{\vecsize}{1}})$}{Bit-to-arithmetic sum protocol.}{fig:bitA-sum}
	\justify
	\algoHead{Preprocessing:}
	\begin{description}
		\item[1.] Execute $\piBitAPre(\sqrB{\lv{\mat{M}}{}})$ to obtain $\sqr{\lv{\mat{M}}{}}$.
		\item[2.] Locally generate $(\sqr{r}, \shr{r})$ for a random $r \in \Z{\ell}$.
	\end{description} 
	\algoHead{Online:}
	\begin{description}
		\item[1.] $\server{j}$, for $j \in [\numservers]$, locally computes as follows ($\Delta = 1$ if $j = 1$, else $0$):
		\begin{itemize}
			\item $\sqr{(z - r)}_j = \Delta\cdot \rowAggregate(\mv{\mat{M}}) + (1 - 2\mv{\mat{X}})\band\sqr{\lv{\mat{Y}}{}}_j - \sqr{r}_j$. 
		\end{itemize}
		\item[2.] $\server{j}$, for $j \in [\numservers]$, sends $\sqr{(z - r)}_j$ to $\server{1}$, who computes $(z-r)$ and sends to all the servers.
		\item[3.] Locally compute $\shr{z} = \shr{(z-r)} + \shr{r}$.
	\end{description}
\end{protocolbox}

\medskip
For the case of approximate bit conversion discussed in~\secref{sec:our-approx}, the number of~OT instances can be further reduced to three following~Eq.~\ref{eq:approx-equation-middle-term}.
Concretely, the conversion involves computation of just~$\sqr{\bitb}_1\sqr{\bitb}_2\sqr{\bitb}_3$ and hence the~OT instances~{II \& III} described in~\boxref{fig:bitAPre} are no longer needed. 

When computing the sum of bits directly, the online communication can be optimized following inner-product protocol and the resulting protocol~$\piBitASum$ is given in~\boxref{fig:bitA-sum}.

\medskip
\myparatight{Bit Injection Protocol}
Given a bit~$\bitb =  \mv{\bitb} \xor \lv{\bitb}{}$ and~$\scale = \Mv{\scale} + \Lv{\scale}{}$, the bit injection operation involves computing the value~$\bitb \cdot \scale$ that can be obtained as
%
\begin{align}
\label{eq:case-bitinj-masked}
    \bitb \cdot \scale &= (\Mv{\bitb} + (1 - 2\mv{\bitb}) \cdot \Lv{\bitb}{}) \cdot (\Mv{\scale} + \Lv{\scale}{}) \nonumber \\
    &= \Mv{\bitb}\Mv{\scale} + \Mv{\bitb} \Lv{\scale}{} + (1 - 2\mv{\bitb}) \cdot (\Lv{\bitb}{} \Mv{\scale} +  \Lv{\bitb}{}\Lv{\scale}{}).
\end{align}
%

Given a boolean vector~$\matD{M}{\vecsize}{1}$ and an arithmetic vector~$\matD{N}{\vecsize}{1}$ in the secret-shared form, protocol~$\piBitInj$ computes the inner product of the two vectors, defined as~$z = \mat{M} \band \mat{N}$. This protocol is similar to the inner product protocol $\piDotP$~(\boxref{fig:inner-product-preprocessing}), with the main difference being that $\mat{M}$ is a boolean vector. 

During the preprocessing, servers first generate the arithmetic shares of $\lv{\mat{M}}{}$ from its boolean shares, similar to the bit-to-arithmetic protocol $\piBitA$ in \boxref{fig:bitA}. 
In this case, $\piBitInjPre$ is same as the $\piDotPre$ primitive discussed in~\boxref{fig:inner-product-preprocessing}.
The remaining steps are similar to the $\piDotP$ in \boxref{fig:inner-product-preprocessing} and we omit the details.

\begin{protocolbox}{$\piBitInj(\shrB{\matD{M}{\vecsize}{1}}, \shr{\matD{N}{\vecsize}{1}}, \truncsize)$}{Bit injection (sum) protocol.}{fig:bitinj-sum}
	\justify
	\algoHead{Preprocessing:}
	\begin{description}
	    \item[1.] Execute $\piBitAPre(\sqrB{\lv{\mat{M}}{}})$ to obtain $\sqr{\lv{\mat{M}}{}}$.
		\item[2.] Execute $\piBitInjPre(\sqr{\lv{\mat{M}}{}}, \sqr{\lv{\mat{N}}{}})$\footnote{$\piBitInjPre$ is same as $\piDotPre$~(\boxref{fig:inner-product-preprocessing}) in the setting considered in this work.} to obtain $\sqr{\gv{\mat{Q}}{}}$ with $\gv{\mat{Q}}{} = \lv{\mat{M}}{} \circ \lv{\mat{N}}{}$.
		\item[3.] Execute $\piTr()$ to generate $(\sqr{r}, \shr{r/2^{\truncsize}})$.
	\end{description} 
	\algoHead{Online:}
	\begin{description}
		\item[1.] $\server{j}$, for $j \in [\numservers]$, locally computes as follows ($\Delta = 1$ if $j = 1$, else $0$):
		\begin{itemize}
			\item $T_j^1 = \Delta\cdot(\mv{\mat{M}}\band\mv{\mat{N}})\ +\ \mv{\mat{M}}\band\sqr{\lv{\mat{N}}{}}_j$. 
			\item $T_j^2 = ((1 - 2\mv{\mat{M}})\circ\mv{\mat{N}}) \band\sqr{\lv{\mat{M}}{}}_j\ +\ (1 - 2\mv{\mat{M}})\band\sqr{\gv{\mat{Q}}{}}_j$. 
			\item $\sqr{(z - r)}_j = T_j^1 + T_j^2 - \sqr{r}_j$.
		\end{itemize}
		\item[2.] $\server{j}$, for $j \in [\numservers]$, sends $\sqr{(z - r)}_j$ to $\server{1}$, who computes $(z-r)$ and sends to all the servers.
		\item[3.] Locally compute $\shr{z} = \shr{(z-r)/2^{\truncsize}} + \shr{r/2^{\truncsize}}$.
	\end{description}
\end{protocolbox}

\medskip
\subsection{Binomial Sum}
\label{app:expected_values}

\begin{restatable}[Expected Values]{lemma}{binomial} 
Given $n,p \in \ZZ$, we have 
    \begin{enumerate}
        \item $\sum\limits_{p=0}^{n} p\cdot\binom{n}{p} = n \cdot 2^{n-1}.$
        \item $\sum\limits_{p=0}^{\floor{n/2}} 2p\cdot\binom{n}{2p} = \sum\limits_{p=0}^{\floor{n/2}} (2p+1)\cdot\binom{n}{2p+1} = n \cdot 2^{n-2}.$
    \end{enumerate}
    \label{lemma:binomial}
\end{restatable}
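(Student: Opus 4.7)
The plan is to prove both identities using the standard absorption identity $p\binom{n}{p}=n\binom{n-1}{p-1}$, which is the algebraic engine behind all of these sums. For part~(1), I would apply this identity term-by-term and re-index: $\sum_{p=0}^{n} p\binom{n}{p} = n\sum_{p=1}^{n}\binom{n-1}{p-1} = n\sum_{k=0}^{n-1}\binom{n-1}{k} = n\cdot 2^{n-1}$. Alternatively, one could give a clean combinatorial proof by double-counting pairs $(S,x)$ with $x\in S\subseteq[n]$: the LHS sums, over subset sizes $p$, the number of size-$p$ subsets times the number of choices of the distinguished element; the RHS first picks the distinguished element ($n$ choices) and then completes it with an arbitrary subset of the remaining $n-1$ elements. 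I would likely include both for clarity.

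For part~(2), the idea is to restrict the sum in part~(1) to even and odd indices respectively, and show each half contributes equally. Concretely, again by the absorption identity,
\begin{align*}
\sum_{p=0}^{\lfloor n/2\rfloor} 2p\binom{n}{2p} &= n\sum_{p=1}^{\lfloor n/2\rfloor}\binom{n-1}{2p-1}, \\
\sum_{p=0}^{\lfloor n/2\rfloor} (2p+1)\binom{n}{2p+1} &= n\sum_{p=0}^{\lfloor n/2\rfloor}\binom{n-1}{2p}.
\end{align*}
The right-hand sides are, respectively, $n$ times the sum of odd-indexed and $n$ times the sum of even-indexed binomial coefficients of order $n-1$. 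By the classical identity $\sum_{k\text{ even}}\binom{n-1}{k}=\sum_{k\text{ odd}}\binom{n-1}{k}=2^{n-2}$ (obtained by adding and subtracting $(1+1)^{n-1}$ and $(1-1)^{n-1}$), both sums equal $n\cdot 2^{n-2}$, which finishes the proof.

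The main obstacle is essentially bookkeeping: one must be careful about index ranges when $n$ is odd vs.\ even so that the final index $\lfloor n/2\rfloor$ lands correctly on either the even or odd side. Apart from that, the lemma is a routine consequence of the absorption identity plus the even/odd split of the binomial expansion, so I would keep the proof to a few lines and just verify the edge cases ($n=0,1$) separately to be safe.
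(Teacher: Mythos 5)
Your proof is correct, but it takes a genuinely different route from the paper's. The paper differentiates the binomial theorem $\sum_p \binom{n}{p}y^p = (1+y)^n$ with respect to $y$ and then substitutes $y=1$ to obtain part~(1) and $y=-1$ to obtain the alternating identity $\sum_p (-1)^{p-1}p\binom{n}{p}=0$; adding and subtracting these two gives part~(2). You instead use the absorption identity $p\binom{n}{p}=n\binom{n-1}{p-1}$ to push the factor of $p$ into the binomial coefficient, reducing part~(1) to $\sum_k\binom{n-1}{k}=2^{n-1}$ and part~(2) to the classical fact that the even- and odd-indexed binomial coefficients of order $n-1$ each sum to $2^{n-2}$. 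The two approaches are of comparable length; yours is calculus-free and admits a clean double-counting reading, while the paper's unifies both parts through a single differentiated generating function. One small caveat that applies to both proofs (and to the lemma as stated): the identity in part~(2) requires $n\ge 2$ — for $n=1$ the left sums are $0$ and $1$ while $n\cdot2^{n-2}=\tfrac12$ — which corresponds exactly to the requirement $n-1\ge 1$ for the even/odd split identity you invoke (or, in the paper's proof, for $n(1+y)^{n-1}$ to vanish at $y=-1$). Since the lemma is only used for the number of shares $\numshares\ge 2$, this is harmless, but your remark about checking edge cases separately is well taken.
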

%
\begin{proof}
    Consider the binomial formula for~$(1+y)^n$, given by
    \begin{equation}
        \label{eq:binomial-x-one}
        \sum\limits_{p=0}^{n} \binom{n}{p} y^p = (1 + y)^n 
    \end{equation}
    Differentiating~Eq.~\eqref{eq:binomial-x-one} with respect to~$y$ will give 
    \begin{equation}
        \label{eq:binomial-x-one-diff}
        \sum\limits_{p=0}^{n} \binom{n}{p} p \cdot y^{p-1} = n \cdot (1 + y)^{n-1}
    \end{equation}
    Substituting~$y=1$ in~Eq.~\eqref{eq:binomial-x-one-diff} gives the first result~(1). Similarly, setting~$y=-1$ in~Eq.~\eqref{eq:binomial-x-one-diff} gives
    \begin{equation}
        \label{eq:binomial-y}
        \sum\limits_{p=0}^{n} (-1)^{p-1} p \cdot \binom{n}{p} = 0 
    \end{equation}
    Combining~Eq.~\eqref{eq:binomial-y} with the first result~(1) will give the second result~(2).
\end{proof}
%

\subsection{Overhead of HSQ and KSQ Quantization}
\label{app:quantization_overhead}
For being able to use an efficient~GPU-friendly implementation of the randomized~Hadamard transform, which we use for both rotating the gradients in~HSQ and for calculating~Kashin's coefficients in~KSQ, we require that the gradients' size to be a power of~2. 
A simple solution to meet this requirement is padding.
For example, for the~LeNet architecture with~$\approx 60k$ parameters, we can pad the gradient to~$2^{16}=65536$ entries with a small resulting overhead of~$\approx 6.2\%$~(i.e., using~$\approx 1.06$~bits per coordinate instead of~1).
However, a more sophisticated approach is to divide the gradient into decreasing power-of-two-sized chunks and inflate only the last~(smallest) chunk.\footnote{The size of the last chunk is kept above some threshold, e.g.,~$2^9$ to keep the overhead of the scales small.}
For example, for the~LeNet architecture, we can decompose it into chunks of size~$32768, 16384, 8192, 4096, 512$, that sum up to~$61952$~(with an additional overhead of two floats per chunk) with a resulting overhead of only~$\approx 1.44\%$.
Also, for~Kashin's representation, we use~$\lambda = 1.15$ for each chunk~(an extra~15\% of space) as used in previous works~(e.g., \cite{nips:VargaftikBPMBM21}).
To summarize, we state these resulting overheads in~Tab.~\ref{tab:quantization_overhead}.

\begin{table}[htb!]
    \centering
    \resizebox{0.9\columnwidth}{!}{%
    \begin{tabular}{lrrrr}
    \toprule
     Architecture                 & $n$      & SQ       & HSQ & KSQ      \\ \midrule
    LeNet    & 61706    & 61706    & 62272                    & 73024    \\ 
    ResNet9  & 4903242  & 4903242  & 4915456                  & 5767424  \\ 
    ResNet18 & 11220132 & 11220132 & 11272192                 & 12583040 \\ \bottomrule
    \end{tabular}%
    }
    \captionsetup{font=small}
    \caption{Exact number of bits used for different network architectures and quantization schemes compared to the baseline number of coordinates~$n$.}
    \label{tab:quantization_overhead}
    \vspace{-2mm}
\end{table}

\section{\FLname{}: Additional Details}
\label{app:ourfl-additional}
This section provides addition details of our~FL framework~\FLname{} presented in~\secref{sec:system}.
We begin with providing additional details regarding the approximate bit conversion discussed in~\secref{sec:approximate_mpc}.

\subsection{Multi-bit Quantization Schemes}
\label{sec:extension-multi-bit}
This section describes how our scheme \FLname{} can be extended to support multi-bit linear quantization schemes, in which each coordinate is classified into more than two levels, resulting in each coordinate being represented by more than a single bit.

For instance, consider the quantization in TernGrad~\cite{NIPS:WenXYWWCL17}, where each coordinate is compressed to one of the three levels~$\{-1,0,1\}$. Here, each coordinate can be represented using two bits, say~$\bitb_1$ and~$\bitb_2$ and the quantized level can be computed as~$2\bitb_1-\bitb_2$.  

To use our scheme, each client~$\client{i}$ share the bits separately using the underlying boolean secret sharing scheme, i.e., $\shrB{\bitb_1}_i$ and~$\shrB{\bitb_2}_i$. MPC servers use our instantiations of~$\FSecAgg$ functionality discussed in~\secref{sec:mpc-agg} to aggregate each of the bits and obtain the result in arithmetic sharing format, i.e, $\shr{\bitb_1}$ and $\shr{\bitb_2}$. The final result can be locally computed by the~MPC servers as $2\shr{\bitb_1} + \shr{\bitb_2}$, since the underlying~MPC protocol used in~\FLname{} is linear.

\subsection{Approximate Bit Conversion}
\label{app:approximation-proof}

\binomialexpectation*

\begin{restatable}[]{proof}{binomialexpproof} 
For the analysis, we use the truth table of~$\bitb$, denoted by~$\truthtable{\bitb}$, which has~$2^{\numshares}$ rows.
Half of the rows in~$\truthtable{\bitb}$ correspond to~$\bitb = 0$, while the other half correspond to~$\bitb = 1$.
The truth table for three shares~($\numshares = 3$) is given in~Tab.~\ref{tab:truth-table-three-shares} as a reference.

\begin{table}[htb!]
    \centering
    \resizebox{0.9\columnwidth}{!}{
    \begin{tabular}{c|ccc|ccc|c}
      \toprule
        $\bitb$ & $\bitb_1$ & $\bitb_2$ & $\bitb_3$ & $\termSum$& $\termMid$ & $\termProd$ & $\bitbAr$\\
      \hline
        \rowcolor{lightgray}
        0 & 0 & 0 & 0 & 0 & 0  & 0 & 0 \\
        1 & 0 & 0 & 1 & 1 & 0  & 0 & 1 \\
        1 & 0 & 1 & 0 & 1 & 0  & 0 & 1 \\
        \rowcolor{lightgray}
        0 & 0 & 1 & 1 & 2 & -2 & 0 & 0 \\
        1 & 1 & 0 & 0 & 1 & 0  & 0 & 1 \\
        \rowcolor{lightgray}
        0 & 1 & 0 & 1 & 2 & -2 & 0 & 0 \\
        \rowcolor{lightgray}
        0 & 1 & 1 & 0 & 2 & -2 & 0 & 0 \\
        1 & 1 & 1 & 1 & 3 & -6 & 4 & 1 \\
       \bottomrule
    \end{tabular}
    }
    \captionsetup{font=small}
    \caption{Truth table for $\bitb = \bitb_1 \xor \bitb_2 \xor \bitb_3$. The rows corresponding to $\bitb = 0$ are \colorbox{lightgray}{highlighted}. $\bitbAr$ denotes the arithmetic equivalent of $\bitb$.}
    \label{tab:truth-table-three-shares}
\end{table}

\noindent \emph{Sum Term $(\termSum)$:}
For each row of the form~$(\bitb_1,\ldots,\bitb_{\numshares})$ in~$\truthtable{\bitb}$, $\termSum$ equals~$\bitbAr_1+\ldots+\bitbAr_{\numshares}$, which can be interpreted as the number of~$\bitbAr_i$'s selected out of the~$\numshares$ possible. Furthermore, there are a total of~$\binom{\numshares}{k}$ rows with sums equal to~$k$, with~$k$ being odd corresponding to the row for~$\bitb = 1$ and~$k$ being even corresponding to the row for~$\bitb = 0$.
As a result, given~$\bitb = 0$, the expectation of the sum term can be calculated as the product of~$1/2^{\numshares-1}$~(corresponding to rows in~$\truthtable{\bitb}$ with~$\bitb = 0$) and the sum of terms of the form~$k \cdot \binom{\numshares}{k}$ with~$k$ being even.
Using~Lem.~\ref{lemma:binomial} in~\appref{app:Prelims}, we get
\begin{small}
\begin{align*}
    \E[\termSum \mid(\bitb = 0)] &= \frac{1}{2^{\numshares-1}} \cdot \sum\limits_{k=0}^{\floor{\numshares/2}} 2k  \binom{\numshares}{2k} = \frac{1}{2^{\numshares-1}} \cdot \numshares \cdot 2^{\numshares-2} = \numshares/2.
\end{align*}
\end{small}
Similarly, we obtain~$\E[\termSum \mid(\bitb = 1)] = \numshares/2$. To summarize, we have~$\E[\termSum \mid \bitb] = \numshares / 2$.

\medskip
\noindent \emph{Product Term $(\termProd)$:}
The product of all the~$\numshares$ shares will be~$1$ only if all the shares are~$1$, otherwise it will be~$0$.
Moreover, all shares of~$\bitb$ being~$1$ correspond to~$\bitb = 1$ if~$\numshares$ is odd, and~$\bitb = 0$ otherwise.
Now, when~$\numshares$ is odd then~$\termProd=(-2)^{\numshares-1}$ with probability $\frac{1}{2^{\numshares-1}}$~(when all the shares of $\bitb$ are $1$, given that at least one share is 1 as we are in the case $b=1$), and~$0$ otherwise. 
In this case, we can write
\begin{small}
\begin{align*}
    \E[\termProd \mid(\bitb = 0 \wedge \text{$\numshares$ is odd})] &= \frac{1}{2^{\numshares-1}} \cdot 0~~~~~~~~~~~~ = 0 \\ 
    \E[\termProd \mid(\bitb = 1 \wedge \text{$\numshares$ is odd})] &= \frac{1}{2^{\numshares-1}} \cdot (-2)^{\numshares-1} = 1
\end{align*}
\end{small}
Similarly, the case for even~$\numshares$ can be written as
\begin{small}
\begin{align*}
    \E[\termProd \mid(\bitb = 0 \wedge \text{$\numshares$ is even})] &= \frac{1}{2^{\numshares-1}} \cdot (-2)^{\numshares-1} = -1 \\ 
    \E[\termProd \mid(\bitb = 1 \wedge \text{$\numshares$ is even})] &= \frac{1}{2^{\numshares-1}} \cdot 0~~~~~~~~~~~~= 0
\end{align*}
\end{small}
The above observation can be summarized as~$\E[\termProd \mid \bitb] = \bitb - (\numshares\mbox{-}1)\Mod{2}$.

\medskip
\noindent \emph{Middle Term $(\termMid)$:}
Given~$\E[\bitb] = \bitb$, $\E[\termSum \mid \bitb]$ and~$\E[\termProd \mid \bitb]$, the expectation of $\termMid$ can be calculated as
\begin{small}
\begin{align*}
    \E[\termMid \mid \bitb]  &= \E[\bitb] - \E[\termSum \mid \bitb] - \E[\termProd \mid \bitb]\\ 
                        &= \bitb - \numshares/2 - (\bitb - (\numshares\mbox{-}1)\Mod{2}) = (\numshares\mbox{-}1)\Mod{2} - \numshares/2.
\end{align*}
\end{small}
\noindent
This concludes the proof of Lem.~\ref{lemma:bitA-exp-analysis}.
\end{restatable}

\myparatight{Efficiency Analysis} 
\label{sec:approx_efficiency}
We measure the efficiency gains achieved by our approximation method, discussed in~\secref{sec:our-approx-approach}, by counting the number of~\emph{cross terms}\footnote{Terms for which interaction among~MPC servers is necessary.} that must be computed securely using~MPC. Cross terms are terms that compute the product of two or more shares. While the exact amount of computation and communication varies depending on the~MPC protocol and setting~(e.g., honest vs.\ dishonest majority or semi-honest vs.\ malicious security), we believe cross terms can provide a protocol-independent and realistic assessment of scalability.\footnote{We acknowledge that the analysis cannot provide an exact comparison, owing to the presence of the product term in the approximation. e.g., depending on the underlying~MPC setup, the product term~($\termProd$) may require more communication than the middle terms~($\termMid$), and therefore the effect of approximation may be minimized.}

\begin{table}[htb!]
    \centering
    \begin{tabular}{r r r}
         \toprule
         \multirow{2}{*}{Computation} 
         & \multicolumn{2}{c}{\#cross-terms} \\ \cmidrule{2-3}
         & Exact ($\bitbAr$) & Approximate ($\bitbapprox$) \\
         \midrule
         Bit-to-Arithmetic & $2^{\numshares} - \numshares - 1$                 & $1$\\
         Bit Injection   & $2^{\numshares} + \numshares^2 - 2\numshares - 1$ & $\numshares^2 - \numshares +1$\\
         \bottomrule
    \end{tabular}
    \captionsetup{font=small}
    \caption{Efficiency analysis via approximate bit conversion with respect to the \#cross-terms involved.}
    \label{tab:efficiency-via-approximation}
    \vspace{-3mm}
\end{table}

Tab.~\ref{tab:efficiency-via-approximation} provides details regarding the number of cross terms involved in obtaining the arithmetic equivalent of~$\bitb = \xor_{i=1}^{\numshares} \bitb_i$.
The gains increase significantly with a higher number of shares~$\numshares$ due to the exponential growth in the number of cross terms for the exact computation.
Tab.~\ref{tab:efficiency-via-approximation} also provides details for a bit injection operation in which the product of a~Boolean bit~$\bitb$ and a scale value~$\scale$ is securely computed.
Given~$\scale = \sum_{i=1}^{\numshares} \scale_i$, the value~$\bitb \cdot \scale$ can be computed by first computing either~$\bitbAr$ or~$\bitbapprox$~(depending on whether an exact or approximate value is required) and then multiplying by~$\scale$.

\subsection{\FLdefensename{}: Additional Details}
\label{app:defense-additional}
In this subsection, we provide additional details of our~\FLdefensename{}.

\begin{figure*}[htb!]
    \centering
    \includegraphics[width=0.95\textwidth]{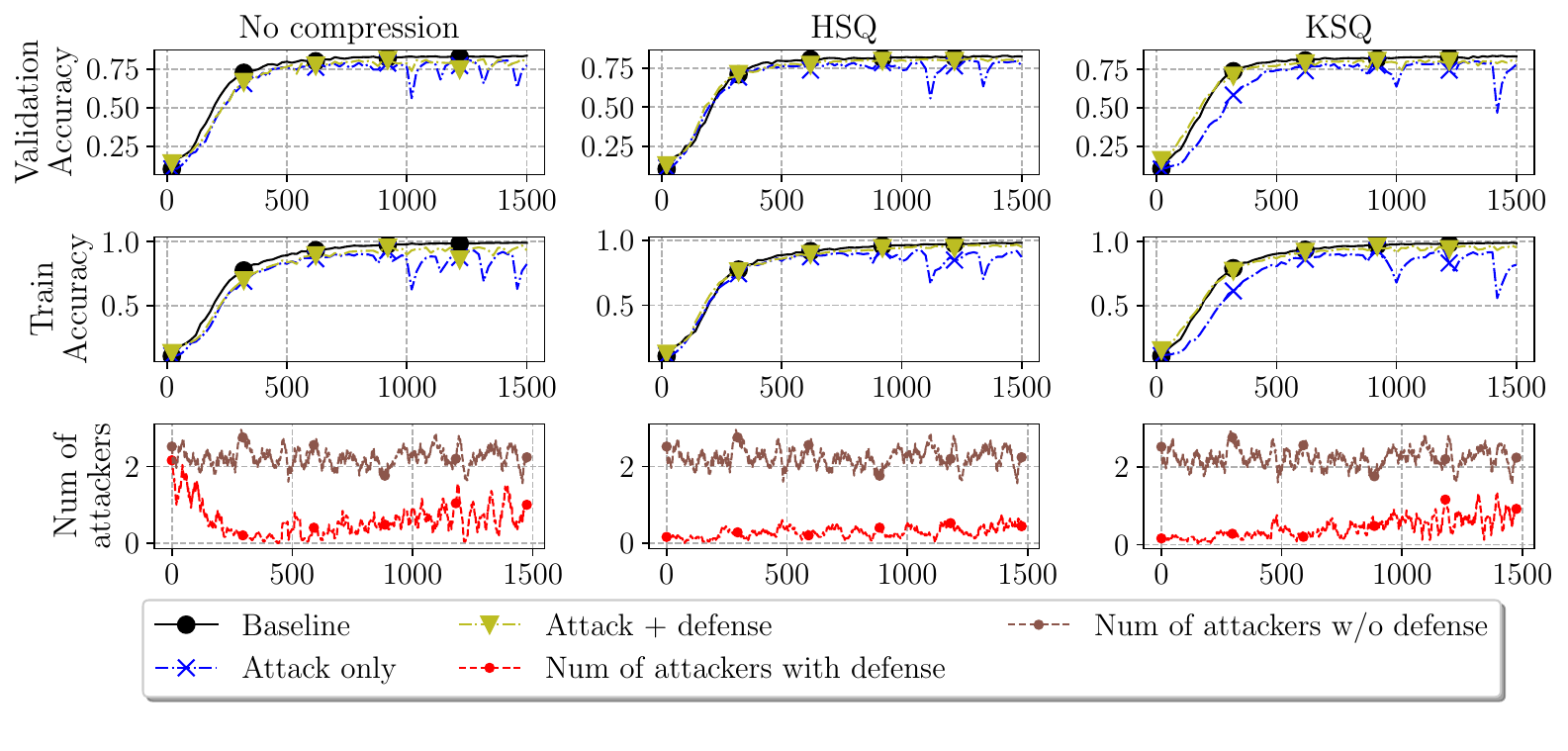}
    \vspace{-4mm}
    \captionsetup{font=small}
    \caption{Effect of~\emph{Min-Max} attack~\cite{NDSS:SheHou21} on training~VGG11 with~CIFAR10 for~1500 aggregation rounds with and without our defense~\FLdefensename{} assuming~20\% of~$N = 50$ clients are corrupted. Note that the number of attackers included in the global update varies even without defense due to random client selection.}
    \label{fig:fl_defense_vgg}
\end{figure*}

\myparatight{Sub-protocols}
\label{app:mpc-friendly-defense}
Here, we provide the details of the sub-protocols used in \FLdefensename{}~(cf.~Alg.~\ref{alg:defense} in~\secref{sec:defense-approach}).

\begin{algorithm}[htb!]
    \captionsetup{font=small}
    \caption{Quantized Aggregation}\label{alg:defense-aggregation}
    \small
    \begin{algorithmic}[1]
        \Procedure{Aggregate}{$\{\qvec{Y_i}, \stepmin{Y_i}, \stepmax{Y_i}\}_{i \in \alpha}$}
            \State $\vec{Z} \gets \vec{0}$
            \For{$k \gets 1$ to $\alpha$}
                \State $\vec{Z} \gets \vec{Z} + \left( \stepmin{Y_k} \oplus \qvec{Y_k} \circ (\stepmax{Y_k} - \stepmin{Y_k}) \right)$ 
            \EndFor
            \State $\vec{Z} \gets \vec{Z} / \alpha$
            \State \Return $\vec{Z}$
        \EndProcedure
    \end{algorithmic}
\end{algorithm}

Alg.~\ref{alg:defense-aggregation} computes the aggregation of~$\alpha$ quantized vectors. As shown in Eq.~\ref{eq:quantized value}, the dequantized value of a vector $\vec{Y}$, given its quantized form $(\qvec{Y}, \stepmin{Y}, \stepmax{Y})$, can be computed as
\[
  \vec{Y} = \stepmin{Y}\ \oplus\ \qvec{Y} \circ (\stepmax{Y} - \stepmin{Y}).
\]
The above operation essentially places $\stepmin{Y}$ in those positions of the vector $\vec{Y}$ with the corresponding bit in $\qvec{Y}$ being zero, and the rest with $\stepmax{Y}$.

\begin{algorithm}[htb!]
    \captionsetup{font=small}
    \caption{L2-Norm Computation (Quantized)}\label{alg:defense-ltwonorm}
    \small
    \begin{algorithmic}[1]
        \Procedure{L2-NormQ}{$\qvec{Y}, \stepmin{Y}, \stepmax{Y}$}
            \State $\beta \gets \textsc{Len}(\qvec{Y})$ \Comment{Dimension of $\qvec{Y}$}
            \State $N_O \gets \textsc{Sum}(\qvec{Y})$ \Comment{Number of ones in $\qvec{Y}$}
            \State $N_Z \gets \beta - N_O$ \Comment{Number of zeros in $\qvec{Y}$}
            \State \Return $\sqrt{N_Z\cdot(\stepmin{Y})^2 + N_O\cdot(\stepmax{Y})^2}$
        \EndProcedure
    \end{algorithmic}
\end{algorithm}

Alg.~\ref{alg:defense-ltwonorm} computes the~$\normtwo{}$-norm of a quantized vector. As discussed in \secref{sec:secure-quantized-aggregation}, a quantized vector $\vecqV{Y}{}$ consists of a binary vector~$\qvec{Y}$ and the respective min.\ and max.\ scales~$\stepmin{Y}/\stepmax{Y}$. In this case, we observe that the squared~$\normtwo{}$-norm can be obtained by first counting the number of zeroes and ones in the vector, denoted by $N_Z$ and and $N_O$ respectively, followed by multiplying them with the square of the respective scales and adding the results, i.e. $N_Z\cdot(\stepmin{Y})^2 + N_O\cdot(\stepmax{Y})^2$. Furthermore, computing the number of ones $N_O$ corresponds to the bit-aggregation of the vector $\vec{Y}$, for which our aggregation methods discussed in \secref{sec:mpc-agg} can be utilized. 

\begin{algorithm}[htb!]
    \captionsetup{font=small}
    \caption{Cosine Distance Calculation}\label{alg:defense-cosine}
    \small
    \begin{algorithmic}[1]
        \Procedure{Cosine}{$(\qvec{Y}, \stepmin{Y}, \stepmax{Y})$, $\vec{S}$}
            \State $\normtwo{Y} \gets \Call{L2-NormQ}{\qvec{Y}, \stepmin{Y}, \stepmax{Y}}$
            \State $\normtwo{S} \gets \lVert \vec{S} \rVert$ \Comment{Computes $\normtwo{}$-norm}
            \State $\alpha \gets \textsc{Sum}(\vec{S})$ \Comment{Sum of elements of $\vec{S}$}
            \State $\beta \gets \textsc{Inner\mbox{-}Product}(\qvec{Y}, \vec{S})$
            \State $\gamma = \stepmin{Y} \cdot \alpha + \beta \cdot (\stepmax{Y} - \stepmin{Y})$
            \State \Return $\gamma/(\normtwo{Y} \cdot \normtwo{S})$
        \EndProcedure
    \end{algorithmic}
\end{algorithm}

Alg.~\ref{alg:defense-cosine} is used to compute the cosine distance between a quantized vector $\vecqV{Y}{}$ and a reference vector $\vec{S}$. The cosine distance is given by $\frac{\vecqV{Y}{} \band \vec{S}}{\lVert \vecqV{Y}{} \rVert \cdot \lVert \vec{S} \rVert}$, where $\lVert \cdot \rVert$ corresponds to the $\normtwo{}$-norm of the input vector. Using Eq.~\ref{eq:quantized value}, we can write
\begin{align*}
    \vecqV{Y}{} \band \vec{S} &= (\stepmin{Y} \oplus \qvec{Y} \circ (\stepmax{Y}  - \stepmin{Y})) \band \vec{S} \\
    &= \stepmin{Y} \band \vec{S}\ +\  (\qvec{Y}\band\vec{S}) \cdot (\stepmax{Y}  - \stepmin{Y}).
\end{align*}
Thus, the inner product computation of~$\vecqV{Y}{} \band \vec{S}$ reduces to computing~$\qvec{Y} \band \vec{S}$, followed by two multiplications.

\medskip
\myparatight{Evaluation on~VGG11}
\label{sec:defense_eval_vgg11}
In addition to our results in~\secref{sec:def_eval}, we evaluate the~\emph{Min-Max} attack on~VGG11 trained with~CIFAR10. The experimental setup is identical to~\secref{sec:def_eval}. 
The results are shown in~Fig.~\ref{fig:fl_defense_vgg}.

Similarly as for~ResNet9~(cf.~Fig.~\ref{fig:fl_defense_resnet}), the~\emph{Min-Max} attack substantially reduces the validation accuracy when training~VGG11: We observe drops of up to~36.8\%.
However, on average, VGG11 is less impacted by the attack. Concretely, only~15\% of the iterations observe a validation accuracy reduction of about~10\% or more when using no compression.
One third of the training rounds are impacted by about~10\% or more when using~Kashin's representation~(KSQ) while with the~Hadamard transform~(HSQ) only very few training rounds showed a significant accuracy reduction.
Thus, HSQ seems to be more robust against untargeted poisoning.

With~\FLdefensename{}, the accuracy reduction is still smaller for all variants.
With~HSQ, on average~0.28~malicious updates are included in global updated instead of~2.24 without defense.
With respect to the validation accuracy, the difference between having no attack and employing~\FLdefensename{} when under attack is less than~4\% in almost all training iterations.
When using~KSQ, a global update includes just~0.44 malicious updates on average, and the attack impact is at least halved in two third of the training iterations.


\end{document}